\mathchardef\mhyphen="2D 
\newtheorem{theorem}{Theorem}
\newtheorem{lemma}[theorem]{Lemma}
\newtheorem{corollary}[theorem]{Corollary}
\newtheorem{proposition}[theorem]{Proposition}
\newenvironment{definition}[1][Definition]{\begin{trivlist}
\item[\hskip \labelsep {\bfseries #1}]}{\end{trivlist}}
\def\eg{\emph{e.g.,}\xspace}
\def\floor#1{\lfloor #1 \rfloor}
\def\ceil#1{\lceil #1 \rceil}
\def\combp{{\sc CombPart}\xspace}
\def\mark{{\sc MarkForbidden}\xspace}
\def\splitp{{\sc SplitPart}\xspace}
\def\splitm{{\sc SplitMark}\xspace}
\def\simp{{\sc SimplePart}\xspace}
\def\plp{{\sc LpPart}\xspace}
\def\pilp{{\sc IlpPart}\xspace}
\def\alg{{\sc Alg}\xspace}
\def\yes{{\sc Yes}\xspace}
\def\no{{\sc No}\xspace}
\def\ie{\emph{i.e.,}\xspace}
\def\lmn{\mathtt{Lmn}}
\def\Rnf{\mathtt{Rnf}}
\def\rnf{\mathtt{rnf}}
\def\Ldist{\mathtt{Ldist}}
\def\ldist{\mathtt{ldist}}
\def\bg{\mathtt{begin}}
\def\nd{\mathtt{end}}
\def\rit{\mathtt{right}}
\def\size{\mathtt{size}}
\def\prv{\mathtt{prev}}
\def\nxt{\mathtt{next}}
\def\inlay{\mathtt{Inlay}}
\def\ptr{{\mathtt{\shortrightarrow}}}
\tikzstyle{vertex}=[thick, inner sep=0pt, minimum size=8mm]
\tikzstyle{allow}=[vertex, circle, draw=black]
\tikzstyle{forbid}=[vertex, forbidden sign, draw=black]
\tikzstyle{uncoloured}=[vertex, circle, draw=black!80, fill=black!40]
\tikzstyle{new forbid}=[forbid, draw=red!90]
\tikzstyle{old forbid}=[forbid, draw=red!50]
\tikzstyle{block}=[rectangle, fill=gray!40, inner sep=5pt, rounded corners=2mm]
\tikzstyle{clique}=[ellipse, scale=0.75, line width=0.2ex, inner sep=0pt, minimum height=3cm]
\tikzstyle{active clique}=[clique, draw=black!60]
\tikzstyle{inactive clique}=[clique, draw=black!30]
\tikzstyle{bar}=[draw, dotted, line width=0.1ex]
\journal{Discrete Applied Mathematics}
\begin{document}

\begin{frontmatter}



\title{Component Coloring of Proper Interval and Split Graphs}
\author{Ajit~Diwan}
\ead{aad@cse.iitb.ac.in}
\author{Soumitra~Pal}
\ead{mitra@cse.iitb.ac.in}
\author{Abhiram~Ranade}
\ead{ranade@cse.iitb.ac.in}
\address{Department of Computer Science and Engineering, \\
 Indian Institute of Technology Bombay, \\
 Powai, Mumbai 400076, India.}


\begin{abstract}
 We introduce a generalization of the well known graph (vertex) coloring
 problem, which we call the problem of \emph{component coloring
 of graphs}. Given a graph, the problem is to color the vertices
 using the minimum number of colors so that the size of each connected
 component of the subgraph induced by the vertices of the same color
 does not exceed~$C$. We give a linear time algorithm for the problem on
 proper interval graphs. We extend this algorithm to solve two weighted
 versions of the problem in which vertices have integer weights. In the
 \emph{splittable} version the weights of vertices can be split into
 differently colored parts, however, the total weight of a monochromatic
 component cannot exceed~$C$. For this problem on proper interval graphs
 we give a polynomial time algorithm. In the \emph{non-splittable} version
 the vertices cannot be split. Using the algorithm for the splittable
 version we give a $2$-approximation algorithm for the non-splittable
 problem on proper interval graphs which is NP-hard. We also prove that
 even the unweighted version of the problem is NP-hard for split graphs.
\end{abstract}

\begin{keyword}
Graph \sep Chordal \sep Proper \sep Interval \sep Split \sep Component \sep
WDM \sep Light-trail \sep Reconfigurable Bus Architecture \sep Weighted
\sep Splittable \sep Coloring \sep Partition \sep Scheduling \sep
Routing \sep Algorithm \sep Hardness \sep Complexity \sep NP-Complete
\sep Approximation


\end{keyword}

\end{frontmatter}

\section{Introduction}

The vertex coloring problem is to color the vertices of a graph using the
minimum number of colors so that no two adjacent vertices are assigned
the same color. In this paper, we introduce and study a generalization
of the vertex coloring problem. In this generalized problem, called the
problem of \emph{component coloring of graphs}, we allow two adjacent
vertices to be assigned the same color. It is customary to consider
two variations: {\em unweighted} and {\em weighted}. In the unweighted
version of the problem, given an graph $G=(V,E)$, the objective is to
color the vertices using the minimum number of colors such that the
size of any monochromatic component, \ie the connected component of the
subgraph induced by the vertices of the same color, does not exceed~C.
The vertex coloring problem is a special case of the unweighted component
coloring problem where $C=1$, and each monochromatic component consists
of a single vertex.

In the weighted version of the problem, given an graph $G=(V,E)$ and
for each $v \in V$ a rational weight $W(v) \in (0, 1]$, the objective
is to color the vertices using the minimum number of colors such that
the total weight of any monochromatic component, does not exceed~1.

Since the vertex coloring problem is NP-hard on general
graphs~\cite{garey1979cig}, the unweighted (and hence weighted) component
coloring problem is also NP-hard on general graphs.

Our formulation of the component coloring problem is motivated by a
problem on scheduling transmission requests on \emph{light-trails}, a
hardware solution for bandwidth provisioning in optical WDM (Wavelength
Division Multiplexing) networks~\cite{chlamtac2003light}. In a path
network of processors using light-trails, each processor has an optical
shutter for each wavelength which can be configured to be switched
ON/OFF for allowing/blocking the light signal pass through it. For each
wavelength, by suitably configuring the optical shutter at each processor,
the logical path network can be partitioned into subpath networks
in which multiple transmissions can happen in parallel, provided the
total bandwidth requirement of the transmissions assigned to a subpath
does not exceed the capacity of a wavelength. Such subpaths, in which
only the end processors have their optical shutters blocked, are called
\emph{light-trails}. A light-trail can serve only the transmissions having
both source and destination within the light-trail. If a transmission
is assigned to a light-trail, it uses the complete physical span of the
light-trail.  Given a set of transmission requests, each with a bandwidth
requirement, the scheduling problem is to configure the optical shutters
at the processors so that the minimum number of wavelengths is required by
the light-trails to serve all transmission requests.

A graph $G$ is an \emph{interval graph} if there exists a family
$\mathcal{I}$ of intervals in a linearly ordered set (like the real line),
and there exists a one-to-one correspondence between the vertices of $G$
and the intervals in $\mathcal{I}$ such that two vertices are adjacent
if and only if the corresponding intervals intersect. If no interval
of $\mathcal{I}$ properly contains another, set theoretically, then $G$
is called a \emph{proper interval graph}.

The light-trail scheduling problem on path networks can be posed as
a component coloring problem on interval graphs as follows. For each
transmission request, create a vertex with weight equal to the bandwidth
requirement, expressed as a fraction of the wavelength capacity. Two
vertices are adjacent if the corresponding transmissions overlap, \ie
they use at least one common link. Given a solution to the component
coloring problem, a solution to the light-trail scheduling problem can
be constructed as follows. For each of the used colors, use a separate
wavelength. For each wavelength, construct a separate light-trail for
each monochromatic component of the corresponding color. Note that
the light-trails on a wavelength do not intersect with each other. All
transmission requests corresponding to the vertices of a monochromatic
component are served by the corresponding light-trail. The physical span
of the light-trail is the union of the physical spans of all requests
in it. For each wavelength, the optical shutters in only the processors
at the endpoints of all light-trails on the corresponding wavelength are
configured to be OFF; optical shutters of other processors are configured
to be ON.

As mentioned in~\cite{pal2009slt}, the light-trail scheduling
problem is similar to the problem of scheduling in reconfigurable bus
architectures~\cite{elgindy1996rmb, DBLP:journals/ijcsa/WankarA09},
and hence component coloring applies there too.

The unweighted component coloring problem for $C=1$, \ie the
vertex coloring problem, has a polynomial time algorithm on interval
graphs~\cite{olariu1991optimal}.  However, the complexity of the problem
on interval graphs for general $C$ is not known. In this paper we give
a polynomial time algorithm for the problem on proper interval graphs
for general $C$. Since the problem arises in scheduling light-trails on
path networks, we assume that an interval representation of the graph
is also available. Our first result is the following.

\begin{theorem}
 Given a proper interval graph $G=(V,E)$ with an interval representation,
 there exists an algorithm that solves the unweighted component coloring
 problem on $G$ in $O(|V|)$ time.  \label{thm:pig}
\end{theorem}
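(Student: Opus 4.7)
My approach is to exploit the standard linear ordering of a proper interval graph: its vertices can be arranged as $v_1,\ldots,v_n$ by sorted left endpoints (which, in a proper representation, coincides with sorted right endpoints) so that each closed neighborhood is a contiguous range $\{v_{\ell(i)},\ldots,v_{r(i)}\}$ in this order, and any two adjacent $v_j, v_k$ with $j<k$ force $v_j, v_{j+1},\ldots,v_k$ to be a clique. Given the interval representation, this ordering together with the leftmost-neighbor pointers $\ell(i)$ can be obtained by a single two-pointer sweep in $O(|V|)$ time, and crucially $\ell(i)$ is non-decreasing in $i$.

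The algorithm itself is a first-fit greedy. Scanning vertices in order, I maintain for each color $c$ a pair $(\mathrm{last}(c), s(c))$ recording the most recent vertex assigned color $c$ and the size of the monochromatic $c$-component currently containing it. For $v_i$ I assign the smallest $c$ satisfying either (i) $\mathrm{last}(c) < \ell(i)$, in which case $v_i$ is non-adjacent to $\mathrm{last}(c)$ and, because right endpoints are sorted, also non-adjacent to every earlier $c$-colored vertex, so $v_i$ legitimately starts a fresh component and I reset $s(c) \gets 1$; or (ii) $\mathrm{last}(c) \geq \ell(i)$ with $s(c) < C$, in which case $v_i$ extends the existing component and I set $s(c) \gets s(c) + 1$. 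In both cases I update $\mathrm{last}(c) \gets i$. Observation (i) is what guarantees the invariant that $s(c)$ equals the true size of the monochromatic $c$-component of $v_i$, so feasibility of the coloring is immediate.

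The heart of the proof is optimality. Suppose the greedy introduces a new color $k$ for the first time at $v_i$. Then for every $c < k$ one has $\mathrm{last}(c) \in [\ell(i), i-1]$ and $s(c) = C$, so by the clique property $\mathrm{last}(1), \ldots, \mathrm{last}(k-1), v_i$ all lie in a common clique of $G$. I plan to exhibit the witness $V^* = \{v_i\} \cup B_1 \cup \cdots \cup B_{k-1}$, where $B_c$ is the current saturated $c$-component of size $C$, and show that the induced subgraph $G[V^*]$ has $(k-1)C+1$ vertices and requires at least $k$ colors in any valid component coloring, typically by a pigeonhole argument combined with an ``any $C+1$ vertices of $V^*$ induce a connected piece'' claim that follows from proper-interval connectivity. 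This witness construction is the main obstacle: the naive clique bound $\lceil \omega/C \rceil$ is not tight in general (already a path on more than $C$ vertices requires two colors although $\omega = 2$), so the argument must combine the clique structure around $v_i$ with the connectivity of the $B_c$'s rather than relying on $\omega$ alone.

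Finally, for the $O(|V|)$ running time I argue by amortization. As $\ell(i)$ advances monotonically, a color changes status between ``available'' and ``blocked'' only when either the sliding threshold $\ell(i)$ overtakes its $\mathrm{last}$-value or $v_i$ is assigned to that color; the former events are paid for by the $O(|V|)$ total advance of $\ell$ and the latter by the $n$ assignments. Storing the set of available colors as a linked list sorted by color index with a pointer to the current minimum then reduces the work at each $v_i$ to amortized constant.
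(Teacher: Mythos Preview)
Your first-fit greedy is not optimal, so the proposed optimality argument cannot be completed. Consider the proper interval graph on seven vertices given by the intervals $a=[1,6]$, $b=[2,7]$, $c=[3,10]$, $d=[4,11]$, $e=[5,12]$, $f=[8,13]$, $g=[9,14]$, with $C=3$. The canonical order is $a,b,c,d,e,f,g$; the leftmost-neighbor pointers are $\ell(a)=\cdots=\ell(e)=a$ and $\ell(f)=\ell(g)=c$. Your greedy colors $a,b,c$ with $1$ (filling $s(1)=3$), then $d,e,f$ with $2$ (filling $s(2)=3$), and at $g$ both colors are blocked since $\mathrm{last}(1)=c\ge\ell(g)$ and $\mathrm{last}(2)=f\ge\ell(g)$ with $s(1)=s(2)=C$, so it opens color $3$. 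But two colors suffice: assign $\{a,b\}$ and $\{f,g\}$ color $1$ and $\{c,d,e\}$ color $2$; since $b$ is not adjacent to $f$, the color-$1$ chromons have sizes $2$ and $2$. Your witness $V^*=\{g\}\cup B_1\cup B_2$ here is the whole vertex set, which is $2$-colorable, so the claimed lower-bound certificate simply does not exist.

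The underlying reason first-fit fails is exactly the phenomenon you flagged but did not resolve: sometimes a block must be cut short to avoid forcing an extra color at a later clique. The paper handles this by first determining, in $O(n)$ time, the set of \emph{forbidden} vertices --- positions at which no block of an optimal block-partition may end --- via a two-level recursive definition (primarily forbidden from large cliques, secondarily forbidden propagated leftward in steps of $C$). Only after these are marked does it run a left-to-right greedy that makes each block as large as possible subject to not ending at a forbidden vertex. The algorithm either outputs a $[k{+}1,C]$-block-partition (with $k=\lfloor(\omega-1)/C\rfloor$), matching the lower bound, or certifies that none exists, in which case the trivial partition into consecutive blocks of size $C$ is optimal with value $k{+}2$. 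Your amortized running-time analysis is fine in spirit, but the algorithmic core needs this forbidden-vertex preprocessing; a plain first-fit cannot be salvaged.
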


We also consider a \emph{splittable weighted} version of the
component coloring problem in which each vertex of the input graph
has an integer weight which can be divided among multiple copies
of the vertex and these copies can be colored separately. However,
the total weight of a monochromatic component in the resultant graph
should not exceed~$C$. Again, this is motivated by a variation of the
light-trail scheduling problem in which the bandwidth requirement of a
transmission can be divided into multiple transmissions between the same
source-destination pair.   We extend the algorithm for the unweighted
problem to solve this splittable weighted problem on proper interval
graphs. So our second result is the following.

\begin{theorem}
 Given a proper interval graph $G=(V,E)$ with an interval representation,
 there exists an algorithm that solves the splittable weighted component
 coloring problem on $G$ in $O(|V|^2)$ time. \label{thm:splittable}
\end{theorem}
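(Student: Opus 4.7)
The plan is to reduce the splittable weighted problem on a proper interval graph $G=(V,E)$ to the unweighted problem on a blown-up proper interval graph $G'$, and then to simulate the linear-time algorithm underlying Theorem~\ref{thm:pig} on $G'$ without ever constructing $G'$ explicitly. First, I would build $G'$ by replacing each vertex $v$ of integer weight $W(v)$ with $W(v)$ copies, each assigned an interval identical to the interval of $v$. No containment is introduced among coinciding intervals, so $G'$ remains a proper interval graph, and the copies of $v$ form a module of true twins. Because the splittable problem permits arbitrary integer splits, component colorings of $G'$ with $k$ colors correspond bijectively to splittable colorings of $G$ with $k$ colors: the number of copies of $v$ receiving color $c$ in $G'$ is exactly the portion of $W(v)$ routed through color $c$ at $v$. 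Hence the optimal numbers of colors coincide.

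Second, I would analyze how the algorithm of Theorem~\ref{thm:pig} behaves on $G'$. A direct run costs $O(\sum_v W(v))$ time, which is only pseudopolynomial. The way around this is to exploit the fact that the $W(v)$ copies of each original vertex are twins in $G'$: a linear-time procedure on a proper interval graph necessarily sweeps along the canonical interval order and, upon reaching a block of twins, processes them through a small number of structural events. At any moment the set of colors \emph{open} at $v$---those whose most recently placed vertex is adjacent to $v$ and whose current monochromatic component still has residual capacity below $C$---has size at most $|V|$, since each open color is witnessed by a distinct earlier position in the order.

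Third, I would simulate the block of $W(v)$ copies in $O(|V|)$ time per original vertex by a water-filling step: top up each currently open color to capacity $C$ in the canonical order the algorithm would use, then open as many fresh colors as are needed to absorb the remaining weight, each holding up to $C$ copies of $v$. Fresh colors opened during one block are charged to the chromatic number itself, while the updates to existing open colors and the advancement of the closing pointer cost $O(|V|)$ per original vertex, for a total of $O(|V|^2)$.

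The main obstacle will be proving that this batched simulation produces exactly the same coloring (up to renaming of colors) as the unweighted algorithm would produce on $G'$---in particular, that the order in which the twin copies of $v$ are served by the various open colors can be chosen to match an actual execution of that algorithm, and that opening new colors mid-block does not retroactively invalidate commitments made to earlier vertices. Once this structural correspondence is pinned down, correctness and the $O(|V|^2)$ bound follow from Theorem~\ref{thm:pig} together with the accounting above.
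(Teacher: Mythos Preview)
Your reduction to the weight-expanded graph $G'$ is correct and is exactly the paper's Lemma~\ref{lem:wtexpand}, and the plan of simulating the unweighted algorithm on $G'$ without materializing it is also the paper's plan. The gap is that the procedure you describe is \emph{not} the algorithm behind Theorem~\ref{thm:pig}. Your water-filling---greedily top up each open color to capacity $C$, then open fresh colors---is essentially the naive block partition \simp of Section~\ref{ssec:ub}, and the paper already shows this is suboptimal. On the unit-weight instance $a=[1,6]$, $b=[2,7]$, $c=[3,10]$, $d=[4,11]$, $e=[5,12]$, $f=[8,13]$, $g=[9,14]$ with $C=3$ (so $\omega=5$), your sweep produces components $\{a,b,c\},\{d,e,f\},\{g\}$; since $g$ is adjacent to both $c$ and $f$, neither closed color can be reused and you spend three colors, whereas the optimum is two via $\{a,b\},\{c,d,e\},\{f,g\}$. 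Hence the ``main obstacle'' you identify---proving your batched simulation matches a run of the unweighted algorithm on $G'$---is not surmountable, because the two procedures already diverge on unit-weight inputs.

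What the actual linear-time algorithm does is first compute a set of \emph{forbidden} positions---vertices at which no block of an optimal $[k{+}1,C]$-partition may end, determined by a recursive clique-based rule (Lemmas~\ref{lem:forbid1} and~\ref{lem:forbid2})---and only then form maximal blocks greedily, shrinking a block whenever its tentative right endpoint is forbidden. The $O(|V|^2)$ bound for the splittable problem comes from computing this forbidden set on $G'$ without enumerating its vertices: the paper groups consecutive forbidden vertices of $G'$ into \emph{forbidden blocks}, shows that these can be generated by iterating over the maximal cliques of $G$ and over previously discovered forbidden blocks rather than over individual vertices of $G'$, and proves there are $\Theta(|V|^2)$ such blocks in the worst case (Lemma~\ref{lem:split_mark}). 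Your simulation never touches this forbidden-vertex structure, so even if it ran in the stated time it would be simulating the wrong algorithm.
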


However, the (\emph{non-splittable}) weighted version of the problem is
NP-hard even on proper interval graphs. This comes from the fact that
the complete graph $K_n$ is a proper interval graph and the weighted
component coloring problem on $K_n$ is an instance of NP-hard Bin
Packing problem~\cite{garey1979cig}. We use the algorithm for the
splittable weighted problem to get a $2$-approximation algorithm for
the non-splittable weighted problem on proper interval graphs.

\begin{theorem}
 Given a proper interval graph $G$ with an interval representation,
 there exists a $2$-approximation algorithm for the non-splittable
 weighted component coloring problem on $G$. \label{thm:weighted}
\end{theorem}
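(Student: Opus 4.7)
The plan is to invoke the polynomial-time algorithm of Theorem~\ref{thm:splittable} as a subroutine and then round its output. First I would run the splittable algorithm on $G$ to obtain an optimum splittable coloring that uses some number $k^\star$ of colors. Because any valid non-splittable coloring is, in particular, a valid splittable coloring (in which no vertex is actually split), we have $k^\star \le \mathrm{OPT}_{ns}(G)$, so $k^\star$ is a lower bound on the non-splittable optimum. It therefore suffices to produce a non-splittable coloring that uses at most $2k^\star$ colors.

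To do this rounding, I would replace each color $c$ of the splittable solution by a pair of new colors $c^L, c^R$, giving a palette of $2k^\star$ colors in total. For each vertex $v$ I would commit $v$ entirely to one of the original colors in which it held a positive weight-part in the splittable solution (for concreteness, the color containing its largest part, ties broken arbitrarily). Inside each original color $c$, I would then decide between the suffixes $c^L$ and $c^R$ by sweeping the vertices of $G$ left-to-right in the ordering given by left endpoints of the intervals and toggling the suffix whenever the accumulated monochromatic-component weight in the current suffix would otherwise exceed $C$. Here I am using the standard fact that, in a proper interval graph under the left-endpoint ordering, every connected induced subgraph, and in particular every monochromatic component in the splittable output, spans a contiguous range of vertices, so the sweep-and-toggle is well defined.

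The key claim to verify is that this yields a valid non-splittable coloring using no more than $2k^\star$ colors. Since $w(v)\le C$ must hold for every $v$ (otherwise no non-splittable coloring can exist, and this can be checked in linear time and returned as infeasible), the un-splitting step can increase the weight of any single monochromatic component of a given original color $c$ by at most the sum of vertices newly committed to $c$, and a careful count using the contiguity of components shows this increase is at most $C$. Each resulting monochromatic component thus has weight at most $2C$, and the $L/R$ toggle then cleaves every such component into two parts each of weight at most $C$.

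The main obstacle I anticipate is making the toggle step rigorous in the face of vertices that were split across several original colors in the splittable solution: one must argue that committing such a vertex to a single chosen original color does not cascade through the proper interval ordering in a way that makes some already-fixed $L/R$ choice invalid. The plan for overcoming this is to process vertices in a single left-to-right pass that simultaneously makes the original-color and suffix choices, using the fact that in a proper interval graph each vertex can participate in at most a bounded number of splittable components touching its position, so the greedy commitment can be justified inductively along the sweep.
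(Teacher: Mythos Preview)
Your plan diverges from the paper's and, as written, leaves the decisive step unproved. The paper does not treat the splittable optimum as a generic coloring; it exploits that the algorithm of Theorem~\ref{thm:splittable} actually returns a \emph{block partition} $P_1,\dots,P_t$ of the weight-expanded graph, with the consequence (using $W(v)\le C$) that each original vertex is split across at most two \emph{consecutive} blocks $P_i,P_{i+1}$. The rounding is then direct: scanning left to right, whenever a split vertex cannot be absorbed wholly into $P_i$, insert a single fresh block between $P_i$ and $P_{i+1}$ to hold it. Each original block acquires at most one companion, so any clique that met $\lambda'$ blocks now meets at most $2\lambda'$; since $\lambda'\le\lambda^{*}$ this gives the factor~$2$.

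Your commit-then-toggle route may be salvageable, but two assertions are doing all the work and neither is justified. First, ``a careful count shows the increase is at most~$C$'' is true only for the largest-part commitment rule \emph{combined with} the two-consecutive-blocks fact, which you never invoke; under an arbitrary commitment rule, or if a vertex were genuinely spread over several colors, the bound fails. Second, even granting that every committed component has weight at most~$2C$, a single $L/R$ toggle need not cleave it: three consecutive vertices of weights $0.1C,\;0.95C,\;0.1C$ have total at most~$2C$ and each weight at most~$C$, yet no prefix/suffix cut puts both sides at or below~$C$. Such configurations turn out not to arise once the largest-part rule and the block structure are both in force, but establishing that is essentially the whole proof, and you have not supplied it. Finally, your anticipated ``main obstacle'' about vertices split across several original colors is a non-issue precisely because of the two-block fact---which is the structural observation you are missing and which makes the paper's argument a couple of lines.
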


The vertex coloring problem also has a polynomial time algorithm for split
graphs, \ie when the vertex set can be partitioned into an independent
set and a clique~\cite{golumbic2004algorithmic}. However, for general
$C$, we prove that the unweighted component coloring is NP-hard  on
split graphs. So our final result is the following.

\begin{theorem}
 The component coloring problem is NP-hard for split graphs.
 \label{thm:split}
\end{theorem}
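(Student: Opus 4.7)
I will reduce from a strongly NP-hard packing problem such as 3-Partition (or, equivalently, Bin Packing). Given items $a_1, \ldots, a_{3m}$ with $\sum_i a_i = mB$ and each $a_i \in (B/4, B/2)$, I construct a split graph $G = (K \cup I, E)$ together with a component bound $C$ and target color count $k = m$ so that $G$ admits a $k$-component-coloring with monochromatic components of size at most $C$ if and only if the items partition into $m$ triples each summing to $B$.

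The natural first construction places one clique vertex $v_i$ per item in $K$ and attaches $a_i - 1$ pendant vertices in $I$ to each $v_i$, setting $C = B$. The forward direction is then immediate: a 3-Partition solution gives an $m$-coloring by coloring each triple's representatives together with all their pendants monochromatically, producing one component of size exactly $\sum_{i\in T_j} a_i = B$ per color.

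The main difficulty is the reverse direction. In a split graph, an $I$-vertex whose color differs from all of its $K$-neighbors simply becomes an isolated singleton component, which is ``free'' with respect to the size bound. Consequently the pendant-based gadget alone secretly encodes only the much easier \emph{splittable} version of bin packing: item splitting can be hidden as isolated pendants in other colors, admitting $m$-colorings that do not correspond to any 3-partition. Concretely, one can exhibit small instances with no 3-partition where the pendant construction still admits $m$ colors by placing a clique vertex in one color while its pendants are isolated in another.

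To close this gap, I would strengthen the construction in two ways: (i) pad $K$ with additional clique vertices so that the total vertex count equals exactly $mC$, thereby forcing every color class to saturate a component of size $C$ and leaving no slack to absorb isolated singletons; and (ii) add further $I$-vertices whose multi-vertex neighborhoods in $K$ force any split of an item's pendants to create an unavoidable overflow in some color's monochromatic component. The hardest step will be choosing these auxiliary adjacencies so that the forward direction is still trivially satisfied by a 3-Partition while every hypothetical $m$-coloring is structurally compelled to assign each item's representative and all of its pendants to a single common color class, which then yields a valid 3-partition by reading off the color classes.
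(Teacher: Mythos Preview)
Your proposal has a genuine gap. You correctly diagnose that the naive pendant construction fails (pendants can migrate to other colors as isolated singletons), but neither of your two repairs is actually carried out, and repair (i) is flawed as stated. Having exactly $mC$ vertices and $m$ colors does \emph{not} force each color class to be a single saturated component of size $C$: component coloring permits a single color to be used on several disconnected pieces, so a color class may consist of its clique-side component together with arbitrarily many isolated $I$-singletons, and pigeonhole on total vertex count yields nothing. Repair (ii) is only a wish; you have not exhibited the auxiliary $I$-vertices and their neighborhoods, nor argued why they simultaneously (a) force each item's pendants to stay with its clique representative and (b) still admit the intended $m$-coloring from a genuine 3-partition. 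Until such a gadget is written down and both directions verified, the reduction is incomplete.

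For contrast, the paper avoids this difficulty by going in an entirely different direction: it fixes $C=2$ and reduces from \textsc{SAT} via an intermediate set-pairing problem (partition $2n$ elements into $n$ pairs so that each subset in a given family contains at least one full pair). The split graph has one clique vertex per element and one independent vertex $w_j$ per subset $S_j$, with $w_j$ adjacent to $v_i$ precisely when $e_i\notin S_j$; the target is a $[\lceil\omega/2\rceil,2]$-partition. With $C=2$ the constraints become sharp: the maximum clique $Q$ of size $2n$ forces the element-vertices into exactly $n$ disjoint pairs, and the maximal clique through each $w_j$ then forces $S_j$ to contain some full pair. The ``free singleton'' issue you identified never arises, because with $C=2$ there is no slack for an $I$-vertex's color to matter beyond whether its closed neighborhood meets too many parts.
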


The rest of the paper is organized as follows. We begin in
Section~\ref{sec:prev_work} by comparing our work with previous related
work. In Section~\ref{sec:prelim} we present some pertinent definitions
and known results. In Section~\ref{sec:eqv_chordal} we show that for the
class of chordal graphs, the component coloring problem is equivalent
to a vertex partitioning problem. Note that the interval graphs and the
split graphs are chordal. We show in Section~\ref{sec:eqv_pig} that for
the class of proper interval graphs, it is enough to solve a simpler
version of the partitioning problem which we call the block-partitioning
problem. We give an LP based algorithm for the block-partitioning problem
in Section~\ref{sec:lp}.  We give a combinatorial algorithm for the same
problem in Section~\ref{sec:comb}. In Section~\ref{sec:splittable} we
extend this algorithm to solve the splittable weighted problem. Based
on the algorithm for the splittable weighted problem we give a
$2$-approximation algorithm for the non-splittable weighted problem in
Section~\ref{sec:weighted}. We prove the NP-hardness of the problem on
split graphs in Section~\ref{sec:split}.


\section{Previous Work}
\label{sec:prev_work}

In the graph coloring literature, there are
papers~\cite{edwards2005monochromatic, linial2008graph} to solve a
problem that is a kind of dual to the unweighted component coloring
problem. Here, the objective is to minimize the size of the largest
monochromatic component in a coloring using a fixed number of colors. The
paper~\cite{edwards2005monochromatic} shows that for a $n$-vertex graph
of maximum degree 4, there exists an algorithm that uses 2~colors and
produces a coloring in which the size of the largest monochromatic
component is $O(2^{(2\log_2n)^{1/2}})$. For a family of minor-closed
graphs, the paper~\cite{linial2008graph} shows that if $\lambda$ colors
are used, the size of the largest monochromatic component is in between
$\Omega(n^{2/(2\lambda-1)})$ and $O(n^{2/(\lambda+1)})$ for every fixed
$\lambda$. However, in our knowledge, there is no work in the graph
coloring literature for the versions of the problem we formulated.

The NP-hard light-trail scheduling problem with arbitrary
bandwidth requirements on ring networks and general networks
has generally been solved using heuristics and evaluated
experimentally~\cite{balasubramanian2005ltn, ayad2007eoa, gumaste2007hao,
wu2006opn03, luo2009integrated, gokhale2010cloud2} without any bound on
the performance. For path/ring networks, the paper~\cite{pal2009slt} gives
an approximation algorithm that uses $O(\omega + \log p)$ wavelengths
where $p$ is the number of processors in the network and $\omega$ is
the \emph{congestion}, \ie the maximum total traffic required to pass
through any link. For the corresponding component coloring problem,
$p$ is the number of distinct end points of the intervals in the given
interval representation, and $\omega$ is the weight of a maximum clique
and hence a lower bound on the number of colors used. Thus the algorithm
in~\cite{pal2009slt} is a constant factor approximation algorithm
with an additive term $\log{p}$ for the component coloring problem on
interval/circular-arc graphs. Note that in general $p \ll 2n$.


\section{Preliminaries}
\label{sec:prelim}

Throughout this paper, let $G=(V,E) = (V(G), E(G))$ be a simple,
undirected graph and let $n=|V|$ and $m=|E|$. We also assume that $G$
is connected. If $G$ is not connected, the results in this paper can
be applied separately to each of its connected components.  The set of
vertices adjacent to a vertex $v \in V$ is represented as $N(v)$. For a
set $S \subseteq V$, the sub-graph of $G$ \emph{induced by} $S$ is $G[S]
= (S, E(S))$ where $E(S)=\{(u,v) \in E \mid u, v \in S\}$. For a set $S
\subset V, V-S$ denotes $G[V \setminus S]$. A \emph{clique} of $G$ is a
set of pair wise adjacent vertices of $G$. The \emph{size of a clique}
is the number of vertices in it. A \emph{maximal clique} is a clique of
$G$ that is not properly contained in any clique of $G$. A \emph{maximum
clique} is a clique of maximum size. The \emph{clique number} of $G$,
denoted by $\omega(G)$ or simply $\omega$, is the size of a maximum
clique of $G$.  An \emph{independent set} of $G$ is a set of pairwise
non-adjacent vertices in it.

A weighted graph $G=(V,E,W)$ has a weight $W(v) \in  \mathbb{Z}_{\ge 0}$
associated with each vertex $v \in V$. The \emph{weight-split graph}
of a weighted graph $G=(V,E,W)$, in short $WSP(G)$, is the weighted
graph $G'=(V',E',W')$ such that the weight of each $v \in V$ is divided
among a separate set of vertices $v_1, \ldots, v_{n_v}$ in  $V'$,
\ie $\sum_{j=1}^{n_v} W'(v_j) = W(v)$ and for each edge $(u,v) \in
E$ there is an edge $(u_i, v_j) \in E'$ for all $i=1,\ldots,n_u$ and
$j=1,\ldots,n_v$. The \emph{weight-expanded graph} of a weighted graph
$G=(V,E,W)$, in short $WXP(G)$, is the unweighted graph $G'=(V',E')$ such
that if we put a weight $1$ to each vertex in $V'$ then the resulting
weighted graph is a weight-split graph of $G$. We will use the following
notations: $n'=|V'|$ and $m'=|E'|$.

Coloring of a graph is an assignment of colors to its vertices. A
\emph{$\lambda$-assignment} of a graph $G=(V,E)$ is a map from $V$
to some set of $\lambda$ colors such as $\{1, \ldots, \lambda\}$;
this assignment may not be `proper' in the standard notion of graph
(vertex) coloring that two adjacent vertices must be assigned different
colors. A \emph{color class}~$i$ is the set of vertices assigned color $i$
under the $\lambda$-assignment. A \emph{monochromatic component} of $G$
under a $\lambda$-assignment is a component of the sub-graph induced by a
single color class, or in other words, a maximal connected monochromatic
sub-graph. Following the terminology of~\cite{edwards2005monochromatic},
we call a monochromatic component a \emph{chromon}. The \emph{size of
a chromon} is the number of vertices in it. For a weighted graph, the
\emph{weight of a chromon} is the sum of weights of vertices in it.

An unweighted (weighted) graph is \emph{$[\lambda, C]$-colorable} if it
has a $\lambda$-assignment in which every chromon has size (weight)
at most~$C$ and such an assignment is called a \emph{$[\lambda,
C]$-coloring}. A \emph{$C$-component coloring} of graph $G$ is a
$[\lambda, C]$-coloring with the minimum~$\lambda$. Sometimes we will
simply refer to the problem of finding a $C$-component coloring of a
graph as the \emph{coloring} problem.

A weighted graph $G$ is \emph{$[\lambda, C]$-split colorable} if it
has a weight-split graph $G'$ which is \emph{$[\lambda, C]$-colorable}
and such a coloring is called a \emph{$[\lambda, C]$-split coloring} of
$G$. A \emph{$C$-split component coloring} of graph $G$ is a $[\lambda,
C]$-split coloring with the minimum~$\lambda$.  Sometimes we will simply
refer to the problem of finding a $C$-split component coloring of a
graph as the \emph{split coloring} problem.

The component coloring problem can be seen as solving two problems
simultaneously, (i) partitioning the vertex set into chromons and (ii)
assigning colors to the chromons. The partitioning should be such that
if each part is contracted to a single vertex, the resulting graph can
be colored using as few colors as possible. Since the size of a maximum
clique in the contracted graph plays a major role in determining the
number of colors used, at least for some graphs classes such as perfect
graphs, we have to ensure that the cliques in the original graph does
not intersect too many parts. We formally define the partitioning problem
as follows:
\begin{definition}
A graph $G=(V,E)$ is said to have a \emph{$[\lambda,C]$-partition}
if and only if there is a partition $\Pi=\{P_1, P_2, \ldots, P_t\}$
of $V$, $P_i \subseteq V$, $P_i \cap P_j = \emptyset$ for all $i \ne j$
such that the following constraints are satisfied:
\begin{itemize}
 \item \emph{connectedness} -- the subgraph induced by each part $P_i$,
  \ie $G[P_i]$ is connected,
 \item \emph{size} -- each part $P_i$ has at most $C$ vertices, and
 \item \emph{clique intersection} -- any clique in $G$ intersects at most
  $\lambda$ parts ($\lambda$ will subsequently be called the
  \emph{clique intersection} of the partition).
\end{itemize}
\end{definition}
A \emph{$C$-component partition} of a graph is a $[\lambda,C]$-partition
with the minimum~$\lambda$. We will refer to the problem of finding a
$C$-component partition as the \emph{partition} problem.

We study the coloring problem on interval graphs and split graphs. Each of
these classes of graphs is a subclass of the class of chordal graphs. A
graph is \emph{chordal} if each of its cycles of four or more vertices
has a \emph{chord}, which is an edge joining two vertices that are not
adjacent in the cycle. There are many characterizations of chordal graphs
(see~\cite{golumbic2004algorithmic} for more details). We will use the
characterization of a chordal graph based on \emph{perfect elimination
ordering} or, in short, \emph{PEO}.  A vertex $v$ of $G$ is called
\emph{simplicial} if its neighbors $N(v)$ form a clique. An ordering
$\sigma=[v_1,v_2,\ldots,v_n]$ of vertices is a PEO if each vertex $v_i$
is a simplicial vertex of the induced subgraph $G[v_i,\ldots,v_n]$.

\begin{proposition}[\cite{golumbic2004algorithmic}]
Let $G=(V,E)$ be an undirected graph. Then $G$ is a chordal graph if and 
only if $G$ has a PEO. Moreover, any simplicial vertex can start a PEO.
\label{pro:chordal}
\end{proposition}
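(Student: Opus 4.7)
The plan is to prove the two directions of the equivalence separately and then deduce the ``moreover'' clause. The easier direction is that a PEO forces chordality: suppose $\sigma=[v_1,\ldots,v_n]$ is a PEO and, for contradiction, that $G$ contains an induced cycle $C$ of length at least~$4$. Let $v_i$ be the vertex of $C$ with smallest index in $\sigma$, and let $u,w$ be its two neighbors on $C$. Both $u$ and $w$ lie in $\{v_{i+1},\ldots,v_n\} \cap N(v_i)$, which is a clique by simpliciality of $v_i$ in $G[v_i,\ldots,v_n]$. Hence $uw \in E$, contradicting that $C$ is induced.

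For the forward direction I would proceed by induction on $n$, with the base case $n = 1$ trivial. The heart of the argument is an existence lemma to be isolated: every chordal graph contains at least one simplicial vertex, and in fact two non-adjacent simplicial vertices whenever it is not complete. To prove this lemma I would use minimal vertex separators. First, show that any minimal $a$--$b$ separator $S$ in a chordal graph induces a clique: if two vertices $x,y \in S$ were non-adjacent, minimality of $S$ would produce shortest $x$--$y$ paths through both the $a$-component and the $b$-component of $G-S$, whose concatenation is an induced cycle of length at least~$4$, contradicting chordality. Then, for a pair of non-adjacent vertices separated by a minimal $S$, applying the induction hypothesis to the strictly smaller chordal graph $G[A \cup S]$, where $A$ is the vertex set of one component of $G-S$, extracts a simplicial vertex lying inside $A$, which remains simplicial in $G$ because the neighbourhoods of vertices of $A$ are unchanged.

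Once a simplicial vertex $v_1$ of $G$ is in hand, chordality of $G - v_1$ (inherited by induced subgraphs) lets the induction hypothesis supply a PEO $[v_2,\ldots,v_n]$ of $G - v_1$, and prepending $v_1$ yields a PEO of $G$. The ``moreover'' clause is then immediate: given any simplicial vertex $v$ of $G$, the graph $G - v$ is chordal, its PEO exists by the forward direction just proved, and prepending $v$ produces a PEO of $G$ starting at $v$. The main obstacle is the existence of a simplicial vertex in a chordal graph; the clique property of minimal separators is the technical fulcrum, and once it is in place the rest of the proof is routine induction.
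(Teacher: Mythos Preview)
The paper does not prove this proposition at all: it is quoted as a known result from Golumbic's textbook, with no argument supplied. Your proposal is the standard Dirac argument (minimal vertex separators in a chordal graph are cliques; induction yields a simplicial vertex, indeed two non-adjacent ones when the graph is not complete), which is precisely the proof found in the cited reference. So your approach is correct and there is nothing in the paper to compare it against.

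One small point worth tightening in your sketch: when you apply the induction hypothesis to $G[A \cup S]$ you assert that the resulting simplicial vertex lies in $A$, but the reason deserves a sentence. If $G[A \cup S]$ is complete, any vertex of $A$ works; otherwise the induction hypothesis gives two \emph{non-adjacent} simplicial vertices, and since $S$ is a clique at most one of them can lie in $S$, forcing the other into $A$. You have stated both ingredients (the strengthened lemma and the clique property of $S$), so the logic is present, but the connection is left implicit.
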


A graph $G=(V,E)$ is a \emph{split graph} if there is a partition
$V=S+Q$ of its vertex set into an independent set $S$ and a clique
$Q$. There is no restriction on edges between vertices of $S$ and
$Q$. A graph $G=(V,E)$ is an \emph{interval graph} if there exists a
family $\mathcal{I}=\{I_v \mid v \in V\}$ of intervals on a real line
such that for distinct vertices $u,v$ in $G$, $(u,v) \in E$ if and
only if $I_u \cap I_v \ne \emptyset$. Such a family $\mathcal{I}$ of
intervals is commonly referred to as the \emph{interval representation}
of $G$. Given an interval representation of $G$, consider a cycle of
more than $3$ vertices, and the corresponding intervals in ascending left
endpoints. Since the rightmost interval intersects the leftmost interval,
it also intersects the intervals in between them. Hence $G$ is also
chordal. It will be convenient to let $Left(I_v)$ and $Right(I_v)$ stand
for the left and right endpoint of the interval $I_v$, respectively. The
family $\mathcal{I}$ is the interval representation of a \emph{proper
interval graph (PIG)} if and only if no interval is properly contained
in another. Interval graphs and split graphs are easily seen to be
chordal~\cite{golumbic2004algorithmic}.

\begin{proposition}[\cite{olariu1991optimal}]
There exists an $O(m+n)$ time algorithm to get an interval representation
of a given interval graph. \qed \label{pro:intorder}
\end{proposition}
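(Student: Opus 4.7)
The plan is to follow the classical route based on the Gilmore--Hoffman characterization: a graph $G$ is an interval graph if and only if its maximal cliques admit a linear ordering $C_1, C_2, \ldots, C_k$ such that, for every vertex $v$, the set of cliques containing $v$ forms a contiguous block in this order. Given such an ordering, setting $I_v = [\,\mathrm{first}(v), \mathrm{last}(v)\,]$ where $\mathrm{first}(v)$ and $\mathrm{last}(v)$ are the indices of the first and last clique containing $v$ immediately yields an interval representation. Two vertices $u,v$ are adjacent iff some maximal clique contains both, iff their blocks overlap, iff $I_u \cap I_v \neq \emptyset$. So the whole task reduces to (i) enumerating the maximal cliques and (ii) ordering them consecutively, both in linear time.

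For step (i) I would first run Lexicographic Breadth-First Search (or Maximum Cardinality Search), which produces a perfect elimination ordering $[v_1,\ldots,v_n]$ of $G$ in $O(m+n)$ time whenever $G$ is chordal, by Proposition~\ref{pro:chordal}. The chordality test itself (verifying that each $N(v_i)\cap\{v_{i+1},\ldots,v_n\}$ is a clique) also runs in $O(m+n)$. Since $G$ is an interval graph and hence chordal, the PEO exists. From the PEO, the maximal cliques of $G$ are exactly the sets $K_i = \{v_i\} \cup (N(v_i) \cap \{v_{i+1},\ldots,v_n\})$ that are not properly contained in any $K_j$ with $j<i$; this yields at most $n$ maximal cliques in total, enumerable in $O(m+n)$ time by a standard subset-containment check using the PEO.

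For step (ii) I would use a PQ-tree data structure (Booth--Lueker): build the matrix whose rows are the vertices and columns are the maximal cliques (stored implicitly via incidence lists), and test the consecutive-ones property. Starting from the universal PQ-tree and reducing it one vertex at a time against the subset of columns in which that vertex appears, the algorithm terminates in total time $O(m+n)$ (the sum of all vertex-clique incidences is $O(m+n)$ for chordal graphs). If any reduction fails, $G$ is not an interval graph; otherwise any frontier of the resulting PQ-tree gives a valid clique ordering. Assigning intervals as described above then takes $O(n)$ further time, giving the total $O(m+n)$ bound.

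The main obstacle is step (ii): bounding the PQ-tree reductions at $O(m+n)$ requires the intricate amortized analysis of Booth and Lueker for the P- and Q-templates, together with the non-trivial fact that the total size of the vertex-clique incidence structure of a chordal graph is linear in $m+n$. Since the proposition is quoted verbatim from~\cite{olariu1991optimal}, I would simply cite that reference for the detailed correctness and complexity proof of the PQ-tree phase, while providing the above pipeline as the justification that the overall algorithm is $O(m+n)$.
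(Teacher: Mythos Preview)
The paper does not prove this proposition at all: it is stated as a cited result from~\cite{olariu1991optimal} and terminated with a \qed, with no accompanying proof environment. Your proposal therefore goes well beyond what the paper does. The outline you give---PEO via LexBFS, maximal-clique enumeration, then a PQ-tree consecutive-ones test to linearly order the cliques, and finally reading off intervals from first/last clique indices---is the classical Booth--Lueker pipeline and is mathematically sound; your own closing remark that one ultimately defers to the cited reference is exactly what the paper does in its entirety.
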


However, since the component coloring problem is motivated by the
light-trail scheduling problem, in this paper we will assume that an
interval representation $\mathcal{I}=\{I_v \mid v \in V\}$ is given for
the input PIG $G=(V,E)$.

Now consider the linear order~$\prec$ on $V$ defined as
follows. For $u,v \in V$, $u \prec v$ if and only if $Left(I_u) <
Left(I_v)$ or $\lgroup(Left(I_u) = Left(I_v))$ and $(Right(I_u) \le
Right(I_v))\rgroup$. We call this ordering $v_1 \prec v_2 \prec \cdots
\prec v_n$ the \emph{canonical ordering}. In the rest of the paper, we
use numbers $1$ to $n$ to represent the vertices where $i$ represents
the vertex that appears $i$th in the canonical ordering. Hence, $v$
will be interchangeably used to represent a vertex $v \in V$ as well as
its position in the canonical ordering. If $u \prec v$ then $u$ is said
to be on the left of $v$ and $v$ is said to be on the right of $u$.

\begin{proposition}[\cite{olariu1991optimal}]
 A graph $G=(V,E)$ is an interval graph if and only if there exists a
 linear order $\prec$ on $V$ such that for every choice of vertices
 $u, v, w$ with $u \prec v \prec w$, $(u,w) \in E$ implies $(u,v)
 \in E$. \qed \label{pro:intvorder}
\end{proposition}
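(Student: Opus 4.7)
The plan is to prove both directions using the natural ordering of intervals by their left endpoints. For the forward direction ($\Rightarrow$), I would suppose $G=(V,E)$ is an interval graph with representation $\{I_v : v \in V\}$ and take $\prec$ to be any extension of ordering by left endpoint (breaking ties by right endpoint, as in the canonical ordering already defined in the paper). Given a triple $u \prec v \prec w$ with $(u,w) \in E$, I would argue geometrically: from $u \prec v \prec w$ we have $Left(u) \le Left(v) \le Left(w)$, and from $I_u \cap I_w \neq \emptyset$ together with $Left(u) \le Left(w)$ we get $Left(w) \le Right(u)$. Chaining these yields $Left(u) \le Left(v) \le Right(u)$, so $Left(v) \in I_u$, and trivially $Left(v) \in I_v$, giving $(u,v) \in E$.

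For the backward direction ($\Leftarrow$), given a linear order $\prec$ satisfying the property, I would relabel vertices as $v_1 \prec v_2 \prec \cdots \prec v_n$ and construct an explicit interval representation. For each $i$, define
\[
  r(i) \;=\; \max\bigl\{\, j \ge i \;:\; j=i \text{ or } (v_i,v_j) \in E\,\bigr\},
\]
and assign $I_{v_i} = [\,i,\, r(i)\,]$. The key intermediate claim is that the right-neighborhood of $v_i$ is exactly the consecutive block $\{v_{i+1}, \ldots, v_{r(i)}\}$: for any $j$ with $i < j < r(i)$, applying the hypothesis to $v_i \prec v_j \prec v_{r(i)}$ and the edge $(v_i, v_{r(i)}) \in E$ yields $(v_i, v_j) \in E$. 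Conversely, no $v_j$ with $j > r(i)$ is adjacent to $v_i$ by the definition of $r(i)$.

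With this block structure in hand, adjacency and interval intersection match up cleanly. For $i < j$: if $(v_i,v_j) \in E$, then $j \le r(i)$, so $j \in [i,r(i)] \cap [j,r(j)]$ and the intervals intersect; if $(v_i,v_j) \notin E$, then $j > r(i)$, so $[i,r(i)]$ lies strictly to the left of $[j,r(j)]$ and they are disjoint. Hence the constructed family is an interval representation of $G$.

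The main obstacle is really just the backward direction's consecutive-block claim, since everything else is routine endpoint arithmetic. The ordering hypothesis is precisely the statement one needs to force right-neighborhoods to be contiguous in the $\prec$-order, and once that is pinned down, the interval construction $I_{v_i} = [i, r(i)]$ works almost verbatim.
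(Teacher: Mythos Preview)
Your proof is correct in both directions, but note that the paper does not actually prove this proposition: it is quoted from Olariu~\cite{olariu1991optimal} and terminated with a \qed, so there is no in-paper argument to compare against. Your write-up is a clean, self-contained verification of the cited result. The forward direction is the standard left-endpoint argument, and your backward-direction construction $I_{v_i}=[i,r(i)]$ with $r(i)$ the index of the rightmost neighbor is exactly the classical way to recover an interval model from the ordering property; the only nontrivial step is the contiguity of right-neighborhoods, which you derive directly from the hypothesis applied to $v_i \prec v_j \prec v_{r(i)}$. Nothing is missing.
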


For PIGs the canonical ordering not only satisfies the conditions in
Proposition~\ref{pro:intvorder} but, in fact, satisfies a stronger
property:

\begin{proposition}[``The Umbrella Property'' \cite{looges1993optimal}]
 A graph $G=(V,E)$ is a PIG, if and only if, there exists a linear order
 $\prec$ on $V$ such that for every choice of vertices $u, v, w$,
 with $u \prec v \prec w$, $(u,w) \in E$ implies both $(u,v) \in E$
 and $(v,w) \in E$. \qed \label{pro:propervorder}
\end{proposition}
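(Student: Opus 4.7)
The plan is to handle the two directions separately, both anchored in the canonical ordering $\prec$. For the forward direction, take $\prec$ coming from a representation with no proper containment. A key preliminary observation is that this ordering is monotone in both endpoints: $u \prec v$ implies $Left(I_u) \le Left(I_v)$ by definition, and also $Right(I_u) \le Right(I_v)$, since a strict increase in left endpoints together with the absence of proper containment forces a weak increase in right endpoints, while equal left endpoints are tie-broken by the right endpoint. Given $u \prec v \prec w$ with $(u,w) \in E$, so that $Left(I_w) \le Right(I_u)$, the chain $Left(I_u) \le Left(I_v) \le Left(I_w) \le Right(I_u)$ places $Left(I_v)$ in $I_u$, while $Left(I_v) \le Left(I_w) \le Right(I_u) \le Right(I_v)$ places $Left(I_w)$ in $I_v$, yielding both edges $(u,v)$ and $(v,w)$.

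For the reverse direction, given an ordering $\prec$ satisfying the umbrella property, I would construct a proper interval representation explicitly. For each $v \in V$ set $r(v) = \max(\{v\} \cup \{w : (v,w) \in E,\ v \prec w\})$ and define $I_v = [v,\, r(v) + v/(n+1)]$. For $u \prec v$ the intersection $I_u \cap I_v$ is nonempty iff $v \le r(u) + u/(n+1)$, which, since $v$ is an integer and $u/(n+1) < 1$, is equivalent to $v \le r(u)$; this recovers $E$, because $v = r(u)$ gives $(u,v) \in E$ by definition of $r$, and $u \prec v \prec r(u)$ with $(u, r(u)) \in E$ gives $(u,v) \in E$ by the umbrella property, while conversely $(u,v) \in E$ with $u \prec v$ directly yields $r(u) \ge v$. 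To rule out proper containment, suppose $I_u \subsetneq I_v$ with $u \ne v$; set-theoretic containment forces $v \le u$ (so in fact $v \prec u$) and $r(v) + v/(n+1) \ge r(u) + u/(n+1)$, which together with $u - v \ge 1$ and the integrality of $r(v) - r(u)$ forces $r(v) \ge r(u) + 1$. Setting $w = r(v) \ge u + 1 > v$, the edge $(v,w) \in E$ combined with $v \prec u \prec w$ and the umbrella property gives $(u,w) \in E$, whence $r(u) \ge w = r(v)$, contradicting $r(v) > r(u)$.

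The main obstacle I anticipate is exactly this containment-free requirement: the naive integer-endpoint choice $I_v = [v, r(v)]$ already fails on a clique (for $K_n$ it produces $I_1 \supsetneq I_2 \supsetneq \cdots \supsetneq I_n$), so the right endpoints must be perturbed in the direction opposite to the left endpoints. The term $v/(n+1)$ is calibrated to be small enough to preserve every adjacency comparison and large enough to break every tie in the desired way, which is exactly what lets a single invocation of the umbrella property close the containment argument.
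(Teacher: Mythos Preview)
Your proof is correct in both directions. The paper does not actually prove this proposition: it is stated as a known characterization, attributed to Looges and Olariu, and closed with a \qed. So there is no in-paper argument to compare against; you have supplied what the paper only cites.

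A brief remark on your construction in the reverse direction: the perturbation $v/(n+1)$ is well chosen, and your containment argument goes through cleanly. One could alternatively argue monotonicity of $r$ directly---if $v \prec u$ and $r(v) > v$ then the umbrella property applied to $v \prec u \prec r(v)$ (when $u < r(v)$) or to the edge $(v,r(v))$ itself gives $(u,r(v)) \in E$, hence $r(u) \ge r(v)$---which shows that right endpoints are weakly increasing and so the perturbation alone suffices to rule out containment. Your route via contradiction is equivalent and equally short.
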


A \emph{block}\footnote{Some authors use the term block to represent
what we call a clique.} in a PIG a is a set of vertices which are
consecutive in the canonical ordering. We will represent a block starting
at a vertex $u$ and ending at a vertex $v$ as the interval $[u,v]$. An
immediate corollary of Proposition~\ref{pro:propervorder} is that every
edge $(u,v) \in E$ induces a clique $[u,v]$. Also, any maximal clique of a
PIG can be represented by a single edge between the two end vertices,
say $(u,v)$, or by the block $[u,v]$.

\begin{corollary}
 Let $S$ be a connected subgraph of a PIG and $v_1, v_2, \ldots,
 v_t$ be the vertices of $S$ arranged in the canonical ordering. Then
 there must be an edge $(v_i, v_{i+1})$ for all $i=1,\ldots,
 t-1$. \label{cor:edge_seq}
\end{corollary}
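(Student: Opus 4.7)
The plan is to argue by contradiction using the Umbrella Property (Proposition~\ref{pro:propervorder}). Suppose, for some index $i$ with $1 \le i \le t-1$, the edge $(v_i, v_{i+1})$ is absent from $E$. I would then partition the vertex set of $S$ into a ``left'' part $L = \{v_1, \ldots, v_i\}$ and a ``right'' part $R = \{v_{i+1}, \ldots, v_t\}$; both parts are nonempty, and by assumption no edge of $E$ directly joins $v_i$ to $v_{i+1}$.

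Next I would invoke the connectedness of $S$. Pick any vertex of $L$ and any vertex of $R$; since $S$ is connected, there is a path in $S$ between them, and this path must contain some edge $(v_a, v_b) \in E(S)$ with $v_a \in L$ and $v_b \in R$, i.e.\ with $a \le i$ and $b \ge i+1$. I would then show that the existence of such an edge forces $(v_i, v_{i+1}) \in E$, contradicting the assumption.

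The core of the argument is a short case analysis driven by the umbrella property. If $a < i$, then from $v_a \prec v_i \prec v_b$ and $(v_a, v_b) \in E$, the umbrella property yields $(v_i, v_b) \in E$; so in all cases we may assume $a = i$. If additionally $b > i+1$, then from $v_a = v_i \prec v_{i+1} \prec v_b$ and $(v_i, v_b) \in E$, the umbrella property gives $(v_i, v_{i+1}) \in E$. The only remaining case is $a = i$, $b = i+1$, which already gives $(v_i, v_{i+1}) \in E$ directly. Either way we reach a contradiction, completing the proof.

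I do not expect any real obstacle: the whole argument is a clean application of the umbrella property, and the only subtlety is making sure the path supplied by connectedness actually produces a crossing edge between $L$ and $R$, which is immediate once one walks along the path and picks the first vertex that lies in $R$.
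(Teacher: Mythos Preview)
Your proof is correct and follows essentially the same approach as the paper: both locate an edge $(v_a,v_b)$ of $S$ crossing the cut $\{v_1,\ldots,v_i\}\,/\,\{v_{i+1},\ldots,v_t\}$ using connectedness, and then invoke the umbrella property to deduce $(v_i,v_{i+1})\in E$. The only cosmetic difference is that the paper appeals to the already-stated consequence that any edge $(u,v)$ induces a clique on the block $[u,v]$, which collapses your two-step case analysis into a single sentence.
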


\begin{proof}
 Consider the two vertices $v_i$ and $v_{i+1}$. Since $S$ is connected
 there must be an edge $(v_j, v_k)$ where $j \le i$ and $i+1 \le k$. Then
 $[v_j,v_k]$ is a clique. Thus there is an edge $(v_i, v_{i+1})$.
\end{proof}

\begin{proposition}
 If an interval representation is given for a PIG $G$, then the
 vertices of $G$ can be arranged in canonical ordering in $O(n)$ time.
 \label{pro:algocanon}
\end{proposition}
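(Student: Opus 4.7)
The plan is to reduce the task to a single linear-time bucket sort, using the definition of the canonical order together with the proper property of the interval representation.

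First I would note that by the definition of $\prec$, $u \prec v$ iff $(Left(I_u), Right(I_u))$ lexicographically precedes $(Left(I_v), Right(I_v))$, so the problem is exactly that of lex-sorting the $n$ endpoint pairs. Listing the $n$ intervals and reading off their two endpoints each is trivially $O(n)$, so only the sort itself is at issue.

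Second I would use the fact that no interval in $\mathcal{I}$ properly contains another to eliminate any secondary sort: if $Left(I_u) = Left(I_v)$ and $Right(I_u) \ne Right(I_v)$, then the interval with the smaller right endpoint is properly contained in the other, contradicting the PIG assumption. Hence ties in left endpoints always coincide with ties in right endpoints, so a single sort on left endpoints already produces the canonical ordering, with tied vertices listed in any order among themselves.

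Third, I would carry out that sort in $O(n)$ time using counting sort (or radix sort), which is justified by the input model of the motivating light-trail application: endpoints there are processor indices on a path network, and hence integers in $\{1, \ldots, p\}$ with $p \le 2n$. Under the more general word-RAM model, in which endpoints fit in a constant number of machine words, a radix sort over the $2n$ endpoints gives the same $O(n)$ bound.

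The main obstacle is this sorting step itself: in a pure comparison model with arbitrary real-valued endpoints the $\Omega(n \log n)$ lower bound would rule out $O(n)$, so the claim of the proposition must be read under one of the bounded-integer or word-RAM assumptions above. Once either is granted, the combination of the lex-sort reduction and the elimination of the secondary sort via the proper property makes the argument immediate.
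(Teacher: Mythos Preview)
Your proposal is correct and follows essentially the same approach as the paper: the paper's proof simply observes that the at most $2n$ endpoints allow bucket sort in $O(n)$ time. Your treatment is more careful---you make the integer-endpoint (word-RAM) assumption explicit and add the nice observation that the proper property collapses the lex-sort to a single key---but the core idea is identical.
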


\begin{proof}
Since there are at most $t \le 2n$ endpoints of all intervals in the
representation, the intervals can be sorted in canonical ordering using
bucket sort in $O(n)$ time.
\end{proof}

\begin{proposition}
 If an interval representation is given for a PIG $G$, then the maximal
 cliques of $G$ can be found in $O(n)$ time.  \label{pro:algomaxq}
\end{proposition}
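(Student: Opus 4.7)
The plan is to use the umbrella property of the canonical ordering (Proposition~\ref{pro:propervorder}) to show that every maximal clique of $G$ is a block $[a, b]$ in this ordering, and that these blocks can be read off by a single left-to-right sweep. By Proposition~\ref{pro:algocanon} the canonical ordering is available in $O(n)$ time, so the remaining task is the sweep itself.

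For each vertex $v$, I would define $R(v)$ to be the largest $w \ge v$ such that $(v, w) \in E$, setting $R(v) = v$ if $v$ has no right-neighbour. By the umbrella property $[v, R(v)]$ is a clique. The first step is to observe that $R$ is non-decreasing: if $u \prec v \preceq R(u)$ then $(u, R(u)) \in E$ forces $(v, R(u)) \in E$ by Proposition~\ref{pro:propervorder}, giving $R(v) \ge R(u)$; and the case $v \succ R(u)$ is immediate since $R(v) \ge v$. The second step is to characterise maximal cliques: a block $[a, R(a)]$ is a maximal clique if and only if $a = 1$ or $R(a-1) < R(a)$. Maximality to the right is automatic from the definition of $R$, and maximality to the left fails exactly when $(a-1, R(a)) \in E$, equivalently $R(a-1) \ge R(a)$. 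Conversely, every maximal clique $[p, q]$ must have $q = R(p)$, because by the umbrella property any clique is a block and any block $[p, q]$ with $q < R(p)$ extends one step to the right.

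To finish, I would compute all $R(v)$ by a standard two-pointer walk over the canonical ordering: maintain a pointer $w$ that only moves right and, for each successive $v$, advance $w$ while $Left(I_{w+1}) \le Right(I_v)$, then set $R(v) := \max(v, w)$. Both pointers only increase, so the walk costs $O(n)$. Finally, I would output every pair $(a, R(a))$ for which $a = 1$ or $R(a-1) < R(a)$; since the $R$-values are non-decreasing there are at most $n$ such change-points, so the enumeration also takes $O(n)$, yielding the claimed $O(n)$ total bound.

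The only delicate step is the monotonicity of $R$ together with the characterisation of maximal cliques as its change-points; both rest squarely on the umbrella property, which is precisely why this simple sweep works for proper interval graphs but would fail for general interval graphs.
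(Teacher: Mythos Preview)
Your argument is correct. One wording slip: ``by the umbrella property any clique is a block'' is false as stated (a two-element sub-clique need not be a block); what you mean, and what your argument actually uses, is that any \emph{maximal} clique is a block. That follows immediately from the umbrella property as you indicate, so the proof goes through.

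Your route differs from the paper's. The paper merges all $2n$ interval endpoints into one sorted array and outputs a maximal clique $[u,v]$ whenever a left endpoint $Left(I_v)$ is immediately followed by a right endpoint $Right(I_u)$; this is a pure sweep-line on the geometric representation. You instead work on the vertex side: you define the rightmost-neighbour function $R$, prove it is monotone via the umbrella property, and identify maximal cliques as the change-points of $R$, computing $R$ by a two-pointer walk. The two algorithms are in fact in bijection---the left-then-right transitions in the endpoint array are exactly the positions where $R$ jumps---but your presentation isolates the graph-theoretic reason (monotonicity of $R$) for why a linear sweep suffices, whereas the paper's proof leans directly on the interval geometry. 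The paper's version is slightly shorter; yours makes the role of the proper-interval hypothesis more explicit and would generalise more readily to settings where only the ordering and adjacency, not the interval endpoints, are given.
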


\begin{proof}
Let $\mathcal{I} = \{I_v \mid v \in V\}$ be an interval representation
of $G$.  Without loss of generality we assume that the endpoints of
all intervals are unique. Otherwise we can suitably extend some of
the intervals on either side so that all endpoints become distinct
without altering the maximal cliques. We construct the sorted array
$A(1,\ldots,2n)$ of all endpoints in $O(n)$ time using bucket sort. The
maximal cliques are identified as follows. Traverse $A$ left to right and
whenever $A(i)$ is $Left(I_v)$ and $A(i+1)$ is $Right(I_u)$ for some $u,v$
in $V$ then output $[u,v]$. Clearly $u$ and $v$ are adjacent and hence
$[u,v]$ is a clique. Since $u$ is the leftmost possible and $v$ is the
rightmost possible for such a clique, $[u,v]$ is a maximal clique. The
traversal takes $O(n)$ time.
\end{proof}

\section{Equivalence of Coloring and Partition on Chordal Graphs}
\label{sec:eqv_chordal}

\begin{lemma}
 If a graph $G$ has a $[\lambda, C]$-coloring then it has a
 $[\lambda,C]$-partition.  \label{thm:coloring_to_partition}
\end{lemma}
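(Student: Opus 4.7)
The plan is to take the natural candidate partition: let $\Pi$ be exactly the collection of chromons of the given $[\lambda,C]$-coloring. I will then verify each of the three defining conditions of a $[\lambda,C]$-partition in turn.

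First, $\Pi$ is a partition of $V$ because every vertex belongs to exactly one monochromatic component (the chromons of each color class are disjoint, and the color classes cover $V$). Connectedness and the size bound are immediate from the definitions: by definition a chromon is a maximal connected monochromatic subgraph, so $G[P_i]$ is connected; and since the coloring is a $[\lambda,C]$-coloring, every chromon has size at most $C$.

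The only nontrivial condition is the clique-intersection bound. For this I would argue as follows. Let $K$ be any clique in $G$, and suppose $u,v \in K$ receive the same color. Then $u$ and $v$ are adjacent (since $K$ is a clique) and monochromatic, so they lie in the same chromon, hence in the same part of $\Pi$. Therefore the map sending each vertex of $K$ to its part in $\Pi$ factors through the color of the vertex, and the number of parts that $K$ meets is at most the number of distinct colors appearing on $K$. Since the coloring uses only $\lambda$ colors, $K$ intersects at most $\lambda$ parts of $\Pi$.

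There is no real obstacle here; the lemma is essentially a tautology once the key observation—that two same-colored vertices of a clique must belong to the same chromon—is made explicit. The argument does not use any structural property of $G$, so it applies to arbitrary graphs, not just chordal ones.
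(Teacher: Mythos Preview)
Your proof is correct and follows essentially the same approach as the paper: take the chromons as the parts and observe that two same-colored vertices of a clique are adjacent and hence lie in the same chromon, so a clique meets at most $\lambda$ parts. Your additional remark that the argument needs no structural hypothesis on $G$ is also in line with the paper, which states and proves this direction for arbitrary graphs.
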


\begin{proof}
 Suppose $G$ has a $[\lambda, C]$-coloring $\mathcal{C}$.  Consider
 the partition $\Pi$ induced by $\mathcal{C}$ where each part is
 exactly a chromon. The connectedness constraint is immediately
 satisfied. Since a chromon has size at most $C$, the size constraint
 is also satisfied. Since any pair of vertices in a clique is directly
 connected by an edge, the chromons in $\mathcal{C}$ intersected by a
 clique are all of different colors. Hence, a clique intersects at most
 $\lambda$ parts in $\Pi$. Thus the clique intersection constraint is
 also satisfied. Hence, $\Pi$ is a $[\lambda,C]$-partition.
\end{proof}

Next we will show that for chordal graphs the converse is
also true.

\begin{lemma}
 If a chordal graph $G$ has a $[\lambda, C]$-partition then it has a
 $[\lambda,C]$-coloring.  \label{lem:coloring_eqv_partition}
\end{lemma}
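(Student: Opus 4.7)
The plan is to reduce the construction of a $[\lambda,C]$-coloring of $G$ to finding a proper vertex coloring of the quotient graph $H$ whose vertex set is $\Pi$ and in which two parts $P,P'$ are adjacent iff some edge of $G$ has one endpoint in each. Any proper $\lambda$-coloring $c$ of $H$ lifts to a $[\lambda,C]$-coloring of $G$: color each $v\in V(G)$ by $c(P)$, where $P$ is the part containing $v$. Two vertices lying in distinct parts and sharing a color cannot be adjacent in $G$, since their parts would then be adjacent in $H$ yet identically colored; hence every chromon lies inside a single part of $\Pi$, and so has at most $C$ vertices. The task therefore reduces to proving $\chi(H)\le\lambda$.

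To bound $\chi(H)$, I would invoke a clique tree $T$ of the chordal graph $G$: a tree whose nodes are the maximal cliques of $G$ such that, for each $v\in V(G)$, the set $T_v$ of nodes containing $v$ induces a subtree of $T$. For each part $P\in\Pi$, set $\tau(P)=\bigcup_{v\in P}T_v$. Because any edge $uv$ of $G[P]$ lies in a common maximal clique (a node of $T_u\cap T_v$) and $G[P]$ is connected, $\tau(P)$ is a subtree of $T$. Moreover, $P$ and $P'$ are adjacent in $H$ iff some edge of $G$ runs between them iff some maximal clique of $G$ meets both iff $\tau(P)\cap\tau(P')\ne\emptyset$. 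Thus $H$ is the intersection graph of a family of subtrees of a tree, which is chordal.

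For the clique-number bound, let $\{P_{i_1},\ldots,P_{i_k}\}$ be any clique of $H$. The subtrees $\tau(P_{i_1}),\ldots,\tau(P_{i_k})$ pairwise intersect, so the Helly property for subtrees of a tree produces a common node $K^{\ast}$ of $T$. Then $K^{\ast}$ is a maximal clique of $G$ that intersects each $P_{i_j}$, and the clique intersection condition imposed on $\Pi$ gives $k\le\lambda$. Hence $\omega(H)\le\lambda$, and since chordal graphs are perfect, $\chi(H)=\omega(H)\le\lambda$, producing the desired proper $\lambda$-coloring of $H$.

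The main obstacle is to establish both the chordality of $H$ and the clique-number bound simultaneously. A direct induction that peels off a simplicial vertex $v$ of $G$ behaves well when $v$ is alone in its part, but stumbles when $v$ belongs to a larger part $P$: after removing $v$, the quotient on $(\Pi\setminus\{P\})\cup\{P\setminus\{v\}\}$ can miss edges that $v$ alone contributes between $P$ and neighbouring parts, so a coloring of the smaller quotient does not extend to $H$ without re-coloring $P$. The clique tree representation treats all parts uniformly and bypasses this difficulty through the Helly property.
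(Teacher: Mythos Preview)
Your proof is correct and takes a genuinely different route from the paper. You pass to the quotient graph $H$, realize it as the intersection graph of subtrees of a clique tree of $G$, and conclude $\chi(H)=\omega(H)\le\lambda$ via the Helly property and perfection of chordal graphs. The paper instead argues by direct induction on $|V(G)|$: remove a simplicial vertex $u$ (say $u\in P_1$), color $G-u$ by the induction hypothesis so that each part of the reduced partition is a chromon, and then give $u$ the color of $P_1\setminus\{u\}$ (or, when $P_1=\{u\}$, the least color absent from $N(u)$). Your argument is more structural and leans on standard machinery; the paper's is self-contained and elementary but must verify that re-inserting $u$ does not merge chromons.

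One correction to your closing paragraph: the obstacle you flag does not actually occur, and the paper exploits exactly the induction you set aside. If $u$ is simplicial and $|P_1|>1$, then $u$ has a neighbor $x\in P_1\setminus\{u\}$; any neighbor $w$ of $u$ lying in another part $P_j$ is then adjacent to $x$ too, since $N(u)$ is a clique. Hence $P_1\setminus\{u\}$ and $P_j$ are already adjacent in the quotient of $G-u$, so no edge of $H$ incident to $P_1$ is lost by deleting $u$, and the inductively obtained coloring extends to $G$ without any re-coloring of $P_1$. (Simpliciality also guarantees that $G[P_1\setminus\{u\}]$ stays connected, so the reduced partition is still a $[\lambda,C]$-partition.)
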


\begin{proof}
 Suppose $G$ has a $[\lambda, C]$-partition $\Pi = \{P_1, P_2, \ldots,
 P_t\}$. We prove that there exists a $[\lambda,C]$-coloring of $G$
 in which each $P_i$ is a chromon. Let the colors be numbered $1, 2,
 \ldots$. We prove by induction on number of vertices $n$. For $n=1$,
 assigning color $1$ to the single vertex gives a $[\lambda, C]$-coloring
 for any $\lambda, C \ge 1$.
 
 For $n>1$, let $u$ be a simplicial vertex of $G$. Without loss of
 generality assume $u \in P_1$. Consider the graph $G'$ obtained by
 removing $u$ from $G$. Then $\Pi'=\{P_1 \setminus \{u\}, P_2, \ldots,
 P_t\}$ is a $[\lambda,C]$-partition for $G'$. By induction, there is
 a $[\lambda,C]$-coloring $\mathcal{C}'$ of $G'$ in which each part
 of $\Pi'$ is a chromon. We obtain a coloring $\mathcal{C}$ of $G$ as
 follows. If $|P_1|>1$ we assign the color of other vertices in $P_1$
 to $u$ too. Otherwise we assign $u$ the lowest numbered color that is
 not assigned to any of the neighbors of $u$ in $\mathcal{C}'$. To show
 that $\mathcal{C}$ is a $[\lambda,C]$-coloring, it is enough to show
 that at most $\lambda$ colors are used in $\mathcal{C}$. For $|P_1|>1$
 it is obvious as no new color is used. For $|P_1|=1$ if it requires
 $\lambda+1$ colors then it implies that the clique $u \cup N(u)$
 intersects $\lambda+1$ parts which is not possible.
\end{proof}

Thus, on chordal graphs, solving the coloring problem is equivalent
to solving the partition problem. In the rest of the paper we solve
the partition problem only because the solution can be converted to a
solution to the coloring problem using the procedure described in the
proof of Lemma~\ref{lem:coloring_eqv_partition}.


\section{Equivalence of Coloring and Block-partition on PIGs}
\label{sec:eqv_pig}

For PIGs, we introduce a more restricted way of partitioning the vertex
set. 
\begin{definition}
A PIG is said to have a \emph{$[\lambda,C]$-block partition} if
it has a $[\lambda,C]$-partition in which each part also satisfy
\emph{consecutiveness} constraint, \ie each part is also a block.
\end{definition}

\begin{lemma}
A PIG $G$ has a $[\lambda, C]$-partition if and only if $G$ has a
$[\lambda, C]$-block partition.
 
\label{lem:partition_eqv_block} 
\end{lemma}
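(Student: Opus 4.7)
The ``if'' direction is immediate: every $[\lambda,C]$-block partition is by definition a $[\lambda,C]$-partition, since the consecutiveness constraint is strictly stronger than nothing.

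For the ``only if'' direction, my plan is a potential-function (exchange) argument. Given a $[\lambda,C]$-partition $\Pi=\{P_1,\dots,P_t\}$, define
\[
 \mu(\Pi)=\sum_{i=1}^{t}\bigl(\max P_i-\min P_i+1-|P_i|\bigr),
\]
where $\max$ and $\min$ are with respect to the canonical order, so that $\mu(\Pi)=0$ exactly when each $P_i$ is a block; in that case $\Pi$ is itself the desired $[\lambda,C]$-block partition. I would consider a $[\lambda,C]$-partition $\Pi^{*}$ minimizing $\mu$ and aim to show $\mu(\Pi^{*})=0$, which completes the argument.

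Suppose for contradiction that $\mu(\Pi^{*})>0$. Then some $P\in\Pi^{*}$, whose vertices listed in canonical order are $v_1\prec v_2\prec\cdots\prec v_k$, has a ``hole'': there exist consecutive indices $j,j+1$ and some $w\notin P$ with $v_j\prec w\prec v_{j+1}$. Let $Q\in\Pi^{*}$, $Q\ne P$, be the part containing $w$. By Corollary~\ref{cor:edge_seq} applied to the connected set $P$, $(v_j,v_{j+1})\in E$; and by the remark following Proposition~\ref{pro:propervorder}, the edge $(v_j,v_{j+1})$ induces the clique $[v_j,v_{j+1}]$, so in particular $w$ is adjacent to both $v_j$ and $v_{j+1}$. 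The plan is to build a $[\lambda,C]$-partition $\Pi'$ with $\mu(\Pi')<\mu(\Pi^{*})$, contradicting the minimality of $\Pi^{*}$.

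The swap I would try first is the single-vertex exchange: move $w$ out of $Q$ into $P$, and simultaneously move a carefully chosen $u\in P$ (the natural candidate being $u=\max P$) out of $P$ into $Q$, replacing $Q\cup\{u\}\setminus\{w\}$ by its connected components if it becomes disconnected. Size is trivially preserved, and connectedness of $P\cup\{w\}\setminus\{u\}$ follows from the Umbrella Property: the vertices in canonical order are $v_1,\dots,v_j,w,v_{j+1},\dots,v_{k-1}$ whose consecutive pairs are edges by Corollary~\ref{cor:edge_seq} together with $(v_j,w),(w,v_{j+1})\in E$. The hard part will be twofold: (i) removing $w$ may strictly disconnect $Q$ so that $Q\cup\{u\}\setminus\{w\}$ must be split into several new parts, and (ii) that split may both raise $\mu$ (if $u=\max P$ extends the span of $Q$ to the right of $\max Q$) and violate the clique-intersection bound (since a maximal clique of $G$ may now meet more of the new parts than it met of $P$ and $Q$ together). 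I anticipate that the proof handles (i) and (ii) by exploiting the structure of PIGs: maximal cliques are themselves blocks (from the remark after Proposition~\ref{pro:propervorder}) and the right-neighbour function inherited from the Umbrella Property is monotone, so only maximal cliques containing $w$ or $u$ can change their intersection count, and a case analysis based on the relative position of $u$, $\max Q$ and the endpoints of such cliques should show that the count does not rise and that $\mu$ strictly drops. If the choice $u=\max P$ proves insufficient, my fallback is to choose $u$ as the rightmost vertex of $P$ that shares a maximal clique with the rightmost component of $Q\setminus\{w\}$, so that $u$ automatically reattaches to $Q$ and no new part is created; a last resort is to argue non-constructively via $[\lambda,C]$-colorings using Lemma~\ref{lem:coloring_eqv_partition} and an analogous exchange on chromons.
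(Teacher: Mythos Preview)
Your potential-function approach is different from the paper's, and as written it has a genuine gap: the swap $w\leftrightarrow u$ with $u=\max P$ need not strictly decrease $\mu$. Take four consecutive pairwise adjacent vertices $0,1,2,3$ (so $[0,3]$ is a clique), $C=2$, and the partition $P=\{1,3\}$, $Q=\{0,2\}$, which is a valid $[2,2]$-partition with $\mu(P)=\mu(Q)=1$. The unique hole of $P$ is at $w=2\in Q$, and $u=\max P=3$; swapping gives $P'=\{1,2\}$, $Q'=\{0,3\}$ with $\mu(P')=0$, $\mu(Q')=2$, so $\mu$ is unchanged. (If instead you had picked the hole of $Q$ the swap works, but nothing in your argument singles out that choice.) Thus the minimality contradiction does not go through without a sharper rule for selecting the part and the hole, and your fallback choices are not precise enough to rescue it. The clique-intersection invariance is likewise left as an ``anticipated'' case analysis; since removing $w$ can disconnect $Q$ and adding $u$ can enlarge its span, you really do need a proof there, not a plan.

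The paper sidesteps both difficulties by a different exchange. It chooses the \emph{leftmost} non-consecutive part $P_1$ and the smallest terminal vertex $i\in P_1$ (with $i+1\in P_2\neq P_1$), and then does not swap a single vertex but re-cuts the whole union $P_1\cup P_2$ into two new parts $P_1',P_2'$: either at the boundary $i\,|\,i+1$, or, if too many vertices lie to the right of $i$, by taking the $C$ rightmost vertices of $P_1\cup P_2$ as $P_2'$. Progress is measured not by $\mu$ but by the position of the leftmost terminal vertex, which strictly moves right. The clique-intersection check is then a short argument: any maximal clique meeting both $P_1'$ and $P_2'$ must contain the boundary vertices of the cut, and using paths inside the old $P_1$ and $P_2$ (which interleave across the cut) one exhibits an edge of each old part spanning the cut, so the clique already met both $P_1$ and $P_2$. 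You would do well to abandon the single-vertex swap and adopt this ``cut the union'' move; the key insight you are missing is that repartitioning $P_1\cup P_2$ as contiguous blocks of that union (rather than shuffling one vertex) makes both the progress and the clique-intersection invariance essentially immediate.
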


\begin{proof}
A $[\lambda, C]$-block partition is also a $[\lambda, C]$-partition. Now
suppose $G$ has a $[\lambda, C]$-partition $\Pi$.  If the parts in $\Pi$
also satisfy the consecutiveness constraint, we are done.  So assume
not.  We convert $\Pi$ to a new partition $\Pi'$ that also satisfies
the consecutiveness constraint.  The conversion is done by exchanging
vertices among the parts in $\Pi$, step-by-step, as follows.

We call a vertex $u$ to be \emph{terminal} if $u$ and some $v >
u+1$ belong to one part but $u+1$ belongs to a different part,
\emph{non-terminal} otherwise.  Let $P_1$ be the leftmost part whose
vertices are not consecutive.  Let $i\in P_1$ be smallest terminal vertex
such that $i+1$ is in some $P_2\ne P_1$, and there exists $i+k\in P_1$
for some $k>1$.  We will show how to repartition $P=P_1\cup P_2$ into
parts $P_1'$ and $P_2'$ such that in the new partition, each vertex
in the range $[1,i]$ is a non-terminal vertex. Then by repeating this
process all vertices can be made non-terminal and hence consecutiveness
constraint will be satisfied.  Note that $P$ is connected as both $P_1,
P_2$ are connected and $P_2$ has a vertex in between two vertices of
$P_1$. There are two cases.

Case 1: There are at most $C$ vertices in $P$ to the right of $i$.
In this case we set $P_2'$ to be the vertices in $P$ to the right
of $i$, and the $P_1'$ to be the vertices in $P$ to the left of and
including $i$.  Clearly, $i$ is no more a terminal vertex.  Since $P$
is connected, the vertices of $P$ considered in the canonical ordering
form a path. $P_1',P_2'$ are formed by breaking this path in the middle,
so $P_1',P_2'$ are both connected.  Let $Q$ be any maximal clique
which intersects $P_1',P_2'$.  Since we know that the vertices of $Q$
are consecutive, and $i$ is the rightmost vertex in $P_1'$ and $i+1$
the leftmost vertex in $P_2'$, the vertices $i,i+1$ must be in $Q$.
Thus $Q$ intersects $P_1,P_2$ as well.  All other parts intersecting $Q$
remain unchanged, so the number of parts intersected by $Q$ is the same
in the new partition as the old.

Case 2: There are more than $C$ vertices in $P$ to the right of $i$.
In this case we set $P_2'$ to be the $C$ rightmost vertices in $P$,
and the remaining go to $P_1'$.  As before we see that~$i$ is no more
a terminal vertex and $P_1',P_2'$ satisfy the connectedness property.
Consider a maximal clique $Q$ that intersects $P_1',P_2'$.  We show
that it must intersect the same number of parts in the new partition
as the old.  $Q$ must contain the rightmost vertex $u$ of $P_1'$ and
leftmost vertex $v$ of $P_2'$.

Note first that $P_1'$ contains both $i,i+1$, \ie it has at least
one vertex from $P_1$ and one vertex from $P_2$.  But $P_2'$ has $C$
vertices, so they cannot all be from $P_1$, or all from $P_2$ because
both $P_1,P_2$ had at most $C$ vertices each.  Thus $P_2'$ also contains
at least one vertex $j$ from $P_1$ and one vertex $k$ from $P_2$.
Since $i,j\in P_1$, there must be a path in $P_1$ from $i$ to $j$.
There must exist an edge $(u',v')$ in this path such that $u'\le u$,
and $v\le v'$ (see Fig.~\ref{fig:cli}).  Since $Q$ is maximal, it must
contain $u',v'$.  Thus $Q$ intersects $P_1$.  In a similar manner, we
see that it must intersect $P_2$.  Thus it follows that $Q$ intersects
the same number of parts in the old and new partitions.
\end{proof}

\tikzstyle{vertex}=[draw, thick, circle, minimum size=4mm]
\tikzstyle{maxcliq}=[draw, thick, ellipse, inner sep=0pt]

\begin{figure}[h!tb]
 \centering
 \begin{tikzpicture}
  \node[draw, circle, minimum size=4mm] (i) at (1cm,0cm) {};
  \node[draw, circle, minimum size=4mm] (ii) at (2cm,0cm) {};
  \node[draw, circle, minimum size=4mm] (u) at (4cm,0cm) {};
  \node[draw, circle, minimum size=4mm] (v) at (6cm,0cm) {};
  \node[draw, circle, minimum size=4mm] (vp) at (9cm,0cm) {};
  \node[draw, circle, minimum size=4mm] (k) at (8cm,0cm) {};
  \node[draw, circle, minimum size=4mm] (j) at (11cm,0cm) {};
  \node[above=2mm] at (i) {$i=u'$};
  \node[above=2mm] at (ii) {$i+1$};
  \node[above=2mm] at (u) {$u$};
  \node[above=2mm] at (v) {$v$};
  \node[above=2mm] at (vp) {$v'$};
  \node[above=2mm] at (k) {$k$};
  \node[above=2mm] at (j) {$j$};

  \draw[thick, decorate,decoration={brace,raise=8mm,amplitude=6pt}] (i.center) -- node[above=10mm] {$P_1'$} ++(3,0);
  \draw[thick, decorate,decoration={brace, raise=8mm,amplitude=6pt}] (v.center) -- node[above=10mm] {$P_2'$} ++(5,0);

  \draw[thick] (i.south) .. controls +(down:8mm) and +(down:8mm) .. (vp.south);
  \draw[thick] (vp.south) .. controls +(down:4mm) and +(down:4mm) .. (j.south);
  \draw[thick] (ii.south) .. controls +(down:5mm) and +(down:5mm) .. (k.south);
 \end{tikzpicture}
 \caption{Sketch showing clique intersection remains unchanged}
 \label{fig:cli}
\end{figure}
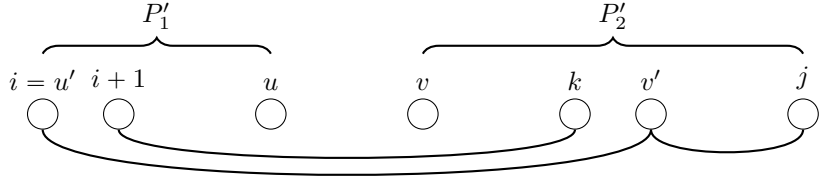

There is a simple example of a general (non-proper) interval graph
where Lemma~\ref{lem:partition_eqv_block} does not work. Consider the
example graph given by the intervals in canonical ordering: $a=[1,9],
b=[2,5], c=[3,6], d=[4,12], e=[7,10], f=[8,11]$. It has two maximal
cliques $Q_1=\{a,b,c,d\}, Q_2=\{a, d, e, f\}$. For $C=2$, the optimal
partition $\{\{a,d\}, \{b,c\}, \{e,f\}\}$ has clique intersection $2$ but
the part $\{a,d\}$ is not a block as $a,d$ are not consecutive according
to canonical ordering. All block partitions have clique intersection $3$
or more. The reason is as follows. If $a$ is the only vertex in a part
then to cover the remaining $3$ vertices of $Q_1$ we need at least 2
more parts. On the other hand if $a$ is paired with $b$ then to cover
the remaining $3$ vertices of $Q_2$ we need at least 2 more parts.

Since for a PIG, the notions of partition and block partition are
equivalent, in the rest of the paper we will abuse the notation $[\lambda,
C]$-partition to actually mean a $[\lambda, C]$-block partition in the
context of PIGs.

\begin{lemma}
Given a PIG $G$ with an interval representation, if there is an $O(f(n))$
algorithm to solve the (block) partition problem on $G$, then there
is an $O(n+f(n))$ algorithm to solve the coloring problem on $G$.
\label{lem:color_using_part}
\end{lemma}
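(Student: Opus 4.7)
My plan is to combine the two equivalences established earlier, Lemmas~\ref{lem:partition_eqv_block} and \ref{lem:coloring_eqv_partition}, with an explicit $O(n)$ implementation of the partition-to-coloring step tailored to PIGs. First I would run the given $O(f(n))$ algorithm to obtain a $[\lambda,C]$-block partition $\Pi=\{P_1,\ldots,P_t\}$, indexed so that $P_j=[a_j,b_j]$ with $a_1<a_2<\cdots<a_t$, and from the interval representation precompute, in $O(n)$ time, the canonical ordering together with $R_v$, the rightmost neighbor of each $v$. The remaining task is then to assign a color to each part so that parts sharing a clique of $G$ receive different colors, using only $\lambda$ colors, in $O(n)$ time.

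The central observation I would prove is a clean characterization of when two parts conflict. Using the umbrella property (Proposition~\ref{pro:propervorder}) and the fact that every clique in a PIG is a block, I would show that for $j<k$, parts $P_j$ and $P_k$ share a clique of $G$ if and only if $R_{b_j}\ge a_k$: sufficiency is immediate since $[b_j,a_k]$ is then a clique, while necessity follows because any edge with one endpoint in $P_j$ and one in $P_k$ forces, by umbrella, an edge from $b_j$ to that endpoint in $P_k$. Associating with each $P_j$ the interval $[a_j,R_{b_j}]$ therefore realises the conflict relation on $\Pi$ as an interval graph. Moreover, $s$ pairwise conflicting parts with extremes $j_1<j_s$ satisfy $R_{b_{j_1}}\ge a_{j_s}$ by monotonicity, so the single block $[b_{j_1},a_{j_s}]$ of $G$ is a clique intersecting all $s$ of them; together with the converse direction this shows that the conflict interval graph has clique number exactly $\lambda$.

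Given this representation, I would colour the parts by the textbook sweep-based greedy interval colouring: because both $a_j$ and $R_{b_j}$ are non-decreasing in $j$, I process parts in order of $a_j$ while maintaining a FIFO queue of still-active parts ordered by $R_{b_j}$ and a stack of freed colours, assigning each new part a colour popped from the stack (or a fresh one if the stack is empty) and returning colours to the stack as parts expire. This runs in $O(1)$ amortised per part and therefore $O(t)=O(n)$ overall, and by the clique-number computation it uses exactly $\lambda$ colours. Propagating each part's colour to its vertices is a final $O(n)$ pass, giving a valid $[\lambda,C]$-coloring (by Lemma~\ref{lem:coloring_eqv_partition}) in $O(n+f(n))$ total time.

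The main obstacle is the two-way conflict characterization and the matching clique-number argument in the middle paragraph; once those are in place, the linear-time implementation is routine interval-graph bookkeeping.
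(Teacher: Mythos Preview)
Your proof is correct, but it takes a considerably more elaborate route than the paper's.  The paper's argument is a one-liner: having obtained the block partition $P_1,\ldots,P_t$ in canonical order, simply assign colour $(i-1)\bmod\lambda+1$ to $P_i$.  If two same-coloured parts $P_i$ and $P_{i'}$ with $i'\ge i+\lambda$ were joined by an edge $(u,v)$, then by the umbrella property $[u,v]$ is a clique of $G$ intersecting every one of $P_i,P_{i+1},\ldots,P_{i'}$, i.e.\ at least $\lambda+1$ parts, contradicting the clique-intersection bound.  No conflict graph, no $R_v$ array, no greedy sweep is needed.

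Your conflict-interval-graph construction works, and the two key claims (the $R_{b_j}\ge a_k$ characterisation and the clique-number bound) are sound, although the phrase ``by monotonicity'' is a slight misattribution: $R_{b_{j_1}}\ge a_{j_s}$ follows directly from $P_{j_1}$ conflicting with $P_{j_s}$, not from monotonicity of $R$.  What you gain is a more general-purpose reduction to interval colouring; what you lose is brevity.  Indeed, your own observation that both endpoints $a_j$ and $R_{b_j}$ are non-decreasing in $j$ means the conflict intervals form a \emph{proper} interval family, for which round-robin colouring by left-endpoint order is already optimal --- which is exactly the paper's shortcut, arrived at without the intermediate construction.
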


\begin{proof}
We first get the canonical ordering of the vertices using the procedure
given in Proposition~\ref{pro:algocanon}. Suppose the partition algorithm
returns a partition with clique intersection $\lambda$ and the parts
sorted in canonical ordering are $P_1, P_2, \ldots, P_t$. For each $1 \le
i \le t$, we assign color $(i-1) \bmod \lambda +1$ to $P_i$. This is a
valid coloring because otherwise, there is an edge $(u,v)$ between two
parts of same color implying the clique $[u,v]$ in $G$ intersects more
that $\lambda$ parts which is not possible in a $[\lambda,C]$-partition.
\end{proof}


\section{An LP Based Algorithm for Block-partition on PIGs}
\label{sec:lp}

Let $G=(V,E)$ be a PIG with vertices in $V$ already sorted in canonical
ordering and $\mathcal{Q}$ be the set of maximal cliques. Let $Left(Q)$
denote the leftmost vertex of $Q$. Then partition problem on $G$ can be
formulated as the following integer linear program:
\begin{align}
 \text{\pilp:} \quad \text{min} \quad \lambda  & \nonumber \\
 \text{s.t.} \quad
 x_n & = 1 \label{eq:x1}\\
 \sum_{j=i}^{i+C-1} x_j &\ge 1 && 1 \le i \le n-C+1
 \label{eq:size}\\
 \sum_{j=Left(Q)+1}^{Left(Q)+|Q|-1} x_j & \le
 \lambda-1 && \forall\; Q \in \mathcal{Q}  \label{eq:clique}\\
 x_j & \in \{0, 1\} && 1 \le j \le n \label{eq:01} \\
 \lambda & \quad \text{integer} \label{eq:int}
\end{align}
where $x_j$ is a binary variable to denote if vertex $j$ is the rightmost
vertex of a block and $\lambda$ denotes maximum clique intersection by
any clique. Constraint~\eqref{eq:x1} ensures that some block must end
at $n$. Constraints~\eqref{eq:size} ensure that among $C$ consecutive
vertices there must be at least one vertex which is the rightmost vertex
of a block because a block has size at most $C$. Since a clique $Q$
intersects at most $\lambda$ blocks, constraints~\eqref{eq:clique}
ensure that the vertices in $Q$, except the rightmost, can include the
rightmost vertices of at most $\lambda-1$ blocks. The objective is to
minimize the maximum clique intersection $\lambda$.

Let \plp be the LP relaxation of \pilp obtained by making $x_j$
a real variable in $[0,1]$ and making $\lambda$ unconstrained.

\begin{lemma}
 If $x,\lambda$ is a fractional solution to \plp then it can be rounded
 to a integer feasible solution $\bar{x},\bar{\lambda}$ in polynomial time.
 \label{lem:rounding}
\end{lemma}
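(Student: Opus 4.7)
The plan is to round deterministically by taking integer crossings of the prefix sums of $x$. Explicitly, set $S_0=0$ and $S_j=\sum_{k=1}^{j} x_k$ for $1\le j\le n$, and then define
\[
\bar{x}_j \;=\; \lfloor S_j\rfloor-\lfloor S_{j-1}\rfloor,\qquad \bar{\lambda}\;=\;\lceil\lambda\rceil.
\]
Since each $x_k\in[0,1]$, consecutive floors differ by at most one, so every $\bar{x}_j\in\{0,1\}$, and $\bar{\lambda}$ is integer by construction. The procedure is one left-to-right sweep, so it runs in $O(n)$ time once the LP has been solved, which is polynomial overall.

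To verify feasibility I would handle each constraint in turn. For \eqref{eq:x1}: $x_n=1$ gives $S_n=S_{n-1}+1$, so $\bar{x}_n=\lfloor S_{n-1}+1\rfloor-\lfloor S_{n-1}\rfloor=1$. For \eqref{eq:size}: telescoping yields $\sum_{j=i}^{i+C-1}\bar{x}_j=\lfloor S_{i+C-1}\rfloor-\lfloor S_{i-1}\rfloor$, and since the fractional solution gives $S_{i+C-1}\ge S_{i-1}+1$, we get $\lfloor S_{i+C-1}\rfloor\ge\lfloor S_{i-1}\rfloor+1$, so the window sum is at least $1$. For \eqref{eq:clique}, writing $\ell=\text{Left}(Q)$ and $r=\ell+|Q|-1$, telescoping again gives $\sum_{j=\ell+1}^{r}\bar{x}_j=\lfloor S_r\rfloor-\lfloor S_\ell\rfloor$. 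I would then combine this with $S_r-S_\ell\le\lambda-1$ via the elementary identity $\lfloor a\rfloor-\lfloor b\rfloor\le\lceil a-b\rceil$ (checked by splitting on whether the fractional parts satisfy $\{a\}\ge\{b\}$ or not), obtaining
\[
\lfloor S_r\rfloor-\lfloor S_\ell\rfloor \;\le\; \lceil\lambda-1\rceil \;=\; \lceil\lambda\rceil-1 \;=\; \bar{\lambda}-1.
\]

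The main obstacle is not the existence of \emph{some} integer rounding but to pay only a single ceiling, i.e.\ to get $\bar{\lambda}=\lceil\lambda\rceil$ rather than something like $\lfloor\lambda\rfloor+2$. Naively placing cuts greedily at every $C$-th vertex ignores the clique bounds, and naive independent rounding of each $x_j$ can dramatically overcount on a long clique. The prefix-sum rounding above is chosen precisely because both families of constraints~\eqref{eq:size} and~\eqref{eq:clique} sum $x_j$ over intervals, so they become telescoping differences of floors, and the sharp bound $\lfloor a\rfloor-\lfloor b\rfloor\le\lceil a-b\rceil$ loses at most one on each interval, which is exactly what is needed to absorb the slack into a single ceiling on $\lambda$.
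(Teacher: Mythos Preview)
Your proof is correct and uses essentially the same prefix-sum rounding as the paper: the paper sets $y_j=\sum_{i\le j}x_i$ and $\bar{x}_j=1$ iff $\lceil y_{j-1}\rceil\ne\lceil y_j\rceil$, whereas you use floors, but the telescoping verification of \eqref{eq:x1}--\eqref{eq:clique} is identical. The one difference worth noting is that the paper takes $\bar{\lambda}=\lfloor\lambda\rfloor$ rather than your $\lceil\lambda\rceil$, and then uses $\bar{\lambda}\le\lambda$ immediately after the lemma to conclude that the LP and ILP optima coincide; your choice still proves the lemma as stated, and since the ILP optimum is an integer at least $\lambda$, your $\lceil\lambda\rceil$ would also yield that optimality conclusion.
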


\begin{proof}

Consider the following rounding scheme which takes $O(n)$ time. We use
a set of intermediate variables $y_0, y_1, \ldots, y_n$. We set
\begin{align*}
  y_0 = 0, \quad y_j = \sum_{i=1}^j x_i, \quad \bar{\lambda}=\floor{\lambda} \quad \text{ and } \quad \bar{x}_j = \left\{ \begin{array}{ll} 1 & \text{if} \; \ceil{y_{j-1}} \ne \ceil{y_j} \\ 0 & \text{otherwise} \end{array} \right. \text{for all} \; 1 \le j \le n.
\end{align*}

Note that each $\bar{x}_j$ is a 0-1 variable and $\bar{\lambda}$ is an
integer. Since $x_1=1$, by construction $\bar{x}_1=1$. Hence $\bar{x}$
satisfies constraint \eqref{eq:x1}.

Now we prove that $\bar{x}$ satisfies the constraints in
\eqref{eq:size}. Since $x$ satisfies $j$th of such constraints,
$x_j + x_{j+1} + \ldots + x_{j+C-1}  \ge 1$, \ie $y_{j+C-1}-y_{j-1}
\ge 1$. So there must be at least one index $k$ in $[j, j+C-1]$ such
that $\ceil{y_{k-1}} \ne \ceil{y_k}$ implying that $\bar{x}_k=1$. Thus
$\bar{x}$ also satisfies the $j$th constraint in \eqref{eq:size}.

Finally we prove that $\bar{x},\bar{\lambda}$ satisfy constraints in
\eqref{eq:clique} too. Consider the constraint for clique $Q$ and let
$j=Left(Q)$. Since $x$ satisfies this constraint, $x_{j+1} + x_{j+2}
+ \ldots + x_{j+|Q|-1} \le \lambda-1$, \ie $y_{j+|Q|-1} - y_j \le
\lambda-1$. So there can be at most $\lambda-1$ indices $k$ in $[j+1,
j+|Q|-1]$ such that $\ceil{y_{k-1}} \ne \ceil{y_k}$ implying that at
most $\floor{\lambda}-1$ of the corresponding $\bar{x}_{k}$s are set
to $1$. Thus $\bar{x}, \bar{\lambda}$ also satisfy the constraint for
clique $Q$.
\end{proof}

Clearly $\bar{\lambda} \le \lambda$. Since the integer
objective value $\bar{\lambda}$ cannot be strictly less than
the fractional value $\lambda$, $\bar{\lambda}=\lambda$ and hence
$\bar{x},\bar{\lambda}$ optimally solve \pilp. Thus by solving \plp
and rounding the solution using the procedure given in the proof of
Lemma~\ref{lem:rounding} gives a polynomial time algorithm for the
partition problem.


\section{A Combinatorial Algorithm for Block-partition on PIGs}
\label{sec:comb}

We now give a combinatorial algorithm for the partition problem on
PIGs. The algorithm does not use LP scaffolding and hence is more
efficient.

\subsection{Lower Bound}
\label{ssec:lb}

\begin{lemma}
If a PIG $G$ has a $[\lambda, C]$-partition then $\lambda \ge
\floor{(\omega(G)+C-1)/C}$.  \label{lem:lb}
\end{lemma}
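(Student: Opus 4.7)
The plan is a short pigeonhole argument using the clique intersection constraint. Let $Q$ be a maximum clique of $G$, so $|Q| = \omega(G)$. Suppose $\Pi = \{P_1, \ldots, P_t\}$ is a $[\lambda, C]$-partition and let $k$ be the number of parts that $Q$ intersects. Since every vertex of $Q$ lies in exactly one part, we have
\begin{equation*}
\omega(G) \;=\; |Q| \;=\; \sum_{P_i \cap Q \ne \emptyset} |Q \cap P_i| \;\le\; \sum_{P_i \cap Q \ne \emptyset} |P_i| \;\le\; kC,
\end{equation*}
using the size constraint $|P_i| \le C$. Hence $k \ge \omega(G)/C$, and since $k$ is an integer, $k \ge \lceil \omega(G)/C \rceil$.

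Next I would invoke the clique intersection constraint: every clique, and in particular $Q$, intersects at most $\lambda$ parts of $\Pi$. Thus $\lambda \ge k \ge \lceil \omega(G)/C \rceil$. Finally, one rewrites the ceiling as a floor via the standard identity $\lceil a/b \rceil = \lfloor (a+b-1)/b \rfloor$ for positive integers $a,b$, which yields $\lambda \ge \lfloor (\omega(G)+C-1)/C \rfloor$, exactly the claimed bound.

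There is no real obstacle here; the only points worth being careful about are (i) distinguishing $|Q \cap P_i|$ from $|P_i|$ in the counting step (so that both bounds $|Q \cap P_i| \le |P_i|$ and $|P_i| \le C$ are applied correctly), and (ii) verifying the floor/ceiling identity, which follows by writing $\omega(G) = qC + r$ with $0 \le r < C$ and checking the cases $r=0$ and $r \ge 1$ separately. Note that the argument does not use the fact that the underlying graph is a PIG — it works for any graph — which is consistent with the bound being a general-purpose lower bound used later in the combinatorial algorithm.
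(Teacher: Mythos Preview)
Your proof is correct and is essentially the same pigeonhole argument the paper gives, just written out in more detail: the paper simply says that covering the $\omega(G)$ vertices of a maximum clique by parts of size at most $C$ requires at least $\lceil \omega(G)/C \rceil$ parts, and then applies the same floor/ceiling identity. Your observation that the PIG hypothesis is not actually used is also in line with the paper, which later reuses this bound for split graphs.
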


\begin{proof}
Let $Q$ be a maximum clique of $G$, \ie $|Q|=\omega(G)$. To cover all
vertices of $Q$ by parts of size at most $C$, we need at least $\lceil
\omega(G)/C \rceil$ parts.  Hence, clique intersection $\lambda \ge
\ceil{\omega(G)/C}=\floor{(\omega(G)+C-1)/C}$.
\end{proof}

There is a simple example where $\lambda = \ceil{\omega/C}$ is not
enough to have a $[\lambda,C]$-partition. Consider the graph given by
the intervals $\{[1,3], [2,5], [4,6]\}$. Here $\omega=2$. For $C=2$,
the number of parts given by lower bound $\ceil{\omega/C}=1$ is not
enough as the single part would contain $3>C$ connected vertices.

\subsection{Upper Bound}
\label{ssec:ub}

\begin{lemma}
If an interval representation is given for a connected PIG $G$ then
there exists an algorithm that produces a $[\ceil{(\omega(G)+C-1)/C},
C]$-partition.  \label{lem:greedy_pig}
\end{lemma}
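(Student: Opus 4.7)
The plan is to run a one-pass greedy block partition on the canonically ordered vertices and then verify the three partition constraints. First, compute the canonical ordering $v_1 \prec v_2 \prec \cdots \prec v_n$ in $O(n)$ time by Proposition~\ref{pro:algocanon}. Then define $P_i = \{v_{(i-1)C+1}, \ldots, v_{\min(iC,n)}\}$ for $i = 1, \ldots, \ceil{n/C}$, i.e.\ the consecutive greedy blocks of size $C$ (with a possibly smaller final block). The size constraint is immediate by construction, and the connectedness constraint follows from Corollary~\ref{cor:edge_seq}: since $G$ is connected, any two vertices consecutive in the canonical order are adjacent, so each $P_i$ induces a path in $G$ and is connected.

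The main step will be establishing the clique intersection bound. By the umbrella property (Proposition~\ref{pro:propervorder}), every clique $Q$ of $G$ occupies $|Q|$ consecutive positions $[u, u+|Q|-1]$ in the canonical order. The number of greedy blocks that this interval meets equals $\ceil{(u+|Q|-1)/C} - \ceil{u/C} + 1$, which is maximized when $u$ sits at the right end of a block (write $u = aC$ for some $a \ge 1$); in that worst case the count simplifies to $1 + \ceil{(|Q|-1)/C}$. Since $|Q| \le \omega$ for every clique, every clique therefore meets at most $1 + \ceil{(\omega-1)/C}$ parts, and the elementary ceiling identity $1 + \ceil{(\omega-1)/C} = \ceil{(\omega+C-1)/C}$ yields exactly the claimed bound.

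I expect the only real (mild) obstacle to be the worst-case alignment bookkeeping combined with the ceiling identity above; both are purely routine once the greedy scheme is fixed. No optimization across partitions is required---plain left-to-right greedy suffices---and, given the canonical order from Proposition~\ref{pro:algocanon}, the partition is emitted in an additional $O(n)$ time, for a total running time of $O(n)$.
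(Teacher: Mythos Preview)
Your proposal is correct and matches the paper's proof almost exactly: the same greedy consecutive-blocking procedure (called \simp there), the same appeal to Corollary~\ref{cor:edge_seq} for connectedness, and the same clique-intersection bound, which the paper derives contrapositively (a clique meeting $\lambda$ consecutive blocks must have at least $(\lambda-2)C+2$ vertices, whence $\lambda\le\floor{(\omega+2C-2)/C}=\ceil{(\omega+C-1)/C}$). One minor wording point: an arbitrary clique need not occupy \emph{consecutive} positions, but it suffices to bound the intersection for maximal cliques, which are blocks by the umbrella property---the paper makes the same tacit identification.
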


\begin{proof}
Consider the following algorithm which we call \simp. We first
arrange the vertices of $G$ in the canonical ordering time using
Proposition~\ref{pro:algocanon}. Then we assign the block of vertices
$[(i-1)C+1, \min\{n,iC\}]$ to part $P_i$ for each $1 \le i \le
\ceil{n/C}$.
  
Each part $P_i$ produced by \simp clearly has consecutive vertices and has
size at most $C$. Since $G$ is connected, by Corollary~\ref{cor:edge_seq},
there is an edge between $v_j$ and $v_{j+1}$ for all $j=1, 2, \ldots,
n-1$.  Hence $P_i$ is also connected. Thus it will be enough to show that
any clique intersects at most $\lambda=\ceil{(\omega+C-1)/C}$ parts.

If a clique $Q$ intersects $\lambda$ parts $P_{i}, \ldots,
P_{i+\lambda-1}$, then $Q$ must contain at least one vertex of
each of $P_{i}$ and $P_{i+\lambda-1}$ and all vertices of remaining
parts $P_2,\ldots,P_{i+\lambda-2}$. So the minimum size of $Q$ is
$1+(\lambda-2)C+1 = \lambda C -2C +2$. Thus $\omega \ge \lambda C -2C
+2$. Hence $\lambda \le (\omega+2C-2)/C$. This implies $\lambda \le
\floor{(\omega+2C-2)/C} = \ceil{(\omega+C-1)/C}$.
\end{proof}

There is a simple example where \simp does not give the optimal
partition. Consider the graph given by the intervals $a=[1,6], b=[2,7],
c=[3,10], d=[4,11], e=[5,12], f=[8,13], g=[9,14]$. It has two maximal
cliques $\{a,b,c,d,e\}, \{c,d,e,f,g\}$ and $\omega=5$. For $C=3$,
\simp produces the partition $\{\{a,b,c\}, \{d,e,f\}, \{g\}\}$ which has
clique intersection $3$. But there exists a better partition $\{\{a,b\},
\{c,d,e\}, \{f,g\}\}$ with clique intersection $2$.

However, a close analysis reveals that \simp is not that bad. In fact,
when $\omega(G)=kC+1$ for some integer $k$, the two bounds match and hence
\simp gives the optimal solution. Again for other values of $\omega(G)$,
which can be represented as $kC+r$ for integer $k,r$ such that $2\le r\le
C$, the two bounds are $k+1$ and $k+2$ respectively, hence differ by $1$
and one of the two bounds is optimum. Thus it will be enough to solve
the following special case of the problem.
\begin{definition}
\emph{Partition subproblem:} given a PIG $G$ with $\omega(G)=kC+r$, $k$
integer and $2\le r\le C$, check if there is a $[k+1,C]$-partition and
if so, generate the partition.
\end{definition}
If there is an algorithm \alg for the partition subproblem then
we apply \alg to $G$ to check if there is a $[k+1,C]$-partition. If
yes, \alg also gives the required partition. Otherwise, \simp gives an
optimal solution.

In the rest of the paper we will let $k(G)$, or in short $k$, denote
$\floor{(\omega(G)-1)/C}$.

\subsection{Forbidden vertices}
\label{ssec:direct}

The key idea in our algorithm is to first identify those vertices that
cannot be right endpoints of a block in a possible $[k+1,C]$-partition.

\begin{definition}
A vertex $i$ in a PIG is said to be \emph{primarily forbidden} if the
block $[i-kC,i+1]$ is a clique.
\end{definition}

\begin{lemma}
If the vertex $i$ in a PIG is primarily forbidden then no block in a $[k+1,C]$-partition can end at $i$.
\label{lem:forbid1}
\end{lemma}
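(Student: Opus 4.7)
The plan is to suppose for contradiction that some block $B$ of a $[k+1,C]$-partition $\Pi$ has $i$ as its right endpoint, and then show that the clique $Q=[i-kC,i+1]$ (which exists as a clique by the hypothesis that $i$ is primarily forbidden) is intersected by at least $k+2$ parts of $\Pi$. This will contradict the clique-intersection constraint of a $[k+1,C]$-partition.

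First I would note the basic size facts: $|Q|=kC+2$, which is consistent with $\omega(G)\ge kC+2$ (recall $k=\lfloor(\omega(G)-1)/C\rfloor$ and $\omega(G)=kC+r$ with $2\le r\le C$). Since $\Pi$ is a block partition and $B$ ends at $i$, the next part $B'$ of $\Pi$ must begin at $i+1$; in particular $B'\ne B$ and $B'\cap Q=\{i+1\}$, so $B'$ contributes one part intersecting $Q$. The block $B$ itself has the form $[j,i]$ with $j\ge i-C+1$, so it contributes a second part intersecting $Q$, and it covers exactly the rightmost $s:=i-j+1\in\{1,\dots,C\}$ vertices of $Q$ other than $i+1$.

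The heart of the argument is to bound from below the number of further parts needed to cover the remaining vertices $[i-kC,j-1]$ of $Q$ lying strictly to the left of $B$. A direct count gives $(j-1)-(i-kC)+1=kC-s+1$ such vertices, and since $\Pi$ is a block partition with all parts of size at most $C$, covering them needs at least $\lceil(kC-s+1)/C\rceil=k$ additional parts (the ceiling equals $k$ for every $s$ in the range $1\le s\le C$). These $k$ parts are disjoint from $B$ and from $B'$, so adding them to $B$ and $B'$ yields at least $k+2$ distinct parts of $\Pi$ meeting the clique $Q$, contradicting the assumed $[k+1,C]$-partition.

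There is no real obstacle here beyond bookkeeping; the only place one must be careful is the ceiling calculation, to verify that the boundary case $s=C$ still forces $k$ (not $k-1$) additional parts — but since $kC-s+1\ge kC-C+1>(k-1)C$, the ceiling is indeed $k$ in every case. Thus the entire proof reduces to the observation that requiring a block boundary between $i$ and $i+1$ inside a clique of size $kC+2$ wastes one ``slot'' of clique-intersection that the optimal target $k+1$ cannot afford.
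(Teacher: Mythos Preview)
Your proof is correct and follows essentially the same pigeonhole argument as the paper: once a block boundary sits between $i$ and $i+1$, the $kC+2$ vertices of the clique $[i-kC,i+1]$ cannot be covered by $k+1$ blocks of size at most $C$. The paper's version is marginally more direct---it simply notes that after reserving one block for $i+1$, the remaining $kC+1$ vertices $[i-kC,i]$ would have to be covered by at most $k$ blocks, which is impossible---whereas you additionally split off $B$ and carry out the ceiling computation, but the underlying idea is identical.
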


\begin{proof}
Suppose a block in a $[k+1,C]$-partition ends at the vertex $i$. Consider
the clique $[i-kC,i+1]$ which must be covered by at most $k+1$ blocks. To
cover the vertex $i+1$ we need one block. Then the remaining $kC+1$
vertices $[i-kC,i]$ must be covered by at most $k$ blocks. This is not
possible as the size of a block is at most~$C$.
\end{proof}

\begin{definition}
A vertex in a PIG is \emph{forbidden} if it is primarily forbidden or
secondarily forbidden, where secondarily forbidden vertices are defined
recursively as follows.  If (a) there exists a block of forbidden vertices
$[v-s+1,v]$ where $1 \le s \le C-1$, and (b) the block $[v-kC,v-s+1]$
is a clique then the set of vertices $P(v)=\{v-qC \mid 1 \le q \le
k\}$ is \emph{secondarily forbidden}.  Further, we will say that $v$
is the \emph{leader} of all secondarily forbidden vertices in $P(v)$
and~$v$ itself. Similarly each secondarily forbidden vertex in $P(v)$
is a \emph{follower} of $v$.
\end{definition}

Note that a primarily forbidden vertex is the leader of
itself. Furthermore, any forbidden vertex $i$ has a leader $i+qC$ where
$q$ is an integer and $0 \le q \le k$.

\begin{lemma}
If the vertex $i$ in a PIG is secondarily forbidden then no block in a
$[k+1,C]$-partition can end at $i$.  \label{lem:forbid2}
\end{lemma}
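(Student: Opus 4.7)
The plan is to prove the claim by strong induction on the position $i$, taken from right to left: Lemma~\ref{lem:forbid1} handles the primarily forbidden base case, and for a secondarily forbidden $i = v - qC$ the forbidden vertices $v - s + 1, \ldots, v$ appearing in the recursive definition all lie strictly to the right of $i$ (since $q \ge 1$ and $s \le C - 1$ give $qC \ge C > s - 1$), so the inductive hypothesis applies to them.

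Suppose for contradiction that some block $B_0 = [a_0, i]$ of a $[k+1, C]$-partition ends at $i = v - qC$. By the inductive hypothesis (together with Lemma~\ref{lem:forbid1} for any primarily forbidden members) no block in the partition ends at any vertex of $[v - s + 1, v]$. Walking the block containing $v - s + 1$ through $v - s + 2, \ldots, v$ repeatedly forbids it from ending at each of them, so that block must actually extend to some $r^* \ge v + 1$. Call this block $B^*$; since $|B^*| \le C$, its left endpoint $a^*$ satisfies $a^* \ge v - C + 2$, while $|B_0| \le C$ forces $a_0 \ge v - (q + 1)C + 1$.

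The key step is to count blocks intersecting the clique $Q = [v - kC, v - s + 1]$ via two disjoint sub-ranges. The interval $[v - qC + 1, a^* - 1]$, lying strictly between $B_0$ and $B^*$, has size at least $(q - 1)C + 1$, so covering it requires at least $q$ blocks. The interval $[v - kC, a_0 - 1]$, lying strictly to the left of $B_0$ and inside $Q$, has size at least $(k - q - 1)C + 1$ and requires at least $k - q$ further blocks. Adding $B_0$ and $B^*$ themselves (distinct from one another and from the above since $v - qC < v - C + 2 \le a^*$), one gets at least $(k - q) + 1 + q + 1 = k + 2$ blocks intersecting $Q$, contradicting the defining property of a $[k + 1, C]$-partition.

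The main obstacle is the boundary case $q = k$, in which $B_0$ may extend past the left end of $Q$ so that the ``left'' sub-range collapses. I would absorb this by noting that the ``between'' sub-range then becomes $[v - kC + 1, a^* - 1]$ of size at least $(k - 1)C + 1$, forcing at least $k$ blocks there and still producing the total $1 + k + 1 = k + 2$. A further subtlety worth flagging is that a naive pigeonhole over all of $Q$ yields only $\ge k + 1$ blocks; the argument genuinely needs the disjoint two-range decomposition, which is where the combined information ``$B_0$ ends exactly at $v - qC$'' and ``$B^*$ starts no earlier than $v - C + 2$'' is used.
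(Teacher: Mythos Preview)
Your proof is correct and follows essentially the same approach as the paper's. Both arguments fix the leader $v = i + qC$, consider the clique $Q = [v-kC, v-s+1]$, use that no block can end in the forbidden block $[v-s+1,v]$ to push the covering block $B^*$ past $v$, and then count blocks on either side of $i$ inside $Q$ to reach $k+2$.

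The one substantive difference is that you make the inductive structure explicit, whereas the paper simply asserts ``since the vertices $[i+qC-s+1,i+qC]$ are forbidden, they must be covered by a single block'' without flagging that this relies on the very statement being proved (applied to forbidden vertices strictly to the right of $i$). Your right-to-left strong induction, together with the observation that $qC \ge C > s-1$ forces those vertices to lie to the right of $i$, closes that gap cleanly. Your separate treatment of the boundary case $q=k$ is also more careful than the paper's uniform arithmetic, though the paper's counting happens to work there too (the range $[v-kC,i]$ degenerates to a single vertex).
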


\begin{proof}
The leader of $i$ is $i+qC$ where $1 \le q \le k$. Let $Q$ be the clique
$[i+qC-kC, i+qC-s+1]$. Since the vertices $[i+qC-s+1,i+qC]$ are forbidden,
they must be covered by a single block, say $B$. The block $B$ must end at
a vertex on the right of $i+qC$. In the best case $B$ ends at $i+qC+1$
and covers $C-s$ vertices of $Q$, \ie $[i+(q-1)C+2,i+qC-s+1]$. The
remaining $(k-1)C+2$ vertices $[i-(k-q)C,i+(q-1)C+1]$ of $Q$ must be
covered by at most $k$ blocks.

Now suppose a block of a $[k+1,C]$-partition ends at the vertex $i$. To
cover the $(k-q)C+1$ vertices $[i-(k-q)C,i]$ we need at least $(k-q)+1$
blocks. The remaining $(q-1)C+1$ vertices $[i+1,i+(q-1)C+1]$ must be
covered by at most $q-1$ blocks. This is not possible as the size of a
block is at most~$C$.
\end{proof}

Our algorithm is as follows.  We mark all forbidden vertices, and then
try to form blocks by a greedy left to right strategy.

\subsection{Marking forbidden vertices}
\label{ssec:direct}

The algorithm for marking forbidden vertices is given in
Algorithm~\ref{alg:mark}.  The algorithm assumes that we are given an
array $\lmn$, where $\lmn(i)$ denotes the leftmost neighbor of $i$.
It is easily seen that $\lmn$ can be computed in $O(n)$ time given an
interval representation of the input graph.  The algorithm constructs
the array $F$, where $F(i) = 1$ if and only if $i$ is forbidden.

\begin{algorithm}[h!tb]
 \SetKwInOut{input}{Input}
 \SetKwInOut{output}{Output}
 \SetKw{KwDownTo}{downto}
 \input{$\lmn(1,\ldots,n)$ for a PIG $G=(V,E)$}
 \output{$F(1,\ldots,n)$}
 \lForEach{$i=1~\emph{\KwTo}~n$}{ $F(i) = \Ldist(i) = 0$}\;
 \ForEach(\tcc*[f]{phase~1}){$i = n~\emph{\KwDownTo}~1$}{
  \If{$\lmn(i) \le i-kC-1$}{
   $F(i-1) = 1$\;
  }
 }
 $\Rnf(n) = n$\tcc*{$\Rnf(i)$ is rightmost non-forbidden vertex $j$ where $j \le i$}
 \ForEach(\tcc*[f]{phase~2}){$i = n-1~\emph{\KwDownTo}~1$}{
  $\Rnf(i) = \min\{i,\Rnf(i+1)\}$\;
  \lWhile(\tcc*[f]{Extend $\Rnf(i)$}){$F(\Rnf(i)){==}1$}{$\Rnf(i)=\Rnf(i)-1$}
  \If(\tcc*[f]{$i$ is a follower}){$(F(i){==}1)$ and $(\Ldist(i) \le (k-1)C)$}{
   $F(i-C) = 1;\quad$
   $\Ldist(i-C) = \Ldist(i)+C$\;
  }
  \If(\tcc*[f]{$i$ is a leader}){$\lmn(\Rnf(i)+1) \le i-kC$}{
   $F(i-C) = 1;\quad$
   $\Ldist(i-C) = C$\;
  }
 }
\caption{\mark}
\label{alg:mark}
\end{algorithm}

In phase~1, the primarily forbidden vertices are marked. We check at each
vertex $i$ if there is a clique $[i-kC-1,i]$ of size $kC+2$, that is, if
$\lmn(i) \le i-kC-1$, then we mark the primarily forbidden vertex $i-1$
by setting $F(i-1)=1$.

In phase~2, we mark the secondarily forbidden vertices. It is enough to
identify all the leaders because their followers are all the secondarily
forbidden vertices. We identify the leaders and mark their followers in
an interleaved manner in a single traversal through the vertices. If
$i$ is identified as a leader, we mark its rightmost follower $i-C$
immediately, we mark the second rightmost follower $i-2C$ when we visit
$i-C$, and so on.

Note that the rightmost vertex $n$ is never forbidden and hence is not
a leader. So in phase 2, we visit each vertex $i$ starting with the
second rightmost.  We check if the vertex is a follower of a already
discovered leader.  This is easy to do, we merely check if $i$ is a
follower of a vertex that is not far, \ie at most a distance $(k-1)C$
from $i$. Then we mark $i-C$. For this we maintain the auxiliary array
$\Ldist(1,\ldots,n)$ where $\Ldist(i)$ is the distance of $i$ from
its leader if $i$ is secondarily forbidden, unspecified otherwise.
Then we also check if $i$ is itself a leader.  The condition for $i$
being a leader is that there should exist a block of forbidden vertices
$[j,i]$, and a clique $[i-kC,j]$ where $i-j+1<C$.  However, as following
Lemma shows, we can assume without loss of generality that the block of
forbidden vertices is left maximal.

\begin{lemma}
In a PIG the vertex $i$ is a leader if and only if the left maximal
forbidden block at $i$ is $[j,i]$ and the block $[i-kC,j]$ is a clique.
\label{lem:forbid_q}
\end{lemma}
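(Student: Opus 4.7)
The plan is to prove the biconditional directly, using the Umbrella Property of PIGs (which implies that a block $[a,b]$ is a clique if and only if $(a,b) \in E$) together with the recursive definition of forbidden vertices.

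For the forward direction, I will assume $i$ is a leader and show that the left-maximal forbidden block satisfies the stated clique condition. By definition there is some $s \in [1, C-1]$ for which $[i - s + 1, i]$ is a block of forbidden vertices and $[i - kC, i - s + 1]$ is a clique. The left-maximal forbidden block $[j, i]$ satisfies $j \le i - s + 1$ by maximality, so whenever $j \ge i - kC$, the range $[i - kC, j]$ is a contiguous sub-block of the clique $[i - kC, i - s + 1]$ and is therefore itself a clique; when $j < i - kC$ the stated condition is trivially satisfied.

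For the reverse direction, I will assume that the left-maximal forbidden block at $i$ is $[j, i]$ and that $[i - kC, j]$ is a clique, and exhibit a value of $s$ witnessing that $i$ is a leader. I will split on the size of the forbidden block. In the easy case $i - j + 1 \le C - 1$, I take $s = i - j + 1$, so that $[i - s + 1, i] = [j, i]$ is a forbidden block and $[i - kC, i - s + 1] = [i - kC, j]$ is a clique by hypothesis, directly witnessing leadership. In the harder case $i - j + 1 \ge C$, I take $s = C - 1$, so that $[i - C + 2, i] \subseteq [j, i]$ is forbidden, and it remains to show that $[i - kC, i - C + 2]$ is a clique. The crucial observation is that $i - C + 1$ lies in $[j, i]$ and is thus forbidden, so is either primary or secondary forbidden. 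If it is primary forbidden, the clique $[i - (k+1)C + 1, i - C + 2]$ guaranteed by the definition contains both $i - kC$ and $i - C + 2$, and the Umbrella Property then yields that the smaller block $[i - kC, i - C + 2]$ is itself a clique.

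The main obstacle will be the sub-case in which $i - C + 1$ is only secondary forbidden: its leader $L = (i - C + 1) + qC$ for some $q \in [1, k]$ then lies strictly to the right of $i$. The plan will be to invoke the clique $[L - kC, L - s' + 1]$ associated with $L$'s leadership and combine it with the hypothesized clique $[i - kC, j]$ and the Umbrella Property to conclude that $i - kC$ is adjacent to $i - C + 2$. This is most naturally carried out by induction on $n - i$, processing vertices right-to-left as the algorithm \mark does, so that the equivalence being proved is already available at $L$ (which is to the right of $i$) by the time it is needed at $i$.
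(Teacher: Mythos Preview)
Your forward direction and the easy sub-case of the reverse direction are correct and match the paper exactly: the paper's entire $[\Leftarrow]$ argument is the one-line observation that when the left-maximal forbidden block $[j,i]$ has size at most $C-1$, taking $s=i-j+1$ in the definition of leader works immediately.

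The hard case ($i-j+1\ge C$), however, cannot be completed, because the biconditional as literally stated is \emph{false} there. Take $k=1$, $C=3$, and a PIG on $\{1,\ldots,15\}$ whose maximal cliques are $[v,v{+}1]$ for $1\le v\le 8$ together with $[9,13]$ and $[10,15]$; then $\omega=6$ and $k=\lfloor 5/3\rfloor=1$. The primarily forbidden vertices are $12,13,14$; each is a leader with $s=1$, making $9,10,11$ secondarily forbidden. No vertex among $9,10,11$ becomes a leader in turn, since none of $[6,9],[7,9],[7,10],[8,10],[8,11]$ is a clique. Hence the forbidden set is exactly $\{9,\ldots,14\}$. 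At $i=11$ the left-maximal forbidden block is $[9,11]$ (size $3=C$) and $[i-kC,j]=[8,9]$ is a clique, yet $11$ is \emph{not} a leader: that would require $[8,10]$ or $[8,11]$ to be a clique, and neither is.

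Your Subcase~2 plan fails on precisely this example. Here $i-C+1=9$ has leader $L=12$, whose associated clique $[L-kC,L-s'+1]=[9,12]$ yields the edge $(9,10)$, while the hypothesis gives $(8,9)$; but the Umbrella Property only lets you infer edges \emph{inside} a known edge, never splice two short edges into a longer one, so the needed edge $(8,10)$ does not follow. No induction on $n-i$ can repair this. The lemma should be read---and is only ever applied in the paper---under the tacit side condition $i-j+1\le C-1$; drop the hard case rather than attempt to prove it.
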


\begin{proof}
$[\Leftarrow]$ If the block $[i-kC,j]$ is a clique and $[j,i]$ is a
forbidden block then clearly $i$ is a leader.

$[\Rightarrow]$ If $i$ is a leader then there exists $j'$ such that $0 \le
i-j' < C-1$, the block $[j',i]$ is forbidden and the block $[i-kC,j']$
is a clique. If $[j',i]$ is not left maximal then let $[j,i]$ be the
left maximal forbidden block at $i$. Then $j<j'$. Hence, $[i-kC,j]$
is a subclique of $[i-kC,j']$.
\end{proof}

Now, checking if a vertex is a leader is easy.  We need to know the
leftmost endpoint $j$ of a block of forbidden vertices ending at $i$, and
whether a clique starting at $i-kC$ ends at $j$. The leftmost endpoint
of the forbidden block need not be calculated afresh for every $i$.
If the leftmost endpoint $j$ of the forbidden block ending at $i$
is already calculated, then we only need to check if the forbidden
block extends further on the left of $j$, when considering the vertex
$i$. For this we maintain the auxiliary array $\Rnf(1,\ldots,n)$ where
$\Rnf(i)$ is the rightmost non-forbidden vertex such that $\Rnf(i)\le
i$. The elements of $\Rnf$ can be recursively computed as follows:
$\Rnf(i)$ is the rightmost non-forbidden vertex $x$ such that the block
$[x+1,\min\{i,\Rnf(i+1)\}]$ is forbidden. To make sure the existence of
$\Rnf(i)$ for all $1 \le i \le n$, We assume without loss of generality
an imaginary vertex numbered $0$ that is not forbidden.

\begin{lemma}
If the array $\lmn$ for a PIG $G$ is given, the algorithm
\mark correctly marks the forbidden vertices of $G$ in $O(n)$ time.
\label{lem:mark}
\end{lemma}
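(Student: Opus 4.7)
The plan is to establish correctness and the $O(n)$ running time separately, with correctness proved by reverse induction on the phase~2 loop index.

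The phase~1 analysis is direct. By the umbrella property (Proposition~\ref{pro:propervorder}), the block $[i-kC,i+1]$ is a clique if and only if $\lmn(i+1)\le i-kC$, so the phase~1 test $\lmn(i)\le i-kC-1$ sets $F(i-1)=1$ on precisely the primarily forbidden vertices.

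For phase~2 I would carry out reverse induction on $i=n,n-1,\ldots,1$ using three invariants: $\Rnf(i)$ equals the rightmost non-forbidden vertex $\le i$; every forbidden vertex $j\ge i$ has $F(j)=1$; and $F(j)=1$ implies that $j$ is forbidden, with $\Ldist(j)$ equal to the distance from $j$ to its leader whenever $j$ is secondarily forbidden. The base case is immediate since $n$ is never forbidden. For the inductive step, the assignment $\Rnf(i)=\min\{i,\Rnf(i+1)\}$ followed by the decrement-while loop produces the correct $\Rnf(i)$ provided every forbidden vertex strictly between the final $\Rnf(i)$ and $i+1$ is already marked. The leader test is then justified via Lemma~\ref{lem:forbid_q}: the left-maximal forbidden block at~$i$ is $[\Rnf(i)+1,i]$, and the condition that $[i-kC,\Rnf(i)+1]$ is a clique translates by the umbrella property to $\lmn(\Rnf(i)+1)\le i-kC$, which is exactly the algorithm's test. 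The follower-propagation step then extends an already-detected leader's marking chain $i\mapsto i-C$ using $\Ldist$, so that each element $v-qC$ of $P(v)$ is marked just before the loop reaches it.

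For the running time, phase~1 and the non-loop work in each phase~2 iteration are $O(1)$ per step, summing to $O(n)$. For the decrement-while loop I would use an amortization argument: once the loop has set $\Rnf(i)$ to some value~$v$, the initial value of $\Rnf$ at every subsequent (smaller-$i$) iteration is at most~$v$, so the loop never re-inspects a vertex that it has previously decremented past. Hence the total number of decrements over the whole execution is at most~$n$.

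The main obstacle will be the correctness step. Specifically, I expect the delicate point to be showing that whenever the decrement-while loop at iteration~$i$ inspects a vertex $v\le i$, $v$ is correctly marked if and only if it is forbidden. This requires combining the phase~2 propagation order---each leader's chain emits one new follower marking per iteration---with the observation that any forbidden vertex within the reach of the while loop at iteration~$i$ is either primarily forbidden (already handled in phase~1) or else has its leader strictly to the right of~$i$, and so has already been marked in a previous phase~2 iteration. Once this is pinned down, verification of the invariants and of the leader/follower bookkeeping is routine.
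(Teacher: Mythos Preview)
Your proposal is correct and follows essentially the same route as the paper: reverse induction on the phase~2 index, with the leader test justified via Lemma~\ref{lem:forbid_q} and the umbrella property, and the follower chain propagated one step per iteration. The paper states its invariant as ``$F$ is correct on $[\max\{1,i-C+1\},n]$ at the start of iteration~$i$,'' which directly covers the vertices the while loop may touch; your invariant is stated on $[i,n]$ and you then argue separately that anything the while loop reaches below $i$ is already marked. These amount to the same thing once you observe that a secondarily forbidden vertex $v$ is marked at iteration $v+C$, so the while loop never needs correct $F$-values below $i-C+1$. Your amortized $O(n)$ argument for the inner while loop is more explicit than the paper's, which simply asserts the bound from the pseudocode.
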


\begin{proof}
It is easy to see that phase~1 of \mark correctly computes correct values
of $F$ for primarily forbidden vertices.  Let us refer each iteration of
the loop in phase~2 by the corresponding value of $i$. We now claim that,
at the beginning of iteration~$i$ in phase~2, the array $F$ contains
correct values for all vertices in the range $[\max\{1,i-C+1\},n]$. This
in turn proves the correctness of \mark. We show by induction
on~$i$.

For $i=n$ there cannot be any secondarily forbidden vertex in the range
$[i-C+1,n]$. Hence the claim is trivially true. For $i \le n$ assume at
the beginning of iteration~$i$, the array $F$ contains correct values in
the range $[i-C+1,n]$ (for simplicity we assume $i \ge C+1$, the cases
$i\le C$ can be shown similarly). The vertex $i-C$ is a secondarily
forbidden vertex if and only if it is either the right most follower of
$i$ or it is a follower of some vertex on the right of $i$. In \mark we
handle the second case first and update $F$ accordingly. After adjusting
$\Rnf(i)$ suitably, $[\Rnf(i)+1,i]$ correctly denotes the left maximal
forbidden block at $i$ because by induction hypothesis the forbidden
vertices in the range $[i-C+1,i]$ are already marked. If there is
a clique $[i-kC,j+1]$ then $i$ is a leader and we update $F$ for the
rightmost follower $i-C$ of $i$. Hence \mark correctly computes if $i-C$
is a forbidden vertex in iteration~$i$. Thus at the end of iteration $i$,
\ie at the beginning of iteration $i-1$, the array $F$ contains correct
values in the range $[(i-1)-C+1,n]$.

The pseudocode of Algorithm~\ref{alg:mark} clearly shows that \mark
takes overall $O(n)$ time.
\end{proof}

\subsection{Algorithm \combp}
\label{ssec:direct}

We now give our algorithm to solve the partition subproblem. The algorithm
first marks all forbidden vertices and then forms blocks greedily such
that no block ends at a forbidden vertex. We call this algorithm \combp,
which is shown in Algorithm~\ref{alg:combp}.

\begin{algorithm}[h!tb]
 \SetKwInOut{input}{Input}
 \SetKwInOut{output}{Output}

 \input{A PIG $G=(V,E)$}
 \output{If $G$ has a $[k+1, C]$-partition; if \yes also output such a partition}

 $F(1,\ldots,n)=$ array returned by \mark on $G$; $\quad u=1$\;
 \While{$u\le n$}{
 $v=\min\{u+C-1, n\}$\;
   \lWhile{$(v \ge u)$ and $(F(v) == 1)$}{$v = v-1$}\;
   \lIf{$v<u$}{\Return{\no}} \lElse{create part $[u,v]$; $u=v+1$}\;
 }
 \Return{\yes}\;
\caption{\combp}
\label{alg:combp}
\end{algorithm}

\begin{lemma}
Let $B$ be any block, except the rightmost, created by \combp. Let
$u$ be the leftmost vertex of $B$. Then the block of vertices
$[u+jC+|B|,u+jC+(C-1)]$ are forbidden for $0 \le j \le k-1$.
\label{lem:forbid4}
\end{lemma}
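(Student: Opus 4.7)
The case $j=0$ is immediate and follows directly from the inner while loop of \combp together with the correctness of \mark (Lemma~\ref{lem:mark}). Since the loop initialises $v=\min\{u+C-1,n\}$ and decrements $v$ while $F(v)=1$, every vertex of $[v+1,u+C-1] = [u+|B|,u+C-1]$ must be forbidden when the loop exits, which is precisely the $j=0$ tail.

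For $j\ge 1$ I plan to proceed by strong induction on $j$. Assuming the tails for $0,1,\ldots,j-1$ are already known to be forbidden, I would show that every $w\in[u+jC+|B|,\,u+jC+C-1]$ is forbidden by exhibiting an explicit leader witness, i.e.\ by verifying the two conditions in the recursive definition of \emph{secondarily forbidden} (together with Lemma~\ref{lem:forbid_q}). The natural candidates for such leaders sit at positions $w+pC$ for suitable $p\in\{1,\ldots,k\}$. For the ``left-maximal forbidden block'' condition I would use the forbidden vertices already supplied by the inductive hypothesis in the earlier tails; for the clique condition I would \emph{lift} the clique that originally certifies $u+C-1$'s forbiddenness — either the primary clique $[u-(k-1)C,\,u+C]$ or the clique attached to its own leader — by $pC$ positions to the right, using the umbrella property of PIGs (Proposition~\ref{pro:propervorder}) to verify that the shifted interval is indeed a clique.

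The main obstacle is the direction mismatch: the definition of \emph{forbidden} propagates information \emph{leftward} (a leader forbids vertices at distances $C,2C,\ldots,kC$ to its left), whereas the lemma demands \emph{rightward} propagation from $T_0$ to $T_1,\ldots,T_{k-1}$. The resolution should come from combining (i) the clique of size $\ge kC+2$ that must sit near $u+C-1$ in order for $u+C-1$ to be forbidden in the first place, (ii) the umbrella property, which lets such a clique be shifted rightward provided the relevant leftmost-neighbour relations are maintained, and (iii) the structural fact that $u$ is the leftmost vertex of a \combp block — so the previous block ended exactly at $u-1$, constraining the clique pattern around $u$. I expect the boundary case $j=k-1$ to be the most delicate: there the leader sits close to the right end of the relevant clique window, and one must argue carefully that the required clique of size $kC+2$ still fits within $G$ without violating the bound $\omega(G)=kC+r\le(k+1)C$.
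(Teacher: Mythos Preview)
Your plan has a genuine gap at the step where you propose to ``lift'' a clique rightward by $pC$ positions using the umbrella property. The umbrella property (Proposition~\ref{pro:propervorder}) says only that an edge $(a,c)$ forces edges $(a,b)$ and $(b,c)$ for every $a\prec b\prec c$; it lets you \emph{fill in} a block into a clique, not translate one. Knowing that $[u-(k-1)C,\,u+C]$ is a clique gives no information whatsoever about whether $[u-(k-1)C+pC,\,u+C+pC]$ is a clique, so the inductive step as sketched cannot produce the clique witness you need for the leader of a vertex in tail $T_j$.

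The paper circumvents this obstacle by a different decomposition. Rather than inducting on $j$ (the row), it inducts on $t\in[|B|,C-1]$ (the column, i.e.\ the offset within each tail), \emph{decreasing} from $t=C-1$. The crucial auxiliary fact is that $u-1$ is \emph{not} forbidden (it is the right endpoint of the previous \combp block, or $B$ is leftmost). For the base case $t=C-1$, this forces the leader of the forbidden vertex $u+C-1$ to sit at the maximal possible distance, namely at $u+kC+(C-1)$: were the leader any closer, one of its followers would land on $u-1$. Once $u+kC+(C-1)$ is known to be a leader, all of $u+jC+(C-1)$ for $0\le j\le k-1$ are its followers and hence forbidden --- no clique shifting is required. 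The inductive step from $t'$ to $t'-1$ argues by contradiction: if the leader of $u+t'-1$ were at $u+qC+t'-1$ with $q<k$, one combines its left-maximal forbidden block with the followers already supplied by the leaders $u+kC+t',\ldots,u+(k+1)C-1$ (induction hypothesis) to form a longer forbidden block; a \emph{subclique} of the original clique (not a shifted one) then certifies $u+(q+1)C-1$ as a leader, and one of its followers turns out to be $u-1$, a contradiction.

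So the two structural ingredients missing from your sketch are (i) the pivot on the non-forbidden vertex $u-1$, which pins every leader at exact distance $kC$, and (ii) the observation that once a leader is pinned there, it simultaneously forbids all rows $j=0,\ldots,k-1$ in its column, making induction on $j$ unnecessary.
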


\begin{proof}
If $|B|=C$ then the block $[u+jC+|B|,u+jC+(C-1)]$ is empty and
hence the lemma is vacuously true. So we assume $|B| < C$. Note that
$u+|B|,\ldots,u+C-1$ are all forbidden because otherwise \combp would
have created the block $B$ of bigger size. Also note that either $B$
is the leftmost block or $u-1$ is the rightmost vertex of a block. So
without loss of generality we assume that $u-1$ is not forbidden.

It will be enough if we prove that for each $|B| \le t \le C-1$ the
vertex $u+kC+t$ is a leader because then its followers $P(u+kC+t)$,
\ie $u+t, u+C+t, \ldots, u+(k-1)C+t$ are all forbidden. We prove by
induction on $t$.

Base case: $t=C-1$. Since $u+C-1$ is forbidden, its leader is the
vertex $v=u+C-1+qC$ for some $0 \le q \le k$ and the followers of $v$,
the vertices in $P(v)$, are forbidden. But $u-1$ is not forbidden, \ie
$u-1 \notin P(v)$. Hence $u-1 < v-kC$, implying $q>k-1$. Thus $q=k$,
which implies our claim.

Induction case: suppose the claim is true for $t=t'$ where $|B| < t'
\le C-1$, \ie the vertex $u+kC+t'$ is a leader. We need to prove that
$u+kC+t'-1$ is also a leader.

Since $u+t'-1$ is forbidden, its leader is the vertex
$v=u+t'-1+qC$ for some $0\le q\le k$. Thus, if $q=k$
then we are done.  So assume that $q<k$, \ie $q=k-z, 1 \le z \le k$.
We will show that this leads to a contradiction.

Since $v$ is a leader, by Lemma~\ref{lem:forbid_q}, for some vertex $x$,
the block $F_1=[x,v]$ is the left maximal forbidden block at $v$ and
the block $Q=[v-kC,x]$ is a clique. Again, by induction hypothesis,
each of the vertices $[u+kC+t',u+(k+1)C-1]$ is a leader. Thus,
the set of vertices $F_2=[u+qC+t',u+(q+1)C-1]$ is forbidden. But
$F_2$ can be rewritten as $[v+1,u+(q+1)C-1]$. Thus $F= F_1\cup F_2 =
[x,u+(q+1)C-1]$ is a left maximal forbidden block at $u+(q+1)C-1$. By
applying Lemma~\ref{lem:forbid_q} to $Q$ and $F$, the vertex $u+(q+1)C-1$
is a leader. Among its followers, $P(u+(q+1)C-1)$, the $z$th
from the left is $u+(q-k+z)C-1=u-1$. This is a contradiction because $u-1$
is not forbidden.
\end{proof}

\begin{lemma}
 Suppose \combp creates consecutive blocks $B_1, \ldots, B_{k+1}$. Let $u$
 be the leftmost vertex of $B_1$. Let $s_i=C-|B_i|$ and $S_i=\sum_{j=1}^i
 s_j$. Then the $S_k$ consecutive vertices $[u+kC-S_k,u+kC-1]$
 are all forbidden.  \label{lem:forbid3}
\end{lemma}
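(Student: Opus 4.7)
The plan is to apply Lemma~\ref{lem:forbid4} separately to each of the first $k$ blocks $B_1, \ldots, B_k$, each time with a carefully chosen offset~$j$, and then take the union of the resulting forbidden ranges. Since $B_{k+1}$ exists, every $B_i$ for $i \le k$ is not the rightmost block created by \combp, so Lemma~\ref{lem:forbid4} applies.

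First I would compute the position of the leftmost vertex of $B_i$. Because the blocks are consecutive and $|B_j| = C - s_j$, the first $i-1$ blocks cover exactly $(i-1)C - S_{i-1}$ vertices, so the leftmost vertex $u_i$ of $B_i$ equals $u + (i-1)C - S_{i-1}$. Then Lemma~\ref{lem:forbid4}, applied to $B_i$ with offset $j = k - i$ (which lies in the required range $[0, k-1]$ for $i \in \{1, \ldots, k\}$), asserts that the vertices
\[
[\, u_i + (k-i)C + |B_i|,\; u_i + (k-i)C + (C-1) \,]
\]
are all forbidden. Substituting $u_i = u + (i-1)C - S_{i-1}$ and $|B_i| = C - s_i$, this range simplifies to $[u + kC - S_i,\; u + kC - S_{i-1} - 1]$, a block of exactly $s_i$ consecutive forbidden vertices.

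Finally I would observe that these ranges telescope: as $i$ runs from $1$ to $k$, the intervals $[u + kC - S_i,\; u + kC - S_{i-1} - 1]$ are adjacent and their union is $[u + kC - S_k,\; u + kC - 1]$ (using $S_0 = 0$), which is exactly the claimed set of $S_k$ forbidden vertices.

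The proof is really just careful index bookkeeping once Lemma~\ref{lem:forbid4} is in hand, so I do not anticipate a genuine obstacle; the only thing to be careful about is to check the range condition $0 \le k - i \le k - 1$ on the offset and to verify that none of $B_1, \ldots, B_k$ is the rightmost block produced by the algorithm, both of which follow immediately from the hypothesis that \combp went on to create $B_{k+1}$.
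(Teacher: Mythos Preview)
Your proposal is correct and is essentially the same argument as the paper's: both apply Lemma~\ref{lem:forbid4} to each of $B_1,\ldots,B_k$ with the offset $j=k-i$, obtain the intervals $[u+kC-S_i,\,u+kC-S_{i-1}-1]$, and telescope their union. Your index bookkeeping is in fact slightly cleaner than the paper's, which has a minor slip in labeling the offset (it writes ``$j=k-i+1$'' while the computation it performs corresponds to $j=k-i$).
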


\begin{proof}
For $1 \le i \le k$, let $u_i$ be the leftmost vertex of block $B_i$. Note
$u_i=u+(i-1)C-S_{i-1}$  where $S_0=0$.

Applying Lemma~\ref{lem:forbid4} to $B_i$ for all $1 \le i \le k$ and
considering the set of consecutive forbidden vertices $F_i$ corresponding
to $j=k-i+1$ we get $F_i = [u_{i}+(k-i+1)C-s_i,u_{i}+(k-i+1)C-1] =
[u+kC-S_{i},u+kC+S_{i-1}-1]$.

The set $\cup_{i=k}^1 F_i$ is indeed the required set of consecutive
forbidden vertices.
\end{proof}

\begin{lemma}
 If an interval representation for a PIG $G$ is given then \combp
 correctly solves the partition subproblem on $G$ in $O(n)$ time.
 \label{lem:combp}
\end{lemma}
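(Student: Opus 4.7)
The plan is to split the argument into three parts: correctness when \combp returns \yes, correctness when it returns \no, and the $O(n)$ time bound.

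For the \yes direction, suppose \combp outputs parts $P_1,\ldots,P_T$. Each $P_i$ is a block of consecutive vertices, so by Corollary~\ref{cor:edge_seq} it induces a connected subgraph, and by construction $|P_i|\le C$. The substantive task is the clique-intersection bound. I would argue by contradiction: if some maximal clique $Q=[a,b]$ meets $k+2$ blocks of \combp, then $k+1$ block endpoints lie in $[a,b-1]$. When $|Q|=kC+r$ with $2\le r\le C$, one checks directly that $b-1,b-2,\ldots,b-r+1$ are all primarily forbidden, since for each such $i$ the block $[i-kC,i+1]$ lies inside $Q$ and is therefore a clique. So the $k+1$ endpoints in fact lie in $[a,b-r]$, which has only $kC$ vertices. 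Applying Lemma~\ref{lem:forbid4} to the $k+1$ consecutive \combp-blocks whose endpoints fall in this range, together with the block of forbidden positions each of them forces inside $Q$, would pin one of those endpoints onto a forbidden vertex, contradicting \combp's selection rule.

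For the \no direction, I would use a greedy-exchange argument. Let $v_1<\cdots<v_t$ be \combp's endpoints before failure, and suppose for contradiction that a $[k+1,C]$-partition $\Pi$ exists with endpoints $\bar v_1<\cdots<\bar v_s=n$; by Lemmas~\ref{lem:forbid1} and~\ref{lem:forbid2} every $\bar v_i$ is non-forbidden. Induction on $i$ shows $\bar v_i\le v_i$ for $1\le i\le t$. Base: $\bar v_1$ is a non-forbidden vertex in $[1,C]$ and $v_1$ is the largest such, so $\bar v_1\le v_1$. Step: $\bar v_i\le \bar v_{i-1}+C\le v_{i-1}+C$, and any vertex in $(v_i,v_{i-1}+C]$ is forbidden (otherwise \combp would have chosen a larger $v_i$), so the non-forbiddenness of $\bar v_i$ forces $\bar v_i\le v_i$. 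The failure condition says $(v_t,v_t+C]$ is entirely forbidden, and continuing the same induction for $j>t$ gives $\bar v_j\le v_t$ for every $j\ge t$, contradicting $\bar v_s=n>v_t$.

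For the time bound, Lemma~\ref{lem:mark} runs \mark in $O(n)$ and already produces the array $\Rnf$. Replacing the inner while loop of \combp by the $O(1)$ assignment $v=\Rnf(\min\{u+C-1,n\})$ makes each outer iteration constant time, and since $u$ strictly increases after each block the outer loop runs at most $n$ times, giving total cost $O(n)$.

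The main obstacle is the clique-intersection bound in the \yes direction: the \no direction is a clean greedy exchange, and the time bound is immediate once $\Rnf$ is available, but ruling out a clique that meets $k+2$ blocks requires carefully combining Lemma~\ref{lem:forbid4} with the forced primarily-forbidden vertices near the right end of $Q$.
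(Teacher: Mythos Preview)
Your \no\ direction and the time bound are fine. The paper's \no\ argument is shorter --- if \combp returns \no\ then some block $[u,u+C-1]$ is entirely forbidden, and any valid block containing $u$ must end at a non-forbidden vertex $\ge u+C$, hence has size $>C$ --- but your exchange argument is correct as well. Using $\Rnf$ to make the inner loop $O(1)$ is a clean way to certify the $O(n)$ bound.

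The \yes\ direction has a genuine gap. First, writing $|Q|=kC+r$ with $2\le r\le C$ conflates the size of the offending clique with $\omega(G)$. Once you pass (as the paper does) to a minimal sub-clique having exactly one vertex in each of $B_0$ and $B_{k+1}$, its size is $2+\sum_{i=1}^k|B_i|\le kC+2$; the boundary case $|Q|=kC+2$ forces every $|B_i|=C$ and then the right endpoint of $B_k$ is primarily forbidden, so the real work is the case $|Q|\le kC+1$, where your primarily-forbidden observation contributes nothing. Second, Lemma~\ref{lem:forbid4} alone does \emph{not} pin any \combp endpoint onto a forbidden vertex. Aggregating Lemma~\ref{lem:forbid4} over $B_1,\ldots,B_k$ (this is Lemma~\ref{lem:forbid3}) yields the forbidden block $[u+kC-S_k,\,u+kC-1]$ with $S_k=\sum_i(C-|B_i|)$ and $u$ the leftmost vertex of $B_1$; but the \combp endpoints are $e_0=u-1,\ e_1=u+|B_1|-1,\ \ldots,\ e_k=u+kC-S_k-1$, and $e_k$ sits \emph{immediately to the left} of that forbidden block --- none of the $e_i$ lies inside it. The step you are missing, and which is the heart of the paper's proof, is to feed this forbidden block back into the definition of secondarily forbidden: the block $[u+kC-S_k,\,u+kC-1]$ has size $S_k\in[1,C-1]$, and the minimal $Q$ is precisely the clique $[u-1,\,u+kC-S_k]=[(u+kC-1)-kC,\,(u+kC-1)-S_k+1]$, so $u+kC-1$ is a leader and its follower $u-1=e_0$ is secondarily forbidden --- contradicting that \combp chose $e_0$ as a block endpoint. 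Without that closing step the argument does not reach a contradiction.
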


\begin{proof}
If \combp outputs NO, then there is a set of $C$ consecutive forbidden
vertices. To cover these vertices we need a block of size at least
$C+1$. So there cannot be any valid partition. Hence \combp is correct.

Now we prove that if \combp outputs \yes then the partition generated
is a valid partition.  Since the algorithm generates blocks of size at
most $C$, the size constraint is satisfied. We only need to prove that
no clique intersects more than $k+1$ blocks generated by \combp. We
prove this by contradiction.

Suppose there is a clique $Q$ that intersects $k+2$ blocks $B_0, B_1,
\ldots, B_{k+1}$. Without loss of generality, we assume that only
the leftmost vertex of $Q$ is covered by $B_0$ and only the rightmost
vertex of $Q$ is covered by $B_{k+1}$. Because, otherwise we can take a
sub-clique $Q' \subset Q$ with this property. Also let $v$ be the leftmost
vertex of $Q$, \ie rightmost vertex of $B_0$ and hence not forbidden. Let
$s_i=C-|B_i|$ and $S_i=\sum_{j=1}^i s_j$ for $1 \le i\le k$. Let $u$ be the
leftmost vertex of $B_1$.

Note $|Q| \le kC+1$ because otherwise $v$ would be forbidden. By
Lemma~\ref{lem:forbid3} there are $S_k$ consecutive forbidden
vertices. Since the algorithm outputs \yes, $S_k < C$. Hence $|Q| = 2 +
\sum_{i=1}^k |B_i| = 2 + \sum_{i=1}^k (C-s_i) = 2+kC-S_k > (k-1)C+2$.

So $Q$ has size $kC+2-S_k$ where $1 \le S_k \le C-1$ and by
Lemma~\ref{lem:forbid3}, there are $S_k$ consecutive forbidden vertices
$[u+kC-S_k,u+kC-1]=[v+kC+1-S_k,v+kC]$. By Lemma~\ref{lem:forbid2}, $v$
is forbidden.  It is a contradiction.

If an interval representation is given then we can easily find the
maximal cliques and hence can compute the array $\lmn$ in $O(n)$
time. By Lemma~\ref{lem:mark} the marking of forbidden vertices takes
time $O(n)$. The greedy procedure for generating the parts also takes
$O(n)$ time. Overall time taken is $O(n)$.
\end{proof}

Combining Lemma~\ref{lem:color_using_part}, Lemma~\ref{lem:combp} and
the discussions at the end of the subsection~\ref{ssec:ub} we get a
proof of Theorem~\ref{thm:pig}.


\section{Algorithm for Splittable Weighted Problem on PIGs}
\label{sec:splittable}

\begin{lemma}
A weighted graph $G=(V,E,W)$ is $[\lambda,C]$-split colorable if and only
if $WXP(G)$ is $[\lambda,C]$-colorable.  \label{lem:wtexpand}
\end{lemma}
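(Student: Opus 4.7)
The plan is to prove the two directions of the equivalence separately.

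For the $(\Leftarrow)$ direction, I would observe that by the definition of $WXP$, assigning unit weight to every vertex of $WXP(G)$ produces a weight-split graph of $G$. Hence any $[\lambda, C]$-coloring of $WXP(G)$ already serves as a $[\lambda, C]$-split coloring of $G$, so $G$ is $[\lambda, C]$-split colorable.

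For the $(\Rightarrow)$ direction, suppose $G$ has some weight-split graph $G' = WSP(G)$ carrying a $[\lambda, C]$-coloring $\mathcal{C}'$. I would construct a coloring $\mathcal{C}$ of $WXP(G)$ by further splitting each $v' \in V(G')$ of weight $W'(v')$ into $W'(v')$ unit-weight vertices, and giving each such copy the color that $\mathcal{C}'$ assigns to $v'$. A short check shows that this refinement of $G'$ coincides with $WXP(G)$ as an unweighted graph: both constructions produce $W(v)$ copies of each $v \in V(G)$, and in both, two copies are adjacent if and only if their parents in $V(G)$ are distinct and adjacent in $G$. Thus $\mathcal{C}$ is a well-defined coloring of $WXP(G)$ using at most $\lambda$ colors, and the only remaining task is to show that every chromon of $\mathcal{C}$ has size at most $C$.

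To do so, let $H$ be a chromon in $(WXP(G), \mathcal{C})$ and let $U \subseteq V(G')$ be the set of $G'$-vertices having at least one copy in $H$. Since $H$ only contains copies of vertices in $U$, we get $|H| \le \sum_{v' \in U} W'(v')$. I would then argue that $U$ is contained in a single chromon of $(G', \mathcal{C}')$, which gives $\sum_{v' \in U} W'(v') \le C$ and hence $|H| \le C$. Monochromaticity of $U$ is immediate from the color assignment. For connectedness of $U$ in the relevant color class of $G'$, for any $u', v' \in U$ I would pick copies of them inside $H$, connect these copies by a path in $H$, and project the path to $G'$ via the parent map. Consecutive vertices of the path in $WXP(G)$ must have distinct parents (copies of the same $v' \in V(G')$ form an independent set in $WXP(G)$) and those parents are adjacent in $G'$ (by the $WSP$ edge rule), so the projection is a monochromatic walk in $G'$ from $u'$ to $v'$.

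The main obstacle is the path-projection argument, which relies on two features of the $WSP$ construction used together: copies of the same parent are independent, while copies of distinct adjacent parents form a complete bipartite subgraph. Everything else in the proof reduces to unpacking the definitions of $WSP$, $WXP$, and $[\lambda, C]$-split colorability.
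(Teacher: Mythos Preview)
Your proof is correct and follows essentially the same approach as the paper: both directions hinge on the observation that the unit-weight refinement of any weight-split graph $G'=WSP(G)$ is (canonically isomorphic to) $WXP(G)$, and that coloring each copy with its parent's color transfers a $[\lambda,C]$-coloring from $G'$ to $WXP(G)$. The paper's $(\Rightarrow)$ direction simply asserts that this assignment yields a $[\lambda,C]$-coloring of $WXP(G)$; your path-projection argument showing that each chromon of $WXP(G)$ projects into a single chromon of $G'$ is a welcome explicit verification of what the paper leaves implicit.
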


\begin{proof}
Let $G' = (V',E') = WXP(G)$.

$[\Leftarrow]$ The weighted graph $G''$, obtained by putting weight $1$
to every vertex of $G'$, is also a weight-split graph of $G$. So if $G'$
is $[\lambda,C]$-colorable then $G''$ and $G$ both are $[\lambda,C]$-split
colorable.

$[\Rightarrow]$ Let the weighted graph $G''(V'',E'',W'')$ be the
weight-split graph corresponding to the $[\lambda,C]$-split coloring
of $G$. Then $G'$ is the weight-expanded graph of $G''$ too. A
$[\lambda,C]$-coloring of $G'$ can be obtained by assigning the vertices
in $G'$ corresponding to a vertex $v''$ in $G''$ the same color of $v''$.
\end{proof}

Thus solving the split coloring problem on a weighted PIG $G=(V,E,W)$ is
equivalent to solving the unweighted coloring problem on $WXP(G)$. In
the rest of the section we will use $G'=(V',E')$ to represent
$WXP(G)$. Applying the algorithm described in Section~\ref{sec:comb} on
$G'$ gives correct result but it makes the algorithm pseudo-polynomial
as it takes $O(n')$ time, proportional to the sum of weights. This is
mainly because the algorithm iterates over each vertex in $G'$.

However, it turns out that iterating over each vertex in $G'$ is not
necessary. The forbidden vertices in $G'$ can be divided into blocks
such that if the vertices $u$ and $v$ are in the same block $b$ then
leader of $u$ and leader of $v$ are in the same block $l$. We call such
forbidden blocks \emph{FB}s. Parallel to the vertices, we say that FB $l$
is the \emph{leader} of FB $b$ and $b$ is the \emph{follower} of $l$. It
can be seen that all vertices in an FB can be marked together. Hence it
is enough to iterate through the FBs instead of iterating through the
vertices of $G'$.

\subsection{Marking forbidden blocks}

We now modify the algorithm presented in Section~\ref{sec:comb} to
let it work with FBs instead of forbidden vertices.  The modified
algorithm to mark all the FBs, which we call \splitm, is shown in
Algorithm~\ref{alg:splitm}.

\begin{algorithm}[h!tb]
 \SetKwInOut{input}{Input}
 \SetKwInOut{output}{Output}
 \SetKw{KwDownTo}{downto}

 \input{Maximal cliques of a PIG $G$, $\lmn(1,\ldots,n)$, $Z(0,\ldots,n)$}
 \output{Doubly linked list of FBs $F$ in $G'=WXP(G)$}

 \ForEach(\tcc*[f]{phase~1}){maximal clique $[u,v]$ in $G$}{
 \If(\tcc*[f]{$[Z(u{-}1){+}1,Z(v)] \in G' \equiv [u,v] \in G$}){$Z(v){-}Z(u{-}1){-}(kC{+}2) \ge 0$}{
    $F.\inlay(Z(v)-1, Z(v)-Z(u-1)-(kC+2)+1, 0)$\;
  }
 }

 $i = F.\nd\ptr\prv$\tcc*{$F.\nd\ptr\prv$ is the rightmost FB}
 \While(\tcc*[f]{phase~2}){$i \ne F.\bg$}{
 $v = i\ptr\rit$\;
 $j = i\ptr\rnf = i\ptr\nxt\ptr\rnf$; \quad \lIf{$v<i\ptr\rnf\ptr\rit$}{$j = i\ptr\rnf = i$}\;
  \lWhile{$j\ptr\prv\ptr\rit == j\ptr\rit-j\ptr\size$}{ $i\ptr\rnf = j$; $j = j\ptr\prv$}\;
  \If(\tcc*[f]{$i$ is a follower block}){$i\ptr\ldist \le (k-1)C$}{
   $F.\inlay(v-C, i\ptr\size, i\ptr\ldist+C)$\;
  }
  $u=i\ptr\rnf\ptr\rit - i\ptr\rnf\ptr\size + 1$; \quad $f = v - u + 1$\;
  $s = (v{-}\lmn(\bar{h}(u)){+}1){-}(kC{+}2)$\tcc*{function $\bar{h}$ is computed using $Z$}
  \If(\tcc*[f]{new leader block ending at $v$}){$f+s \ge 0$}{
   $F.\inlay(v-C, f+s+1, C)$\;
  }
  $i = i\ptr\prv$\;
 }
\caption{\splitm}
\label{alg:splitm}
\end{algorithm}

We use the following correspondence between a vertex $v \in V$ and a
vertex $v' \in V'$. The vertex $v'=h(v,q)$ if $v'$ is the $q$th copy
of $v$ where $1 \le q \le W(v)$ and $v=\bar{h}(v')$ if $v'$ is a copy
of $v$. The set $\{h(v,1), \ldots, h(v,W(v))\}$ of copies of $v$ is
represented by $H(v)$. As usual, we will interchangeably use $1 \le v'
\le n'$ ($1 \le v \le n$) to denote a vertex $v' \in V'$ ($v\in V$)
as well as its position in the canonical ordering of vertices in $G'$
($G$). We also use an auxiliary array $Z(0,\ldots,n)$ such that $Z(0)=0$
and for all $v>0$, the entry $Z(v)$ denotes the rightmost copy of $v$ in
$G'$, \ie $h(v,W(v))$. Since all the copies $h(v,t) \in V'$ of $v\in V$
appear consecutively in the canonical ordering of $G'$, we have $Z(v)=
\sum_{i=1}^{v} W(i)$. Note that given $Z$, the values of the function
$\bar{h}(u)$ for all $u$ belonging to a subset of vertices $S \subseteq
V'$, can be computed in right to left order, in overall $O(|S|+n)$ time.

We store the information about the FBs in a linked list. Thus $F$ is a now
a doubly linked list of non-intersecting FBs sorted according to canonical
ordering. We also keep the information stored in auxiliary arrays $\Ldist$
and $\Rnf$ earlier, in the list $F$ itself. Thus each entry $b$ of $F$ has
the following fields: (i) $\rit$ denotes the rightmost vertex of the FB
$b$, (ii) $\size$ denotes the size of $b$, and (iii) $\ldist$ denotes the
distance of $b.\rit$ from its leader, (iv) $\rnf$ points to the leftmost
FB such that all FBs between $b.\rnf$ and $b$ are consecutive, \ie all
vertices in $[b.\rnf\ptr\rit-b.\rnf\ptr\size+1,b.\rit]$ are forbidden,
(v) $\prv$ points to the FB on the left of $b$, and (vi) $\nxt$ points
to the FB on the right of $b$. Note that we use the notation $p{\ptr}q$
to represent the field $q$ of the FB pointed by the pointer $p$. We keep
two sentinel FBs in $F$ always, (i) the leftmost FB $[-2,-1]$ and (ii)
the rightmost FB $[Z(n)+2, Z(n)+3]$ each having $\ldist=0$ and $\rnf$
pointing to itself. Two pointers $F.\bg$ and $F.\nd$ point to these two
FBs, respectively.

In addition to the standard operations of insert, delete and both way
traversals though the list, we define a new operation on $F$ which we
call $F.\inlay(rt,sz,ld)$. This operation inserts a new FB $b$ with
$b.\rit=rt, b.\size=sz, b.\ldist=ld$ into $F$ but makes sure that the
FBs in $F$ remain non-intersecting and sorted. Let $b_1, \ldots, b_s$
be the FBs in $F$ which intersect $b$. The operation $\inlay$ does the
following: (i) deletes all the FBs in $F$ which are subsets of $b$,
(ii) if $b$ partly intersects $b_1$, \ie if $t=(b.\rit-b.\size - b_1.\rit)
< b_1.\size$ then updates $b_1.\size=b_1.\size-t$, $b_1.\rit=b_1.\rit-t$,
(iii) if $b$ partly intersects $b_s$, \ie if $b_s.\rit>b.\rit$ then updates
$b_s.\size = b_s.\rit - b.\rit$, and (iv) inserts $b$ at its proper
position in $F$.

We slightly modify the definition of the array $\lmn(1,\ldots,n)$. Now
$\lmn(v)$ denotes the leftmost neighbor of $Z(v)$ in $G'$. If the maximal
cliques of $G$ are given, then the elements of $\lmn$ can be computed
in $O(n)$ time.

In phase~1 we mark the FBs due to the primarily forbidden vertices given
by the following lemma:

\begin{lemma}
Let a PIG have a maximal clique $Q$ with rightmost vertex $v$ and let
$s=|Q|-(kC+2) \ge 0$. Then the set of all primarily forbidden vertices
in $Q$ is exactly the FB $[v-s-1,v-1]$.
\end{lemma}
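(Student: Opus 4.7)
My plan is to prove the characterisation by a direct two-inclusion argument, relying on a simple arithmetic parametrisation of $Q$. I would write $Q = [u, v]$, so that the hypothesis $|Q| = kC + 2 + s$ becomes $u = v - (kC+s+1)$, equivalently $u + kC = v - s - 1$. With this bookkeeping, the condition ``$[i-kC, i+1] \subseteq Q$'' collapses into the two inequalities $i - kC \ge u$ and $i + 1 \le v$, which together say exactly $v - s - 1 \le i \le v - 1$.

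For the ``$\supseteq$'' inclusion I would fix $i \in [v-s-1, v-1]$, verify that both endpoints of the block $[i-kC, i+1]$ lie inside $[u, v]$, and conclude that $[i-kC, i+1]$ is a sub-block of the clique $Q$. By the umbrella property (Proposition~\ref{pro:propervorder}) every sub-block of a clique is itself a clique, so $[i-kC, i+1]$ is a clique of size $kC+2$ and $i$ is primarily forbidden by definition. For the ``$\subseteq$'' inclusion I would start with a primarily forbidden $i$ whose defining clique $[i-kC, i+1]$ is contained in $Q$, read off $u \le i - kC$ and $i + 1 \le v$, and deduce $i \in [v-s-1, v-1]$. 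Both halves are therefore essentially one line of arithmetic once the parametrisation is in place.

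The one place that deserves genuine care is the interpretation of the phrase ``primarily forbidden vertices in $Q$'', and I expect this to be the main obstacle to stating the lemma cleanly. Observe that the vertex $v$ itself can be primarily forbidden even though the FB declared by $Q$ stops at $v-1$: this happens exactly when $[v-kC, v+1]$ is a clique contained in a \emph{different}, overlapping, maximal clique $Q^*$ whose rightmost vertex lies strictly to the right of $v$. Such a vertex is then picked up as part of $Q^*$'s contribution and not $Q$'s. The reading of the lemma that is consistent with its role in phase~1 of \splitm is therefore that $[v-s-1, v-1]$ is exactly the set of primarily forbidden vertices whose defining $(kC{+}2)$-clique lies inside $Q$. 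Because every clique extends to some maximal clique, iterating over all maximal cliques in phase~1 and letting \inlay resolve overlaps does mark the complete set of primarily forbidden vertices in $G'=WXP(G)$, which is what is actually required downstream.
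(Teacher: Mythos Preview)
Your proof is correct and follows essentially the same route as the paper: the paper lets $u$ be the leftmost vertex of $Q$, enumerates the size-$(kC{+}2)$ subcliques $Q_p=[u+p,\,u+kC+1+p]$ for $0\le p\le s$, and reads off the corresponding primarily forbidden vertices $u+kC+p$, which is exactly your two-inclusion argument rewritten as a parametrised list. Your closing paragraph on the intended reading of ``primarily forbidden vertices in $Q$'' is a genuine clarification---the paper's proof silently adopts interpretation (b) by restricting attention to \emph{subcliques of $Q$}, and your observation that phase~1 still marks every primarily forbidden vertex (because every $(kC{+}2)$-clique sits inside some maximal clique and \inlay{} merges overlaps) is exactly the justification the paper leaves implicit.
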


\begin{proof}
Let $u$ be the leftmost vertex of $Q$. The only subcliques of $Q$ which
satisfy the conditions of Lemma~\ref{lem:forbid1} are $Q_p = [u+p,
u+kC+1+p]$ for all $0 \le p \le s$. The primarily forbidden vertex for
$Q_p$ is $u+kC+p$. Thus the set of all phase~1 forbidden vertices for $Q$
is $\{u+kC+p, 0 \le p \le s\}$, \ie the FB $[v-s-1,v-1]$.
\end{proof}

So in phase~1 we go through the maximal cliques of $G'$ which have
one-to-one correspondence with the maximal cliques in $G$ and mark the
FBs for the primarily forbidden vertices. Note that the maximal clique
$[u,v] \in G$ corresponds to the clique $[Z(u-1)+1,Z(v)] \in G'$.

It is clear that in phase~2 we need not check for leaders at the vertices
which are not forbidden; checking only at the rightmost vertex in each
FB suffices.  The following lemma determines the size of the leader
block ending at the rightmost vertex in a FB.

\begin{lemma}
Let a PIG have a left maximal block $B$ of forbidden vertices with the rightmost
vertex $v$ and size $f$ where $1 \le f <C$. Let $Q$ be the largest clique
with the rightmost vertex $v-f+1$. Let $s=|Q|-(kC+2)$. Then the leaders in
$B$ are exactly the FB $[v-f-s, v]$.  \label{lem:forbid_q4}
\end{lemma}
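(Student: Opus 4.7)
The plan is to apply the leader characterization of Lemma~\ref{lem:forbid_q} to each vertex of $B$ and translate the resulting clique condition into one expressed via $|Q|$ and~$s$.

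First I would fix notation: write $B=[v-f+1,v]$ and let $\ell$ denote the leftmost vertex of $Q$, so that $Q=[\ell,v-f+1]$ and $|Q|=v-f-\ell+2$; substituting $|Q|=kC+2+s$ yields $\ell=v-f-kC-s$. Then, for any $w=v-t\in B$ with $0\le t\le f-1$, I would observe that the left maximal forbidden block at $w$ is exactly $[v-f+1,w]$: it cannot extend right past $w$ by choice, it stays inside $B$ because every vertex of $B$ is forbidden, and it cannot extend left past $v-f+1$ because $B$ is left maximal and hence $v-f$ is not forbidden. Lemma~\ref{lem:forbid_q} then tells us that $w$ is a leader iff the block $[w-kC,v-f+1]=[v-t-kC,v-f+1]$ is a clique of $G$.

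The next step is to use the umbrella property of PIGs: every clique of $G$ with a fixed right endpoint is a right-suffix of the largest clique with that right endpoint. Because $Q$ is the largest clique ending at $v-f+1$, the block $[v-t-kC,v-f+1]$ is a clique iff $v-t-kC\ge\ell$, which simplifies to the scalar inequality $t\le f+s$. Combined with $0\le t\le f-1$, the set of leaders in $B$ forms a contiguous interval ending at $v$ of size $\min(f,f+s+1)$, and a short calculation identifies this interval with the FB $[v-f-s,v]$ stated in the lemma.

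The main obstacle is the boundary case where $s\ge 0$, so that $[v-f-s,v]$ apparently extends to the left of $B$. To reconcile this with the assumption that $B$ is left maximal, I would appeal to the preceding lemma on primarily forbidden vertices in a maximal clique containing~$Q$: it shows that $[v-f-s,v-f]$ is itself a block of primarily forbidden vertices, so in this regime $B$ together with these extra vertices fuses into the single forbidden interval $[v-f-s,v]$ in which every vertex is indeed a leader, matching the statement. The remaining $s<0$ case is immediate since $[v-f-s,v]\subseteq B$ and the inequality $t\le f+s$ is strictly tighter than $t\le f-1$.
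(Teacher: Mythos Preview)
Your approach matches the paper's: both use Lemma~\ref{lem:forbid_q} to test each $w=v-t\in B$ for leadership via the clique condition $[w-kC,v-f+1]$, translate this into the inequality $t\le f+s$ using the maximality of $Q$, and read off the leader interval $[v-f-s,v]$.

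The one genuine slip is your treatment of the case $s\ge 0$. You correctly deduce that if $s\ge 0$ then the vertices $[v-f-s,v-f]$ are primarily forbidden, but you then try to ``fuse'' these vertices with $B$ and argue that the enlarged block still matches the statement. That move contradicts the hypothesis: $B$ is \emph{given} as a left-maximal forbidden block of size~$f$, so $v-f$ is not forbidden by assumption. The correct inference from your own observation is simply that the case $s\ge 0$ is impossible under the stated hypotheses, and hence $s<0$ always holds---which is exactly what the paper does (``Note that $s<0$ because otherwise $v-f$ would be a primarily forbidden vertex on the immediate left of $B$\ldots''). Once you draw that conclusion, your remaining argument for $s<0$ already finishes the proof: $[v-f-s,v]\subseteq B$, the binding constraint is $t\le f+s$, and when $f+s<0$ the interval is empty and there are no leaders.
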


\begin{proof}
Let $u$ be the leftmost vertex of $Q$. Thus $u=v-f+1-|Q|+1 =
v-kC-(f+s)$. Consider any vertex $v' \in B$. If $f+s<0$ then $u>v-kC
\ge v'-kC$. Since $Q$ is the largest possible there cannot be a clique
$[v'-kC,v-f+1]$. Hence by Lemma~\ref{lem:forbid_q}, the vertex $v'$
is not a leader. So we assume $f+s \ge 0$.

Note that $s<0$ because otherwise $v-f$ would be a primarily forbidden
vertex on the immediate left of $B$ which is not possible as $B$
is left maximal. Thus $[v-f-s,v]$ is a subset of $B$. Now for all
$0 \le p \le f+s$, consider the vertex $v_p=v-p$. Since $v_p-kC =
v-kC-p \ge u$, the block $[v_p-kC,v-f+1]$ is a subclique of $Q$. Thus
by Lemma~\ref{lem:forbid_q}, the vertex $v_p$ is a leader. Since $Q$ is
the largest possible, $[v-f-s, v]$ is exactly the set of leaders in $B$.
\end{proof}

Thus in phase~2 we visit each FB $i$ starting with the rightmost FB
created in phase~1. We check if $i$ is a follower of some previously
discovered FB $l$ at a distance at most $(k-1)C$ and mark the next
follower of $l$ on the left of $i$. We also check if $i$ is a leader
itself using Lemma~\ref{lem:forbid_q4}. If a leader FB $l$ is identified
then we insert the rightmost follower of $l$, and so on, similar to
secondarily forbidden vertices in Section~\ref{sec:comb}.

\begin{lemma}
If the arrays $Z, \lmn$ and the maximal cliques of a PIG $G$ are given,
then \splitm correctly marks the FBs of $G'=WXP(G)$ in $\Theta(n^2)$ time.
\label{lem:split_mark}
\end{lemma}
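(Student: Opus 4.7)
The plan is to prove correctness by separately analyzing the two phases, then bound the running time by arguing that the total number of FBs ever created is $O(n)$ and that each \inlay call is charged $O(n)$ work in the worst case.

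\textbf{Correctness of phase~1.} Apply the characterization preceding \splitm: for a maximal clique $Q$ of $G'$ coming from the maximal clique $[u,v]$ of $G$, the set of primarily forbidden vertices in $Q$ is exactly the FB $[Z(v)-s-1,\,Z(v)-1]$, where $s = (Z(v)-Z(u-1))-(kC+2)$. The loop in phase~1 iterates through maximal cliques of $G$ (there are $O(n)$ of them by Proposition~\ref{pro:algomaxq}), tests $s \ge 0$, and issues exactly one \inlay for this FB. Since phase~1 FBs corresponding to overlapping maximal cliques may themselves overlap, the \inlay operation (which preserves the non-overlap and sorted-order invariants of $F$) correctly merges them. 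Hence after phase~1, $F$ records precisely the primarily forbidden vertices of $G'$.

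\textbf{Correctness of phase~2.} Argue by downward induction on the FB index: the invariant is that at the start of the iteration processing FB $i$, the list $F$ correctly represents all forbidden vertices of $G'$ lying strictly to the right of $i\ptr\rit$. The inner while-loop walking $j$ leftward uses the $\rnf$ back-pointers to stitch $i$ into the leftmost FB of the maximal contiguous forbidden stretch ending at $v = i\ptr\rit$; by the inductive hypothesis this stretch is already recorded correctly from $v$ back to $i\ptr\rnf\ptr\rit-i\ptr\rnf\ptr\size+1$. The algorithm then performs two checks that mirror those of \mark: if $i\ptr\ldist\le (k-1)C$ then $i$ is a follower of an earlier leader, and its next follower $C$ positions to the left is inserted; and the size $f+s+1$ of the leader FB ending at $v$ is computed via Lemma~\ref{lem:forbid_q4} from $\lmn(\bar{h}(u))$, so that if $f+s\ge 0$ the rightmost follower FB of size $f+s+1$ is inserted at position $v-C$. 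These two insertions together mark exactly the forbidden vertices of $G'$ whose leaders lie in the stretch ending at $v$, so the invariant carries to iteration $i\ptr\prv$. When the loop terminates at $F.\bg$, all forbidden vertices of $G'$ are marked.

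\textbf{Running time.} Each iteration of phase~1 and phase~2 adds at most one FB (phase~1 runs $O(n)$ times, once per maximal clique; phase~2 runs once per FB present at the start of the loop). Since no iteration inserts more than one new FB, the total number of FBs ever created is $O(n)$, and so is the total number of loop iterations. The only super-constant per-step cost comes from \inlay, which may have to traverse, truncate, or delete a chain of previously inserted FBs that the new FB overlaps; in the worst case this chain has length $\Theta(n)$. Summing over $O(n)$ calls of cost $O(n)$ each yields the $O(n^2)$ upper bound. For the matching lower bound one exhibits a PIG whose weight-expanded graph forces a sequence of \inlay operations each sweeping across a long, previously-populated prefix of $F$; this shows the bound is tight, giving $\Theta(n^2)$.

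The main obstacle I expect is step (b) of the phase~2 correctness argument: one must verify that the single FB of size $f+s+1$ inserted at position $v-C$ captures precisely the rightmost followers of \emph{all} leaders $[v-f-s,\,v]$ simultaneously, without being corrupted by follower FBs inserted in earlier iterations from other leaders. This requires carefully combining Lemma~\ref{lem:forbid_q4} with the invariant that \inlay keeps $F$ non-overlapping and sorted, so that overlaps between old and new follower FBs are resolved to yield the union, which is exactly the correct follower pattern predicted by the vertex-level algorithm \mark.
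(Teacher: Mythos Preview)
Your correctness sketch for the two phases is broadly along the lines of the paper's argument, though your phase~2 invariant (``$F$ is correct strictly to the right of $i\ptr\rit$'') is weaker than what is actually needed and used: to compute the left-maximal forbidden stretch ending at $v=i\ptr\rit$ and to apply Lemma~\ref{lem:forbid_q4}, you must already know all forbidden vertices in the window $[v-C+1,v]$, not merely those to the right of $v$. The paper's invariant is that at the start of iteration $i$ the list $F$ is correct on $[\max\{1,i\ptr\rit-C+1\},n']$, and the inductive step re-establishes it on $[\,i\ptr\prv\ptr\rit-C+1,\,n'\,]$. Without that window you cannot justify that the $\rnf$ walk finds the true left end of the forbidden stretch.

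The more serious gap is in the running-time analysis. Your claim that ``the total number of FBs ever created is $O(n)$'' is false, and the reasoning behind it is flawed on two counts. First, phase~2 does not run ``once per FB present at the start of the loop'': the loop traverses $F$ via $i\gets i\ptr\prv$, so every FB inserted during phase~2 (all of which land to the left of the current $i$) is itself visited and may spawn further FBs. Second, each iteration can issue \emph{two} \inlay calls (one for the follower case, one for the leader case). So the recurrence ``start with $O(n)$ FBs, each iteration adds at most one'' does not bound anything. The correct count is $|F|=O(n^2)$: there are at most $n$ leader FBs (one per maximal clique of $G'$), each leader has at most $k$ followers, and within any single $H(v)$ at most one follower of each leader can land, giving $O(n)\cdot O(n)$ FBs; the paper then exhibits an explicit weighted PIG on $3t$ vertices for which $|F|=t^2+t+1=\Omega(n^2)$. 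The $O(n^2)$ time bound comes not from ``$O(n)$ calls of cost $O(n)$ each'' but from the amortized cost of all \inlay operations being $O(|F|)$ together, since the insertions arrive in right-to-left order and a single left-moving pointer suffices.
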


\begin{proof}
It is easy to see that phase~1 of \splitm correctly inserts into $F$
the FBs due to the primarily forbidden vertices while ensuring that the
FBs in $F$ are non-intersecting and sorted in canonical ordering. Let
us refer each iteration of the loop in phase~2 by the corresponding
value of $i$. We now claim that, at the beginning of iteration~$i$
in phase~2, the list $F$ correctly contains all FBs in the range
$[\max\{1,i\ptr\rit-C+1\},n']$. This in turn proves the
correctness of \splitm. We show by induction on~$i$.

For $i$ pointing to the rightmost FB there cannot be any secondarily
forbidden vertex in the range $[i\ptr\rit-C+1,n]$. Hence the claim
is trivially true. For other values of~$i$, assume at the beginning
of iteration~$i$, the list $F$ correctly contains FBs in the range
$[i\ptr\rit-C+1,n]$ (for simplicity we assume $i\ptr\rit \ge C+1$,
the cases $i\ptr\rit \le C$ can be shown similarly). In iteration
$i$ \splitm correctly inserts a new FB $[i\ptr\rit-i\ptr\size-C+1,
i\ptr\rit-C]$ or $[i\ptr\rit-(f+s+1)-C+1, i\ptr\rit-C]$ depending upon whether (i) FB $i$ is a follower of a
previously discovered FB or (ii) there is a leader FB with rightmost
vertex $i\ptr\rit$, given by Lemma~\ref{lem:forbid_q4}. In case~(i) there
are non forbidden vertex in $[b\ptr\rit+1, i\ptr\rit-i\ptr\size]$. Hence
there can not be any secondarily forbidden vertex in the $[b\ptr\rit-C+1,
i\ptr\rit-i\ptr\size-C]$. Thus at the end of iteration $i$, the list $F$
correctly contains FBs in the range  $[b\ptr\rit-C+1,n']$. Similarly
in case~(ii) there can not be any new secondarily forbidden
vertex $[b\ptr\rit-C+1, i\ptr\rit-(f+s+1)-C]$ and hence at the end
of iteration $i$, the list $F$ correctly contains FBs in the range
$[b\ptr\rit-C+1,n']$. At the end of iteration new value of $i$ is
$b$. Hence our claim is true at the beginning of the next iteration too.

Note that for each FB the algorithm takes $O(1)$ time except the operation
$\inlay$. Note that since the operation $\inlay$ is invoked with FBs in
right to left order, it can be implemented by maintaining an extra pointer
that traverses through the FBs in right to left order, in overall $O(|F|)$
time. If the weight of each vertex is at most $C$,\footnote{If the weights
are unrestricted, then we can still solve the decision version of the
problem in $O(n^2)$ time by considering only the interesting FBs that fall
within the set $H(v)$ for a vertex $v$ in $G$. We omit the details here.}
then the set of vertices $H(v)$ for $v$ in  $G$ can contain at most one
follower of each leader FB. There can be as many leader FBs as the number
of maximal cliques in $G'$, \ie at most $n$. Hence $|F|=O(n^2)$. Thus,
time complexity is $O(n^2)$.

Now we show that there is a class of PIGs for which \splitm takes
$\Omega(n^2)$ time. The class of PIGs is obtained by varying some
parameter $t$. A PIG $G$ in this class has $n=3t$ vertices given by
the intervals $I_1,\ldots,I_{3t}$ where for $1 \le j \le t+1$ the
interval $I_j=[j,2t+2i-1]$ has weight $2$, for $2 \le j \le t$ the
interval $I_{t+j}=[t+1,4t+j]$ has weight $2t$, and again for $1 \le j
\le t$ the interval $I_{2t+j}=[2t+2i,5t+j]$ has weight $2$. Clearly
$G'=WXP(G)$ has $(t+1)*2+(t-1)*2t+t*2=2(t^2+t+1)$ vertices and each
of the $t+1$ maximal cliques $[2j-1,2j+2t^2]$, $1 \le j \le t+1$,
has size $2t^2+2$. For $C=2t$, we have $k(G')=t$. In phase~1 \splitm
creates $t+1$ FBs each having a single vertex $2j+2t^2-1$ for all $1 \le
j \le t+1$. In phase~2 \splitm creates a FB from each of the remaining
odd numbered vertices in $G'$. Thus the total number of FBs created by
\splitm on $G$ is equal to the number of odd vertices in $G'$. Hence
$|F|=t^2+t+1=(n/3)^2+n/3+1=\Omega(n^2)$. Thus \splitm takes $\Omega(n^2)$
time on~$G$.
\end{proof}

\subsection{Algorithm \splitp}

We now give the modifications to \combp to use the FBs. We call this modified algorithm \splitp,
which is shown in Algorithm~\ref{alg:splitp}.

\begin{algorithm}[h!tb]
 \SetKwInOut{input}{Input}
 \SetKwInOut{output}{Output}

 \input{A PIG $G=(V,E,W)$}
 \output{If $G'{=}WXP(G)$ has a $[k{+}1, C]$-partition; if \yes also output the partition}

 $F=$ list of FBs returned by \splitm on $G$; \quad $u=1$; \quad $i=F.\bg\ptr\nxt$\;
 \While{$u\le n'$}{
 $v=\min\{u+C-1, n'\}$\;
   \lWhile{$i\ptr\rit < v$}{$i = i\ptr\nxt$}\;
   $v=i\ptr\rit - i\ptr\size$\;
   \lIf{$v<u$}{\Return{\no}} \lElse{create part $[u,v]$; $u=v+1$}\;
 }
 \Return{\yes}\;
\caption{\splitp}
\label{alg:splitp}
\end{algorithm}

\begin{lemma}
If an interval representation for a weighted PIG $G$ is given then \splitp
correctly solves the partition subproblem on $WXP(G)$ in $O(n^2)$ time.
\label{lem:splitpart}
\end{lemma}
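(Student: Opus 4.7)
The plan is to argue that \splitp is a faithful simulation of \combp run on $G' = WXP(G)$, with the FB list $F$ serving as a compact representation of the forbidden vertices, and then to bound the running time by $O(n^2)$.

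First, by Lemma~\ref{lem:split_mark}, \splitm produces a sorted list $F$ of pairwise non-intersecting FBs whose vertex-wise union is exactly the set of forbidden vertices of $G'$. So ``is $w$ a forbidden vertex of $G'$?'' reduces to ``does $w$ lie in some FB of $F$?''. Second, I would argue by induction on the outer iteration that \splitp computes the same sequence of $u, v$ pairs, and hence the same blocks (or the same NO decision), as \combp running on $G'$. Both algorithms start with $u = 1$; at each step both try $v_0 = \min\{u+C-1, n'\}$ and set $v$ to the largest non-forbidden vertex in $[u, v_0]$, returning NO if none exists and otherwise emitting $[u,v]$ and advancing $u$ to $v+1$. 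Where \combp decrements $v$ one vertex at a time, \splitp advances the pointer $i$ forward through $F$ until $i\ptr\rit \ge v_0$ and then jumps past FB $i$ in one step via $v := i\ptr\rit - i\ptr\size$. The key invariant is that $i$ points to the first FB in $F$ whose rightmost vertex is $\ge$ the current $v_0$; since $u$, and therefore $v_0$, is monotonically non-decreasing across outer iterations, this invariant is maintained by simply never rewinding $i$. The case analysis that needs checking is whether $v_0$ falls inside FB $i$ (so the adjustment correctly jumps to the last non-forbidden vertex before $i$) or strictly between two FBs (so $v_0$ is already non-forbidden and the adjustment leaves $v$ valid).

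Combining these two steps, \splitp emits the same partition (or the same NO) as \combp on $G'$. By Lemma~\ref{lem:combp} applied to $G'$, that partition correctly solves the partition subproblem on $WXP(G)$, which is exactly the claim.

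For the running time, \splitm takes $O(n^2)$ time by Lemma~\ref{lem:split_mark} and returns a list of size $|F| = O(n^2)$. In the main loop of \splitp, the pointer $i$ only moves forward through $F$, so the total work done by the inner while loops across all outer iterations is $O(|F|) = O(n^2)$. Each outer iteration contributes $O(1)$ additional work, and the total number of outer iterations is bounded by the number of emitted blocks, which is at most $n' = O(n^2)$ under the standing assumption $W(v) \le C$ with $C = O(n)$ (the same assumption used in Lemma~\ref{lem:split_mark}). Hence the total running time is $O(n^2)$. The main obstacle is the formalization of the simulation step: carefully stating the invariant on $i$ and verifying the two FB/non-FB subcases for the update of $v$; the rest is bookkeeping on top of the already-established correctness of \combp and \splitm.
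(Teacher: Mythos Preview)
Your correctness argument is the same as the paper's: both reduce to showing that \splitp simulates \combp on $G' = WXP(G)$ (the paper compresses this to ``it is easy to see'', while you spell out the invariant on $i$ and the case split on whether $v_0$ lies in an FB), and then invoke Lemma~\ref{lem:combp} for $G'$ together with Lemma~\ref{lem:split_mark}.

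There is a gap in your running-time argument. You bound the number of outer iterations by the number of emitted blocks and then by $n'$, claiming $n' = O(n^2)$ under ``the standing assumption $W(v) \le C$ with $C = O(n)$''. But $C = O(n)$ is \emph{not} assumed anywhere; Lemma~\ref{lem:split_mark} only assumes $W(v) \le C$, which yields $n' \le nC$ with $C$ unbounded in terms of $n$. The paper avoids this by bounding the block-generation loop directly by $O(|F|)$. The reason this works: after an outer iteration emits a block ending at $v = i\ptr\rit - i\ptr\size$, the next iteration has $u' = v+1$ and $v_0' = v + C$; then $i\ptr\rit < v_0'$ iff $i\ptr\size < C$, so either $i$ advances or the next iteration finds $v' = v < u'$ and returns \no. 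Hence each emitted block is charged to a distinct advance of $i$, giving at most $O(|F|) = O(n^2)$ outer iterations without any assumption on $C$. Replacing your $n'$ bound with this observation closes the gap.
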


\begin{proof}
Given that \splitm correctly marks the forbidden blocks of $G'=WXP(G)$,
it is easy to see that \splitp generates the same partition that \combp
would have generated on $G'$. Given an interval representation of $G$,
$Z,\lmn$ and maximal cliques of $G$ can be computed in $O(n)$ time. Thus
by Lemma~\ref{lem:split_mark}, computing $F$ takes $O(n^2)$ time. The
block generation step also takes $O(|F|)=O(n^2)$ time.
\end{proof}

Combining
Lemmas~\ref{lem:color_using_part},~\ref{lem:wtexpand},~\ref{lem:splitpart}
and the discussions at the end of Subsection~\ref{ssec:ub}, we get a
proof of Theorem~\ref{thm:splittable}. Note that \simp can be slightly
modified to use vertices in $WXP(G)$ but still taking $O(n)$ time.

\section{A $2$-approximation Algorithm for Weighted Problem on PIGs}
\label{sec:weighted}

\begin{lemma}
There exists a polynomial time algorithm for the non-splittable weighted
partition problem that generates a $[\lambda,C]$-partition on a PIG $G$
such that $\lambda$ is at most $2$ times the clique intersection of the
partition generated by an optimal algorithm on $G$.
\label{lem:weighted}
\end{lemma}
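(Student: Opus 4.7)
The plan is to leverage the polynomial-time algorithm for the splittable weighted problem (Theorem~\ref{thm:splittable}) and round its output into a non-splittable partition while losing at most a factor of two in the clique intersection. The starting observation is that every non-splittable $[\lambda,C]$-partition is trivially also a $[\lambda,C]$-split partition, so the optimal splittable clique intersection $\lambda^{\ast}$ satisfies $\lambda^{\ast} \le \lambda_{opt}$, where $\lambda_{opt}$ denotes the optimum for the non-splittable problem. Hence it suffices to produce, in polynomial time, a non-splittable $[\lambda,C]$-partition with $\lambda \le 2\lambda^{\ast}$.

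First run the algorithm of Section~\ref{sec:splittable} on $G$ to obtain a splittable partition $\{B_1,\ldots,B_t\}$ of $V' = V(WXP(G))$ with clique intersection $\lambda^{\ast}$; by Lemma~\ref{lem:partition_eqv_block} the $B_i$ are consecutive blocks in the canonical ordering of $WXP(G)$. We may assume $W(v)\le C$ for every $v\in V$, since otherwise no non-splittable partition exists at all. For each $v\in V$, let $\ell(v)$ be the index of the block containing the leftmost copy $h(v,1)$ of $v$, and put $\tilde B_i=\{v\in V:\ell(v)=i\}$. Because the copies of each vertex appear consecutively in $V'$, the set $\tilde B_i$ is itself a consecutive block in the canonical ordering of $V$, and at most one vertex of $\tilde B_i$---its rightmost vertex $v^{\ast}_i$---can have copies stretching beyond $B_i$; every other vertex of $\tilde B_i$ lies entirely within $B_i$. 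Consequently $W(\tilde B_i)\le W(B_i)+W(v^{\ast}_i)\le 2C$, so we split $\tilde B_i$ into the two parts $B_i^{(1)}=\tilde B_i\setminus\{v^{\ast}_i\}$ and $B_i^{(2)}=\{v^{\ast}_i\}$, each of weight at most $C$. Both parts are consecutive in the canonical ordering of $V$ and are therefore connected by Corollary~\ref{cor:edge_seq} applied to $G$, so their union over all $i$ is a valid non-splittable partition $\Pi_{ns}$.

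It then remains to bound the clique intersection of $\Pi_{ns}$. Given a clique $Q$ of $G$, let $Q'\subseteq V'$ consist of all copies of the vertices of $Q$; then $Q'$ is a clique of $WXP(G)$ and meets at most $\lambda^{\ast}$ of the $B_i$. If $v\in Q\cap\tilde B_i$ then $h(v,1)\in B_i\cap Q'$, so every index $i$ at which $Q$ meets some part descended from $B_i$ is also an index at which $Q'$ meets $B_i$. Since each $B_i$ spawns at most two parts, $Q$ meets at most $2\lambda^{\ast}\le 2\lambda_{opt}$ parts of $\Pi_{ns}$, which is the promised $2$-approximation. The main subtlety I expect will be the clean verification that $\tilde B_i$ is a single consecutive block of $V$ whose only potentially straddling vertex is its rightmost one; this structural fact rests on the canonical order of $V$ being induced by that of $V'$ together with the block structure of the splittable partition (Lemma~\ref{lem:partition_eqv_block}), after which the remainder of the argument is routine bookkeeping.
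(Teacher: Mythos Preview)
Your proposal is correct and follows essentially the same strategy as the paper: compute an optimal splittable block partition $\{B_i\}$ of $WXP(G)$, observe that under the assumption $W(v)\le C$ each vertex of $G$ can straddle at most two consecutive $B_i$, and round to a non-splittable partition by replacing every $B_i$ with at most two blocks of $V$ (the paper does this procedurally by scanning left to right and inserting a singleton block for each straddling vertex, whereas you do it structurally via the assignment $v\mapsto\ell(v)$ and the split $\tilde B_i = B_i^{(1)}\cup B_i^{(2)}$, but the effect and the accounting are identical). The resulting bound $\lambda\le 2\lambda^{\ast}\le 2\lambda_{opt}$ is precisely the paper's conclusion.
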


\begin{proof}
We first solve the corresponding splittable weighted problem
on $G$ in $O(n^2)$ time using the algorithm described
in Section~\ref{sec:splittable}. Let the blocks in the
$[\lambda',C]$-partition created by the algorithm be $\mathcal{P}'=\{P_1,
P_2, \ldots, P_t\}$. Note that $\lambda'$ is a lower bound on the
clique intersection $\lambda^{*}$ of the partition generated by any
optimal algorithm on $G$.

Since the weight of a vertex in a non-splittable problem is at most $C$,
a vertex of $G$ is split into at most two consecutive blocks $P_i$
and $P_{i+1}$. We convert the splittable partition $\mathcal{P}'$
into a non-splittable partition $\mathcal{P}$ in $O(n)$ time as
follows. Consider each vertex $v$ left to right. If $v$ is split into
blocks $P_i, P_{i+1}$ and $v$ cannot be put completely in $P_{i}$
then create a copy $P'_i$ of $P_i$, insert $P'_i$ in between $P_i$ and
$P_{i+1}$, put $v$ completely in $P'_i$ and repeat with the rest of the
vertices. Note that $\mathcal{P}$ contains at most 2 copies of each
block $P_i$ and hence $\mathcal{P}$ is a $[\lambda,C]$-partition with
clique intersection $\lambda \le 2\lambda' \le 2\lambda^{*}$. Overall
it takes $O(n^2)$ time.
\end{proof}

Combining Lemma~\ref{lem:color_using_part} and Lemma~\ref{lem:weighted}
we get a proof of Theorem~\ref{thm:weighted}.

\section{Partition Problem on Split Graphs}
\label{sec:split}

Since split graphs are also chordal, solving the partition (not
block-partition) problem is enough. It can be noted that the same lower
bound of Lemma~\ref{lem:lb} applies here too.

\subsection{Upper Bound}

\begin{lemma}
 Let $\omega$ be the clique number of a split graph $G$. There exists
 a polynomial time algorithm that gives a $[\ceil{\omega/C} + 1,
 C]$-partition for $G$.  \label{lem:split_ub}
\end{lemma}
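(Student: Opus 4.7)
The plan is a one-line construction. Using the split partition $V = S + Q$ with $S$ independent and $Q$ a clique, I would partition $Q$ arbitrarily into $r = \ceil{|Q|/C}$ groups $Q_1, \ldots, Q_r$ each of size at most $C$, and make every $s \in S$ a singleton part. Each $Q_i$ is a subclique and hence induces a connected subgraph, singletons are trivially connected, and every part has size at most $C$, so the connectedness and size constraints hold by construction. The split partition itself can be found in linear time, so the whole construction runs in polynomial (in fact linear) time.

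The clique-intersection analysis rests on a single structural fact about split graphs: every clique $K$ of $G$ satisfies $|K \cap S| \le 1$, since $S$ is independent. Writing $K = K_Q \cup K_S$ with $K_Q \subseteq Q$ and $|K_S| \le 1$, we see that $K$ meets at most one singleton part (the one coming from $K_S$, if any) together with some of the $Q_i$'s, hence at most $r+1$ parts in total.

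It then remains to check $r+1 \le \ceil{\omega/C}+1$, which splits into two cases. If no vertex of $S$ is adjacent to all of $Q$, then the unique maximum clique is $Q$ itself, so $\omega = |Q|$ and $r = \ceil{\omega/C}$. Otherwise some $s \in S$ dominates $Q$, in which case $\{s\}\cup Q$ is a clique, $\omega = |Q|+1$, and $r = \ceil{(\omega-1)/C} \le \ceil{\omega/C}$. Either way $r+1 \le \ceil{\omega/C}+1$, yielding the claimed $[\ceil{\omega/C}+1, C]$-partition.

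The only real subtlety is the $\omega=|Q|$ vs.\ $\omega=|Q|+1$ case distinction; beyond that, the proof is essentially forced by the split structure. Degenerate situations such as $|Q| = 0$ reduce to a single-vertex graph by connectedness of $G$ and are trivial, so no main obstacle is expected.
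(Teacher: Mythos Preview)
Your proposal is correct and follows essentially the same approach as the paper: partition $Q$ into $\ceil{|Q|/C}$ subcliques of size at most $C$, place each vertex of $S$ in its own singleton part, and use the fact that any clique contains at most one vertex of $S$. The only cosmetic difference is that the paper first normalizes the split so that $Q$ is a maximum clique (by moving a dominating vertex of $S$ into $Q$ when one exists), thereby getting $|Q|=\omega$ directly, whereas you keep the given split and handle the possible discrepancy $|Q|\in\{\omega,\omega-1\}$ via your closing case distinction; both routes arrive at the same bound.
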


\begin{proof}
 Let the vertex set of $G$ be split into clique $Q$ and independent
 set $S$.  Without loss of generality, we assume that $Q$ is a maximum
 clique. Because otherwise $|Q|=\omega-1$ and we can move a vertex in
 $S$ that is adjacent to all vertices in $Q$ to $Q$. Now consider the
 following partition of vertices: $\Pi=\{P_1, P_2, \ldots, P_t\} \cup
 \{\{v\} | v \in S\}$ where $t = \ceil{\omega/C}$, and $\{P_i\}_{i=1}^t$
 is an arbitrary partition of $Q$ such that $|P_i| = C$ for all $i=1,
 \ldots, (t-1)$. Note that this partition can be created in polynomial
 time. Each part is connected and has at most $C$ vertices. Moreover,
 any maximal clique in $G$ intersects at most $t+1$ parts. Thus, $\Pi$
 is a $[\ceil{\omega/C}+1,C]$- partition.
\end{proof}

\subsection{NP-hardness}

Since the upper bound and the lower bound differ by $1$, it is enough
to decide if $G$ has a $[\ceil{\omega/C},C]$-partition or not. If
the answer is \yes then we have an optimal solution to the partition
problem with clique intersection $\lambda=\ceil{\omega/C}$. Otherwise
the partition given in the proof of Lemma~\ref{lem:split_ub}
gives an optimal solution with clique intersection $\lambda =
\ceil{\omega/C}+1$. Thus Lemma~\ref{lem:split_npc} directly gives a
proof of Theorem~\ref{thm:split}.

\begin{lemma}
 The problem of deciding if a split graph $G$ has a
 $[\ceil{\omega(G)/C},C]$-partition for $C\ge 2$ is NP-complete.
 \label{lem:split_npc}
\end{lemma}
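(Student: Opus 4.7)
I would prove NP-completeness by first observing that the problem lies in NP---given a candidate partition, one can verify the size, connectedness, and clique-intersection constraints in polynomial time by enumerating the polynomially many maximal cliques of the chordal graph---and then exhibiting a polynomial-time reduction from the \textsc{Partition} problem: given positive integers $a_1, \ldots, a_n$ (with $n \geq 2$) satisfying $\sum_i a_i = 2T$ and $T \geq 2$, decide whether some subset of the indices has $a$-sum equal to $T$.

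Given such a \textsc{Partition} instance, I would construct a split graph $G = (Q \cup S, E)$ where $Q$ is a clique of size $2T$ internally partitioned into disjoint groups $U_1, \ldots, U_n$ with $|U_i| = a_i$, and $S = \{s_1, \ldots, s_n\}$ is an independent set in which $s_i$ is adjacent to exactly the vertices of $U_i$. Setting $C := T$ gives $\omega(G) = |Q| = 2T$ and $\lceil \omega(G)/C \rceil = 2$; the maximal cliques of $G$ are $Q$ itself together with the sets $\{s_i\} \cup U_i$ for $i = 1, \ldots, n$. The claim is that $G$ admits a $[2, T]$-partition if and only if the \textsc{Partition} instance is a yes-instance.

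The forward direction is routine: given $I \subseteq [n]$ with $\sum_{i \in I} a_i = T$, take $P_1 = \bigcup_{i \in I} U_i$, $P_2 = Q \setminus P_1$, and the singletons $\{s_i\}$ for every $i$; each part is connected and has size at most $T$, and each maximal clique meets exactly two parts. The backward direction requires a rigidity argument: since $Q$ has $2T$ vertices, every part has size at most $T$, and $Q$ (being a clique) meets at most two parts, $Q$ must be split into exactly two parts $P_1, P_2$ each containing exactly $T$ vertices of $Q$. These parts are already saturated in size and thus admit no vertex of $S$, so every $s_i$ lies in its own part; because $S$ is independent and all $Q$-vertices are already allocated, connectedness forces that part to be the singleton $\{s_i\}$. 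Applying the clique-intersection bound to the maximal clique $\{s_i\} \cup U_i$---which already meets the part $\{s_i\}$---now forces $U_i$ to lie entirely within $P_1$ or entirely within $P_2$, and taking $I := \{i : U_i \subseteq P_1\}$ yields a subset of $a$-sum $T$.

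The main obstacle I expect is the rigidity step in the backward direction: one must rule out alternative configurations such as parts of $Q$ smaller than $T$ padded by vertices of $S$, or an $s_i$ sharing its part with a proper subset of its neighbors in $U_i$. Both possibilities are closed off by the saturation argument above, but the chain of implications must be executed in the right order. I would also note that starting the same construction from 3-\textsc{Partition} instead of \textsc{Partition} gives strong NP-hardness, so the conclusion persists even when $C$ is constrained to be polynomially bounded in the input size.
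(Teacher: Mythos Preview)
Your reduction from \textsc{Partition} is not polynomial-time: the integers $a_1,\ldots,a_n$ are encoded in binary, so $T=\tfrac{1}{2}\sum_i a_i$ can be exponential in the input length, yet the split graph you build has $2T+n$ vertices. Reading the $a_i$ in unary does not rescue the argument, since unary \textsc{Partition} is solvable in polynomial time by the standard dynamic program. Hence the reduction as stated does not establish NP-hardness at all, not even ``weakly''.

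The remark that ``the same construction from 3-\textsc{Partition}'' repairs this is also incorrect. With $3m$ items summing to $mB$ and $C=B$ one gets $\lceil\omega/C\rceil=m$, and your saturation argument still forces $Q$ into exactly $m$ parts of size $B$ and each $s_i$ into a singleton. But the clique-intersection bound applied to $\{s_i\}\cup U_i$ now says only that this clique meets at most $m$ parts; after subtracting the singleton $\{s_i\}$, the block $U_i$ is free to straddle up to $m-1$ of the parts $P_1,\ldots,P_m$. For $m>2$ this no longer pins $U_i$ inside a single $P_j$, so a $[m,B]$-partition of $G$ need not encode a solution of the 3-\textsc{Partition} instance. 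The rigidity step in your backward direction genuinely requires $\lceil\omega/C\rceil=2$, which ties you to the only-weakly-hard two-way number partition. The paper sidesteps this trap by proving hardness already at the fixed value $C=2$, reducing from \textsc{SAT} through an auxiliary set-pairing problem; there the target bound $\lceil\omega/C\rceil$ equals the number $n$ of pairs rather than~$2$, and the gadget cliques enforce that each is \emph{missed} by at least one pair.
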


\begin{proof}
 We show that the decision problem is NP-complete even for $C=2$. We
 call the problem in this special case as CP. First we show that
 CP is in NP. A maximal clique in $G$ is either $Q$ or the closed
 neighborhood of a vertex in $S$. So the maximal cliques in $G$ can
 be found in polynomial time. Suppose a partition of the vertices is
 given. Size constraints can be easily checked. Each part contains a
 single vertex or a pair of vertices. A single vertex is trivially
 connected. Connectedness of a part of size 2 can be checked by
 just checking if there is an edge between the two vertices. Clique
 intersection constraint can also be checked in polynomial time.

 We now introduce a set partitioning problem (SP) is defined as
 follows. Given a set of $2n$ elements $e_1, e_2, \ldots, e_{2n}$ and
 a collection of $m$ subsets $S_1, S_2, \ldots, S_m$, can the elements
 be partitioned into $n$ groups of size $2$ such that each subset has
 both elements of at least one group?

 We complete the proof by first showing a polynomial time reduction
 from SP to CP (Lemma~\ref{lem:sp2cp}) and then a polynomial
 time reduction from the well known NP-complete problem SAT to SP
 (Lemma~\ref{lem:sat2sp}).
\end{proof}

\begin{lemma}
 SP $\le_P$ CP.
 \label{lem:sp2cp}
\end{lemma}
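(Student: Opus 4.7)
The plan is to construct, from any SP instance with $2n$ elements and subsets $S_1,\ldots,S_m$, a split graph $G$ such that the SP instance is solvable iff $G$ admits a $[\lceil \omega(G)/C\rceil, C]$-partition for $C=2$. I will assume that every $S_j$ has exactly $n$ elements; the burden of producing only size-$n$ subsets will be discharged inside the SAT-to-SP reduction of Lemma~\ref{lem:sat2sp}.

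The construction is the most natural one: let $Q=\{e_1,\ldots,e_{2n}\}$ be a clique and let $\{s_1,\ldots,s_m\}$ be an independent set, with $s_j$ adjacent to $e_i$ if and only if $e_i\in S_j$. Because $|S_j|=n<2n$, no $s_j$ is universal to $Q$, so the maximal cliques of $G$ are exactly $Q$ itself and the sets $N[s_j]=\{s_j\}\cup S_j$, giving $\omega(G)=2n$ and $\lceil \omega(G)/C\rceil = n$.

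For the forward direction I would take an SP pairing $M=\{\{a_i,b_i\}\}_{i=1}^{n}$ for which each $S_j$ contains some pair of $M$, and form the partition whose parts are the $n$ pairs of $M$ together with the $m$ singletons $\{s_j\}$. Size and connectedness are immediate; the clique $Q$ meets exactly the $n$ pairs of $M$; and for each $j$, if $p_j\ge 1$ denotes the number of $M$-pairs contained in $S_j$, then $N[s_j]$ meets $1+(n-p_j)\le n$ parts. For the reverse direction, suppose $G$ has an $[n,2]$-partition $\Pi$. Counting $Q$-vertices by how they sit in parts, let $\alpha,\beta,\gamma$ be the numbers of parts that are, respectively, two $Q$-vertices, one $Q$-vertex alone, and one $Q$-vertex together with some $s_j$. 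Then $2\alpha+\beta+\gamma=2n$ while $\alpha+\beta+\gamma\le n$ (the clique-intersection bound applied to $Q$), forcing $\alpha=n$ and $\beta=\gamma=0$; since the $s_j$'s are pairwise nonadjacent, each of them forms its own singleton. The $n$ pure-$Q$ pairs define a matching $M$ on $Q$, and the clique-intersection bound applied to $N[s_j]$ gives $1+(n-p_j)\le n$, \ie $p_j\ge 1$, so every $S_j$ contains an $M$-pair and $M$ solves SP.

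The construction is plainly polynomial in $n+m$, so this will establish SP $\le_P$ CP. The main conceptual obstacle is the ``Goldilocks'' requirement $|S_j|=n$: if $|S_j|<n$ the clique-intersection bound at $N[s_j]$ becomes vacuous and the reduction would accept CP instances not corresponding to SP solutions, while $|S_j|>n$ would force two or more $M$-pairs inside $S_j$. I therefore plan to arrange exactly this size in the SP instances produced by the SAT-to-SP reduction.
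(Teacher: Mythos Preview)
Your construction and the paper's are nearly identical---clique $Q$ for the elements, independent vertices for the subsets---except for one bit: you make $s_j$ adjacent to $e_i$ when $e_i\in S_j$, whereas the paper makes $w_j$ adjacent to $v_i$ when $e_i\notin S_j$. That single complementation is exactly what removes the need for your size hypothesis. With the paper's convention the maximal clique through $w_j$ is $\{w_j\}\cup\{v_i:e_i\notin S_j\}$; it meets more than $n$ parts iff \emph{every} pair of the matching has at least one element outside $S_j$, and this equivalence holds for arbitrary $|S_j|$. Both directions of the reduction then go through with no restriction on subset sizes.

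As written, your argument proves only SP$'\le_P$CP, where SP$'$ is SP restricted to instances with $|S_j|=n$ for all $j$. You propose to absorb this into Lemma~\ref{lem:sat2sp}, but the paper's SAT-to-SP reduction does \emph{not} produce such instances: the type-(1) subsets there have size~$3$ and the type-(2) subsets have size twice the clause length, so in general $|S_j|\ne n=2p$. You would therefore owe a new SAT-to-restricted-SP reduction, which you have not supplied and which is not an obvious padding. So the proposal does not establish the lemma as stated; the clean fix is to flip the adjacency to the complement, after which your counting argument (the $\alpha,\beta,\gamma$ bookkeeping for the backward direction is fine) goes through verbatim without any size assumption.
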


\begin{proof}
 Given an instance of SP, we construct an instance of CP as follows. The
 complete set $Q$ has a vertex $v_i$ corresponding to each element $e_i$
 and the independent set $S$ has a vertex $w_j$ corresponding to each
 subset $S_j$. There is an edge between $v_i$ and $w_j$ if and only if
 $e_i \notin S_j$. Clearly $\omega = 2n$ and hence $\ceil{\omega/C} = n$.

 Suppose there is a \yes solution to the SP instance where the groups
 are $G_1, G_2, \ldots, G_n$. Then create a partition $\Pi = \{P_1, P_2,
 \ldots, P_n\} \cup \{w_j \}_{j=1}^{m}$ for the CP instance where $v_k
 \in P_i$ if and only if $e_k \in G_i$.  Clearly each part is connected
 and has at most 2 vertices. Clique intersection constraint is satisfied
 for $Q$. Since $S_j$ contains both elements of at least one $G_i$,
 the maximal clique $Q'$ containing $w_j$ does not intersect at least
 one part $P_i$. Including the part $\{w_j\}$, $Q'$ intersects at most
 $(n-1)+1=n$ parts. Hence $\Pi$ is a $[n,2]$-partition.

 On the other hand, suppose there is a \yes solution to the CP instance.
 Since the clique intersection constraint is satisfied for $Q$, the
 vertices of $Q$ are divided into parts of size exactly $2$. These
 parts give the required groups of SP because, for a maximal clique $Q'$
 containing $w_j$ has clique intersection at most $n$, and hence it must
 not intersect with at least one part $P_i$ which implies that $S_j$
 contains both elements of $G_i$.
\end{proof}

\begin{lemma}
 SAT $\le_P$ SP.
 \label{lem:sat2sp}
\end{lemma}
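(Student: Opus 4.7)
The plan is to exhibit a polynomial-time reduction from a canonical NP-complete problem—most naturally 3-SAT—to SP. Given a 3-CNF formula $\phi$ with variables $x_1,\ldots,x_p$ and clauses $C_1,\ldots,C_q$, I will build an SP instance whose universe is the disjoint union of \emph{variable-gadget} elements and \emph{clause-gadget} elements, together with subsets encoding variable choice and clause satisfaction.

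For the variable gadget, I would introduce for each $x_i$ a constant-sized collection of elements and a few subsets designed so that, in any valid pair partition, the gadget's elements admit exactly two legal configurations, one representing $x_i=$ true and the other $x_i=$ false. The two configurations differ in which literal-occurrence elements remain paired internally versus exposed for external pairing with clause elements. For each clause $C_j = (\ell_{j,1}\vee\ell_{j,2}\vee\ell_{j,3})$, I would introduce one element $c_{j,k}$ per literal and a subset $S_j$ containing these elements (plus, if needed, auxiliary elements whose pairing is forced by size-2 subsets to balance parity). The arrangement is tuned so that $S_j$ contains a matched pair iff at least one $c_{j,k}$ can be paired with a ``true'' literal-occurrence element exposed by its variable gadget; equivalently, iff at least one literal of $C_j$ evaluates to true under the assignment encoded by the gadgets.

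For correctness I would argue both directions. Given a satisfying assignment~$\tau$, I pair each variable gadget in its ``true'' or ``false'' configuration according to~$\tau$, and for each clause $C_j$ I pair the clause elements using a true literal guaranteed by~$\tau$ satisfying $C_j$; this meets the connectedness-free SP constraints by construction. Conversely, given any valid pair partition, the local pairings inside each variable gadget must lie in one of the two legal configurations, which defines an assignment; the fact that every $S_j$ is covered then forces at least one literal of $C_j$ to be true under that assignment. The construction uses $O(p+q)$ elements and subsets and is clearly computable in polynomial time.

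The main obstacle is engineering the variable gadget so that binary choice is genuinely forced. Because SP subsets only express ``some pair lies inside,'' they cannot directly forbid a specific pairing except via trivial size-2 subsets; any cheating configuration within a variable gadget must therefore be ruled out either by carefully chosen local subsets that restrict allowed pairings or by interaction with the clause subsets, so that a cheating pairing leaves some $S_j$ uncoverable. Making sure the variable and clause gadgets compose correctly—so that no spurious partitions slip through that violate either the variable-encoding or the clause-satisfaction correspondence—is the delicate core of the proof, and is where most of the combinatorial work will be spent.
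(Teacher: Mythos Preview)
Your proposal is a plan rather than a proof: you identify the right architecture (variable gadgets encoding a binary choice, clause subsets checking satisfaction) but leave the actual gadget construction open, and you yourself flag that forcing the binary choice is the hard part. As it stands this is a genuine gap---nothing you wrote pins down the elements or subsets, and the extra clause-side elements $c_{j,k}$ you introduce create new problems (what do the two ``losing'' $c_{j,k}$ in a clause pair with? how do you stop them from illegitimately covering $S_j$ on their own?) that you do not address.

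The paper closes this gap with a construction that is considerably simpler than the direction you were heading, and in particular uses \emph{no clause elements at all}. For each variable $x_i$ it introduces four elements $x_i,\,x'_i,\,T_i,\,F_i$ and takes as type-(1) subsets all four $3$-element subsets of $\{x_i,x'_i,T_i,F_i\}$. These force the four elements to pair among themselves: if, say, $x_i$ were paired externally, then $\{x_i,x'_i,T_i\}$ would require the pair $\{x'_i,T_i\}$ while $\{x_i,x'_i,F_i\}$ would require $\{x'_i,F_i\}$, a contradiction. The three possible internal pairings are $\{x_i,T_i\},\{x'_i,F_i\}$ (read as $x_i$ true), $\{x_i,F_i\},\{x'_i,T_i\}$ (read as $x_i$ false), and the degenerate $\{x_i,x'_i\},\{T_i,F_i\}$, which helps no clause and may be read either way. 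Each clause $C_j$ then contributes only the subset $\bigcup_{l_i\in C_j}\{l_i,T_i\}$; this contains a matched pair iff some literal $l_i$ of $C_j$ is grouped with $T_i$, i.e., is set true.

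The trick that dissolves your stated obstacle is this: since SP subsets can only assert ``some pair lies inside'', one cannot forbid a single pairing directly, but taking \emph{all} $3$-subsets of a $4$-set means that any external pairing of one element leaves two of those subsets making incompatible demands on the remaining three. That is the missing idea in your sketch, and it also eliminates the need for any clause-side elements or parity padding.
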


\begin{proof}
 Suppose an instance of SAT has $p$ Boolean variables $x_1, x_2, \ldots,
 x_p$ and $q$ clauses $C_1, C_2, \ldots, C_q$. Without loss of generality,
 we assume that there is at most one literal for each variable in
 each clause.  Now we construct an instance of SP as follows. There
 are $4p$ elements $x_1, x'_1, T_1, F_1,x_2, x'_2, T_2, F_2, \ldots,
 x_p, x'_p, T_p, F_p$ and the subsets are of two types as follows: (1)
 the subsets $\{x_i, x'_i, T_i\}, \{x_i, x'_i, F_i\}, \{x_i, T_i, F_i\},
 \{x'_i,T_i,F_i\}$ for all $1 \le i \le p$, and (2) the subset $\cup_{l_i
 \in C_j} \{l_i, T_i\}$ for all clause $C_j$, where $l_i$ is either $x_i$
 or $x'_i$ (\eg for $C_j = (x_1 + x_2' + x_3)$ the subset $\{x_1, T_1,
 x'_2, T_2, x_3, T_3\}$).

 Suppose there is a satisfying assignment for the SAT instance. Then
 construct a grouping for the SP instance as follows. For all $i$, if
 $x_i$ is true then construct two groups $\{x_i, T_i\}$ and $\{x'_i,
 F_i\}$; otherwise (\ie $x_i$ is false) construct two groups $\{x_i,
 F_i\}$ and $\{x'_i, T_i\}$. Clearly each subset of type (1) has two
 elements belonging to the same group. Since each clause is satisfied
 there must be a variable $x_i$ such that one of $x_i$ and $x'_i$
 is true. Hence the corresponding subset of type (2) must have two
 elements belonging to the same group.

 On the other hand, suppose there is a \yes solution for the SP
 instance. The subsets of type (1) force the elements $x_i, x'_i,
 T_i, F_i$ to form 2 groups amongst themselves. The subsets of type
 (2) ensure that one of the literals $l_i$ in the clause $C_j$ must
 group with $T_i$ and hence $C_j$ must be true. This implies that the
 SAT instance has an satisfying assignment implied by the groups. For
 some $x_i$ the grouping may contain $\{x_i, x'_i\}, \{T_i, F_i\}$, in
 which case the value for the variable $x_i$ can be chosen arbitrarily.
\end{proof}

\section{Conclusions and Future Work}
\label{sec:conclusion}

We gave polynomial time algorithms for unweighted and splittable weighted
versions of the component coloring problem for proper interval graphs
and showed that it is NP-hard for split graphs.  However the complexity
of both the versions are not known for general interval graphs. We would
like to get polynomial time algorithms for general interval graphs
using similar ideas. This may lead to a constant factor approximation
algorithm for the weighted version of the problem for general interval
graphs which is known to be NP-hard, using ideas from Bin-packing.

\bibliographystyle{elsarticle-num}
\biboptions{sort}
\bibliography{lighttrail,ip,graph}

\end{document}